\newtheorem{theorem}{Theorem}
\newtheorem{corollary}[theorem]{Corollary}
\newtheorem{lemma}[theorem]{Lemma}
\newtheorem{definition}[theorem]{Definition}
\newcommand{\myparagraph}[1]{\paragraph{#1.}}
\newcommand{\bra}[1]{\langle #1|}
\newcommand{\ket}[1]{|#1\rangle}
\newcommand{\braket}[2]{\langle #1|#2\rangle}
\newcommand{\ketbra}[2]{\ket{#1}{\bra{#2}}}
\newcommand{\beq}{\begin{equation}}
\newcommand{\eeq}{\end{equation}}
\newcommand{\trace}{{\rm Tr}}
\newcommand{\norm}[1]{\left\|\,#1\,\right\|}       % norm
\newcommand{\snorm}[1]{\norm{#1}_{\mathrm {\infty}}}    % spectral norm
\newcommand{\set}[1]{{\left\{#1\right\}}}    % braces for set notation
\newcommand{\ve}[1]{\mathbf{#1}}
\newcommand{\abs}[1]{\left\lvert #1 \right\rvert}
\newcommand{\poly}{\operatorname{poly}}
\newcommand{\BPP}{{\rm BPP}}
\newcommand{\BQP}{{\rm BQP}}
\newcommand{\B}{\mathcal{B}}
\newcommand{\complex}{{\mathbb C}}
\newcommand{\reals}{{\mathbb R}}
\newcommand{\nats}{{\mathbb N}}
\newcommand{\comment}[1]{}
\newcommand{\spa}[1]{\mathcal{#1}}
\mathchardef\mhyphen="2D
\newcommand{\ayes}{A_{\rm yes}} %CHECK
\newcommand{\ano}{A_{\rm no}} %CHECK
\newcommand{\nl} {\mathcal{L}_1}
\newcommand{\nll} {\mathcal{L}_2}
\newcommand{\SSC}{{\rm SUCCINCT~SET~COVER}}
\newcommand{\QSSC}{{\rm QUANTUM~SUCCINCT~SET~COVER}}
\newcommand{\IRR}{{\rm IRREDUNDANT}}
\newcommand{\QIRR}{{\rm QUANTUM~IRREDUNDANT}}
\newcommand{\QMMWW}{{\rm QUANTUM~MONOTONE~MINIMUM~WEIGHT~WORD}}
\newcommand{\cqsn}[1]{{\rm cq}\mhyphen\Sigma_{#1}}
\newcommand{\qcsn}[1]{{\rm qc}\mhyphen\Sigma_{#1}}
\newcommand{\cqs}{\cqsn{2}}
\newcommand{\cqslh}[1]{{\rm cq}\mhyphen\Sigma_2{\rm LH}}
\newcommand{\cqslhmin}{{\rm cq}\mhyphen\Sigma_2{\rm LH}\mhyphen{\rm HW}}
\newcommand{\ssat}[1]{{\Sigma_{#1} {\rm SAT}}}
\newcommand{\qcs}{\qcsn{2}}
\newcommand{\ccs}{{\rm cc}\mhyphen\Sigma_2}
\newcommand{\qqs}{{\rm qq}\mhyphen\Sigma_2}
\newcommand{\st}{\s{2}}
\newcommand{\s}[1]{\Sigma_{#1}^p}
\newcommand{\X}{\rm cQMA}
\newcommand{\hin}{H_{\rm in}}
\newcommand{\hprop}{H_{\rm prop}}
\newcommand{\hout}{H_{\rm out}}
\newcommand{\hstab}{H_{\rm stab}}
\newcommand{\hist}{\Pi_{\rm hist}}
\newcommand{\shist}{\spa{S}_{\rm hist}}
\newcommand{\class}[1]{\textup{#1}}
\newcommand{\ppoly}{\class{P}_{\rm/poly}}
\newcommand{\histstate}{\ket{\psi}_{\rm hist}}
\newcommand{\histstatecq}[2]{\ket{#1,#2}_{\rm hist}}
\newcommand{\histstateketbra}{\ketbra{\psi}{\psi}_{\rm hist}}
\newcommand{\histstatecqketbra}[2]{\ketbra{#1,#2}{#1,#2}_{\rm hist}}
\begin{document}

\title{Hardness of approximation for quantum problems}
\author{Sevag Gharibian\footnote{Department of Computer Science, University of Illinois at Chicago, Chicago, IL 60607, USA. Work completed while author was at the David R. Cheriton School of Computer Science and Institute for Quantum Computing, University of Waterloo, Waterloo N2L 3G1, Canada. Supported by the Natural Sciences and Engineering Research Council of Canada (NSERC), NSERC Michael Smith Foreign Study Supplement, EU-Canada Transatlantic Exchange Partnership program, David R.~Cheriton School of Computer Science, and Canadian Institute for Advanced Research.}
\and
 Julia Kempe\footnote{CNRS \& LIAFA,
 University Paris Diderot - Paris 7, Paris, France, and Blavatnik School of Computer Science, Tel Aviv University, Tel Aviv 69978, Israel. Supported
by an Individual Research Grant of the Israeli Science Foundation, by European Research Council (ERC) Starting Grant
QUCO, and by the French ANR Defis program under contract ANR-08-EMER-012 (QRAC project).}}
\date{}
\maketitle

\begin{abstract}
The polynomial hierarchy plays a central role in classical complexity theory. Here, we define a quantum generalization of the polynomial hierarchy, and initiate its study. We show that not only are there natural complete problems for the second level of this quantum hierarchy, but that these problems are in fact hard to approximate. Using these techniques, we also obtain hardness of approximation for the class QCMA. Our approach is based on the use of dispersers, and is inspired by the classical results of Umans regarding hardness of approximation for the second level of the classical polynomial hierarchy [Umans, FOCS 1999]. The problems for which we prove hardness of approximation for include, among others, a quantum version of the Succinct Set Cover problem, and a variant of the local Hamiltonian problem with hybrid classical-quantum ground states.
\end{abstract}

\section{Introduction and Results}\label{scn:intro}

Over the last decades, the Polynomial Hierarchy (PH)~\cite{MS72}, a natural generalization of the class NP, has been the focus of much study in classical computational complexity. Of particular interest is the second level of PH, denoted $\st$. Here, we say a problem is in $\st$ if it has an efficient verifier with the property that for any YES instance $x\in\set{0,1}^n$ of the problem, \emph{there exists} a polynomial length proof $y$ such that \emph{for all} polynomial length proofs $z$, the verifier accepts $x$, $y$ and $z$. Note that the \emph{alternation} from an existential quantifier over $y$ to a for-all quantifier over $z$ is crucial here -- keeping only the existential quantifier reduces us to NP.

It turns out that introducing such {alternating} quantifiers makes $\st$ a powerful class believed to be \emph{beyond} NP. For example, there exist natural and important problems known to be in $\st$ but not in NP. Such problems range from ``does the optimal assignment to a 3SAT instance satisfy \emph{exactly} $k$ clauses?'' to practically relevant problems related to circuit minimization, such as ``given a boolean formula $C$ in Disjunctive Normal Form (DNF), what is the smallest DNF formula $C'$ equivalent to $C$?'' (see, e.g.~\cite{U99}). The study of $\st$ has also led to a host of other fundamental theoretical results, such as the Karp-Lipton theorem, which states that $\class{NP}\not\subseteq \ppoly$ unless PH collapses to $\st$. $\st$ has even been used to prove that SAT cannot be solved simultaneously in linear time and logarithmic space~\cite{F97,FLMV05}. For these reasons, $\st$ and more generally PH have occupied a central role in classical complexity theoretic research.

Moving to the quantum setting, the study of quantum proof systems and a natural quantum generalization of NP, the class Quantum Merlin Arthur (QMA)~\cite{KSV02}, has been a very active area of research over the last decade. Roughly, a problem is in QMA if for any YES instance of the problem, there exists a polynomial size \emph{quantum} proof convincing a {quantum} verifier of this fact with high probability. With the notion of quantum proofs in mind, we thus ask the natural question: \emph{Can a {quantum} generalization of $\st$ be defined, and what types of problems might it contain and characterize?} Perhaps surprisingly, to date there are almost no known results in this direction.

\paragraph{Our results:} In this work, we introduce a quantum generalization of $\st$, which we call $\cqs$, and initiate its study. Our results include $\cqs$-completeness and $\cqs$-hardness of approximation for a number of new problems we define. Our techniques also yield hardness of approximation for the complexity class known as QCMA. We now describe these results in further detail.\\

\myparagraph{Hardness of approximation for $\cqs$}To begin, we informally define $\cqs$ (see Section~\ref{scn:def} for formal definitions).

\begin{definition}[$\cqs$ (informal)]
A problem $\Pi$ is in $\cqs$ if there exists an efficient quantum verifier satisfying the following property for any input $x\in\set{0,1}^n$:
\begin{itemize}
    \item If $x$ is a YES instance of $\Pi$, then {there exists} a \emph{classical} proof $y\in\set{0,1}^{\poly(n)}$ such that {for all} \emph{quantum} proofs $\ket{z}\in\B^{\otimes \poly(n)}$, the verifier accepts $x$, $y$ and $\ket{z}$ with high probability.
    \item If $x$ is a NO instance of $\Pi$, then for all \emph{classical} proofs $y\in\set{0,1}^{\poly(n)}$, there exists a \emph{quantum} proof $\ket{z}\in\B^{\otimes \poly(n)}$ such that the verifier rejects $x$, $y$ and $\ket{z}$ with high probability.
\end{itemize}
\end{definition}

We believe this is a natural quantum generalization of $\st$. Here, the prefix $cq$ in $\cqs$ follows since the existential proof is classical, while the for-all proof is quantum. One can also consider variations of this scheme such as $\qqs$, $\qcs$, or $\ccs$ (with a quantum verifier), defined analogously. In this paper, however, our focus is on $\cqs$, as it is the natural setting for the computational problems for which we wish to prove hardness of approximation. Note also that unlike for $\st$, the definition of $\cqs$ is bounded error -- this is due to the use of a quantum verifier for $\cqs$. This implies, for instance, that the quantum analogue of the classically non-trivial result $\BPP\subseteq\st$~\cite{S83,L83}, i.e.\ $\BQP\subseteq\cqs$, holds trivially. Finally, one can extend the definition of $\cqs$ to an entire hierarchy of quantum classes analogous to PH by adding further levels of alternating quantifiers, attaining presumably different classes depending on whether the quantifier at any particular level runs over classical or quantum proofs.

To next discuss hardness of approximation for $\cqs$, we recall two classical problems  crucial to our work here. First, in the NP-complete problem SET COVER, one is given a set of subsets $\set{S_i}$ whose union covers a ground set $U$, and we are asked for the smallest number of the $S_i$ whose union still covers $U$. If, however, the $S_i$ are represented \emph{succinctly} as the on-set\footnote{By \emph{on-set}, we mean the set of assignments which cause $\phi_i$ to be true.} of a $3$-DNF formula $\phi_i$, we obtain a more difficult problem known as \SSC~(SSC). SSC, along with a related problem \IRR~(IRR), are not just NP-hard, but are $\st$-complete (indeed, they are even $\st$-hard to approximate~\cite{U99}). SSC and IRR are defined as:

\begin{definition}[\SSC~(SSC)~\cite{U99}]
    Given a set $S=\set{\phi_i}$ of $3$-DNF formulae such that $\bigvee_{i\in S}\phi_i$ is a tautology, what is the size of the smallest $S'\subseteq S$ such that $\bigvee_{i\in S'}\phi_i$ a tautology?
\end{definition}

\begin{definition}[\IRR~(IRR)~\cite{U99}]
    Given a DNF formula $\phi=t_1\vee t_2\vee\cdots\vee t_n$, what is the size of the smallest $S\subseteq \set{t_i}_{i=1}^n$ such that $\phi\equiv\bigvee_{i\in S}t_i$?
\end{definition}

Our work introduces and studies quantum generalizations of SSC and IRR. In particular, analogous to the classically important task of circuit minimization, the quantum generalizations we define are arguably natural and related to what one might call ``Hamiltonian minimization'' -- given a sum of Hermitian operators $H=\sum_i H_i$, what is the smallest subset of terms $\set{H_i}$ whose sum approximately preserves certain spectral properties of $H$? We hope that such questions may be useful to physicists in a lab who wish to simulate the simplest Hamiltonian possible while retaining the desired characteristics of a complex Hamiltonian involving many interactions. We remark that at a high level, the connection to $\cqs$ for the task of Hamiltonian minimization is as follows: The classical existential proof encodes the subset of terms $\set{H_i}$, while the quantum for-all proof encodes complex unit vectors which achieve certain energies against $H$. The problem QUANTUM SUCCINCT SET COVER is now defined as follows.

\begin{definition} {\QSSC~(QSSC) (informal)}
    Given a set of local Hamiltonians $\set{H_i}$ such that $\sum_i H_i$ has smallest eigenvalue at least $\alpha$, what is the size of the smallest subset $S$ of the $H_i$ such that $\sum_{H_i\in S} H_i$ has smallest eigenvalue at least $\alpha$? Any subset satisfying this property is called a \emph{cover}.
\end{definition}

Here, a \emph{local Hamiltonian} is a sum of Hermitian operators, each of which acts non-trivially on at most $k\in\Theta(1)$ qubits (hence the name $k$-local Hamiltonian). Intuitively, the goal in QSSC is to cover the entire Hilbert space using as few interaction terms $H_i$ as possible. Hence, we associate the notion of a ``cover'' with obtaining large eigenvalues, as opposed to small ones, making QSSC a direct quantum analogue of SSC. We remark that since SSC is a classical constraint satisfaction problem, we believe the language of \emph{quantum} constraint satisfaction, i.e.\ Hamiltonian constraints, is a natural avenue for defining QSSC. Our first result concerns QSSC, and is as follows.

\begin{theorem}\label{thm:QSSChard}
    QSSC is $\cqs$-complete, and moreover is $\cqs$-hard to approximate within $N^{1-\epsilon}$ for all $\epsilon>0$, where $N$ is the encoding size of the QSSC instance.
\end{theorem}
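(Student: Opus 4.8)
The plan is to treat membership and hardness-of-approximation separately, the latter by lifting Umans' disperser-based reduction for classical \SSC~\cite{U99} to the Hamiltonian setting, while exploiting that the classical/quantum split defining $\cqs$ is precisely the one QSSC wants: the subset of terms one must select is an inherently classical ``$\exists$'' object, whereas a low-energy state that breaks a bad candidate cover is the quantum ``$\forall$'' object.

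For membership, I would fix the natural threshold version of the problem: given a $k$-local instance $\set{H_i}_{i=1}^m$ (WLOG $0\preceq H_i\preceq I$ after shifting and rescaling) with $\sum_i H_i\succeq\alpha I$, a size bound $\ell$, and a $1/\poly$ promise gap (either some cover has size $\le\ell$, or every size-$\le\ell$ subset admits a state of energy below $\alpha-1/\poly$). The $\cqs$ verifier, given classical proof $S$ and quantum proof $\ket{z}$, checks $\abs{S}\le\ell$ and then estimates $\bra{z}\bigl(\sum_{i\in S}H_i\bigr)\ket{z}$ to additive error $O(1/\poly)$ via the standard local-Hamiltonian energy measurement (phase estimation term-by-term, or Kitaev's estimator), accepting iff the estimate is at least $\alpha$ minus a suitable margin. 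On YES instances the honest prover sends a genuine cover, so every $\ket{z}$ yields energy $\ge\alpha$ and the verifier accepts with high probability; on NO instances every admissible $S$ has a low-energy eigenvector $\ket{z}$ forcing rejection. Since $\cqs$ is bounded-error and the gap is $1/\poly$, amplifying verifier success completes $\QSSC\in\cqs$.

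For hardness, I would first pass through a $\cqs$-complete ``quantified local Hamiltonian'' problem obtained by applying a Kitaev-style circuit-to-Hamiltonian construction to a generic $\cqs$ verifier (taking the Hamiltonian of the circuit that outputs ``reject'', so that $\lambda_{\min}$ is large exactly when all quantum proofs are accepted): its input is a local Hamiltonian family $\set{H(y)}_{y\in\set{0,1}^p}$, and it asks whether $\exists y$ with $\bra{z}H(y)\ket{z}\ge b$ for all $\ket{z}$, versus for all $y$ there is $\ket{z}$ with $\bra{z}H(y)\ket{z}\le a$, where $b-a\ge 1/\poly$. The ``plain'' reduction to QSSC introduces, for each coordinate $j$ and value $v\in\set{0,1}$, a block $H_{j,v}$ realizing the terms of $H(y)$ that fire when $y_j=v$, plus diagonal ancilla-penalty terms that make any subset failing to pick exactly one of $\set{H_{j,0},H_{j,1}}$ per coordinate cheap to break; soundness against entangled $\ket{z}$ is inherited from the source problem, and the diagonal penalties add no quantum slack.

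The hard part will be the gap amplification needed for the $N^{1-\epsilon}$ factor. Following Umans, I would compose the reduction with an explicit disperser $G$ between a left set of ``candidate terms'' and a right set of ``universe elements'', where covering an element $r$ means the selected terms jointly force high energy on a local subspace indexed by $r$; the disperser guarantee (every set of fewer than $N^{1-\epsilon}$ left vertices misses some right vertex) should yield a small cover in the YES case (size controlled only by $G$'s degree structure) but force covers of size $\ge N^{1-\epsilon}$ in the NO case, since an untouched $r$ supplies an explicit low-energy quantum proof supported on the $r$-subspace. The technical crux I expect to struggle with is the tensor-product gadget layering the source Hamiltonian over $G$'s incidence structure so that ``$r$ uncovered $\Rightarrow$ a genuinely low-energy state exists'' while keeping every $H_i$ $O(1)$-local and PSD, losing only $1/\poly$ in the spectral thresholds (so the membership-style amplification still applies), and tuning disperser parameters so that the encoding size $N$ and the cover-size separation relate as $N^{1-\epsilon}$ for every $\epsilon>0$. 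With these in hand, an $N^{1-\epsilon}$-approximation algorithm for QSSC would decide the $\cqs$-complete source problem, yielding both $\cqs$-completeness and the claimed inapproximability.
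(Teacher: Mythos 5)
Your membership argument (verifier receives the candidate subset $S$ classically, then estimates the energy of the quantum proof against $H_S$ via Kitaev's phase-estimation verifier) is essentially what the paper does, so that half is fine. The hardness half, however, has a structural gap that you yourself flag as "the technical crux I expect to struggle with," and the difficulty is not incidental. You propose to first produce a quantified local Hamiltonian problem and only afterward apply the disperser, with the disperser's left vertices being "candidate terms" and its right vertices being "universe elements," i.e.\ both polynomial-size sets of Hamiltonian objects. But dispersers in this line of work are used in exactly the opposite way: the left set is the \emph{exponentially large} space of classical existential proofs to the $\cqs$ verifier ($|L| = 2^{c(n)+1}$), and the right set is a polynomial-size set whose subsets encode those proofs. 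The paper's reduction order is therefore reversed from yours — dispersers first, Kitaev second. It first reduces a generic $\cqs$ problem to an intermediate combinatorial problem, QUANTUM MONOTONE MINIMUM WEIGHT WORD (QMW), asking for the minimum Hamming weight input accepted by a cQMA circuit; the disperser creates a gap between Hamming weight $\approx c(n) 2^d$ (YES) and $> |R|/2$ (NO). Only then is QMW reduced to QSSC, via a construction where one Hamiltonian term $G_i$ is created per input bit of the cQMA circuit (chosen iff $x_i=1$), plus a heavily weighted $\hin+\hprop+\hstab$ term and the term $I-H$; this keeps the cover-size / Hamming-weight correspondence exact while the spectral analysis of the resulting non-commuting sum requires the Projection Lemma and the Geometric Lemma, neither of which your sketch accounts for.

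Two further points. First, the correspondence your "plain" reduction sets up (blocks $H_{j,0},H_{j,1}$ with ancilla penalties to enforce picking exactly one) is a two-to-one gadget that you would still need to prove preserves the cover-size gap — the paper's construction is a cleaner one-to-one correspondence between selected terms and $1$-bits of the accepted string, which is what makes the gap-preservation additive ($g\mapsto g+2$). Second, a single disperser application only yields hardness ratio $N^\epsilon$ for some $\epsilon>0$; getting $N^{1-\epsilon}$ for every $\epsilon>0$ requires the extra step (Section 5 of the paper, following Umans) of recursively composing the cQMA circuit $W$ with itself $t$ times, arguing via deferred measurement that the quantum composition is well-defined, and then plugging in the improved Ta-Shma--Umans--Zuckerman disperser. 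Your sketch implicitly treats the disperser as if it alone delivers the $N^{1-\epsilon}$ ratio, which it does not.
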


\noindent By \emph{hard to approximate}, we mean that any problem in $\cqs$ can be reduced to an instance of QSSC via a polynomial time mapping or Karp reduction such that the gap between the sizes of the optimal cover in the YES and NO cases scales as $N^{1-\epsilon}$. In other words, it is $\cqs$-hard to determine whether the smallest cover size of an arbitrary instance of QSSC is at most $g$ or at least $g'$ for $g'/g\in \Omega(N^{1-\epsilon})$ (where $g'\geq g$). We next define the problem QUANTUM IRREDUNDANT~(QIRR).

\begin{definition} {\QIRR~(QIRR) (informal)}
    Given a set of succinctly described orthogonal projection operators $\set{H_i}$ acting on $N$ qubits, and a set $\set{c_i\geq 0}\subseteq \reals$, define $H:=\sum_i c_iH_i$. Then, what is the size of the smallest subset $S\subseteq\set{H_i}$ such that for $H'=\sum_{H_i\in S}c_iH_i$, vectors achieving high and low energies against $H$ continue to obtain high and low energies against $H'$, respectively?
\end{definition}

\noindent Here, by a \emph{succinctly} described projector, we mean a possibly non-local operator which is the tensor product of $k$-local projectors for some $k\in\Theta(1)$. This non-local structure naturally generalizes IRR, where the DNF formula is allowed to be non-local. Our next result is the following.

\begin{theorem}\label{thm:QIRRhard}
    QIRR is $\cqs$-hard to approximate within $N^{\frac{1}{2}-\epsilon}$ for all $\epsilon>0$, where $N$ is the encoding size of the QIRR instance.
\end{theorem}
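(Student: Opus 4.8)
The plan is to prove Theorem~\ref{thm:QIRRhard} by a gap-preserving Karp reduction from QSSC, invoking the $\cqs$-hardness of approximating QSSC from Theorem~\ref{thm:QSSChard}. So start with a QSSC instance $\set{G_i}_{i=1}^m$ of $k$-local Hamiltonians with $\sum_i G_i\succeq\alpha\Id$, under the promise that the minimum cover size is either at most $g$ (YES case) or at least $g'$ (NO case) with $g'/g\in\Omega(M^{1-\eps})$, where $M$ is the QSSC encoding size. After rescaling we may assume each $G_i$ is (a sum of a constant number of) $k$-local projector(s) with spectrum in $[0,1]$, so that for every $S$ we have $H':=\sum_{i\in S}c_iG_i\preceq H:=\sum_i c_iG_i$ whenever $c_i\geq 0$. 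Consequently $\langle\psi|H'|\psi\rangle\le\langle\psi|H|\psi\rangle$ for all $\ket\psi$, so the ``low-energy'' requirement in the definition of QIRR holds automatically for any subset, and only the ``high-energy''/covering direction is substantive — exactly mirroring the fact that one implication of DNF equivalence in IRR is trivial. The target QIRR instance will therefore be declared a \emph{cover} precisely when removing the discarded terms does not collapse any high-energy direction of $H$.

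The construction of the QIRR instance has two ingredients. First, to legitimately obtain \emph{orthogonal, succinctly described} projectors, diagonalize each $G_i$ and attach it a private ancilla register, so that the projectors arising from distinct $i$ act on (essentially) disjoint sets of ancilla qubits; tensoring with identities or flag projectors on the remaining qubits makes each a tensor product of $k$-local projectors on $N$ qubits, while the coefficients $c_i$ are chosen to encode the relevant spectral data of $G_i$. Second, and crucially, add a \emph{bundling} gadget, using fresh ancillae and a carefully chosen incidence pattern (in the spirit of Umans, possibly reusing the disperser structure underlying Theorem~\ref{thm:QSSChard}), whose purpose is to force any energy-preserving subset of QIRR projectors to respect the grouping induced by the $G_i$, i.e.\ to correspond to an honest QSSC cover, rather than exploiting the extra freedom of keeping only some spectral pieces of a given $G_i$. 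One then verifies both directions: a QSSC cover of size $t$ yields a QIRR cover of size $O(t)$ since each $G_i$ contributes $O(1)$ projectors and then $\sum_{i\in S}c_iG_i\succeq\alpha\Id$ keeps all directions high (completeness), and conversely, by the bundling gadget, any QIRR cover of size $t$ decodes to a QSSC cover of size $O(t)$ (soundness). Hence the QSSC promise gap $g'/g$ survives up to constant factors.

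Finally, the drop in the inapproximability exponent from $1$ to $\tfrac12$ comes from an encoding-size blowup: the succinct-projector and bundling gadgets introduce, roughly, one QIRR projector per pair consisting of a QSSC term and an ancilla flag, so the QIRR encoding size is $N=\widetilde{\Theta}(M^{2})$ up to polynomial factors that can be absorbed into $\eps$. A gap of $M^{1-\eps}=\bigl(N^{1/2}\bigr)^{1-\eps}=N^{\,1/2-\eps/2}$ then yields the claimed $N^{\,1/2-\eps}$ hardness after renaming $\eps$. I expect the main obstacle to be the soundness analysis of the bundling gadget in the quantum setting: one must rule out a cheating for-all prover — equivalently, a cheating low-dimensional subspace — that uses a ``cover'' which is not of the bundled form, which requires showing the incidence structure is robust against superpositions and against small spectral perturbations, a step with no direct classical analogue. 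A secondary technical point is aligning the continuous ``high energy'' versus ``low energy'' thresholds of QIRR with the completeness/soundness parameters so that the QSSC gap transfers cleanly; an alternative route that partly sidesteps this is a direct disperser-based construction from the canonical $\cqs$-complete local-Hamiltonian problem used to prove Theorem~\ref{thm:QSSChard}, at the cost of redoing the disperser machinery from scratch.
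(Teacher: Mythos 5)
Your high-level plan matches the paper's: reduce from a hard QSSC instance, massage its terms into scaled projectors, and preserve the gap while the encoding size grows quadratically, which explains the drop from exponent $1$ to $\tfrac12$. You also correctly observe that the low-energy half of the QIRR condition holds automatically because $\sum_{i\in S'}c_iH_i\preceq\sum_i c_iH_i$. However, the proposal has a genuine gap precisely where you flag uncertainty, and in addition some of the stated combinatorics would actually break gap preservation.

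The core issue is the unspecified ``bundling gadget.'' In the paper's reduction (Theorem~\ref{thm:QSSCtoQIRR}), the QSSC instance produced by Theorems~\ref{thm:qmmwwHard} and~\ref{thm:qmmwwToQssc} has a very specific shape: $n$ terms $G_1,\ldots,G_n$ that are already single projectors up to a scalar, plus two ``heavy'' terms $G_{n+1},G_{n+2}$ that are each sums of $r=\poly(n)$ Kitaev projectors. Your claim that ``each $G_i$ contributes $O(1)$ projectors'' and that a QSSC cover of size $t$ yields a QIRR cover of size $O(t)$ is therefore false for the relevant instances: just splitting $G_{n+1}$ and $G_{n+2}$ into their pieces already forces $\Theta(r)$ mandatory QIRR terms into any cover, and $r$ is asymptotically larger than both $g$ and $g'$. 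This is exactly the failed first attempt the paper records, where the naive reduction gives $h=g+2r-3$, $h'=g'+2r-3$, and hence $h'/h\to1$. The resolution in the paper is not merely to force the $\Theta(r)$ pieces into every cover (which the tag qubit and $\log r$ chaperone qubits accomplish), but also to duplicate each of the remaining terms $G_i$, $i\in[n]$, into $r$ copies $F_{i,j}=\ketbra{0}{0}_A\otimes(G_i)_B\otimes\ketbra{j{-}1}{j{-}1}_C$. This inflates both thresholds multiplicatively to $h=gr-1$, $h'=g'r-1$, so the ratio $h'/h$ is preserved. Without this duplication step, your construction cannot preserve the gap, so the gadget you defer is not a secondary technicality but the load-bearing part of the argument. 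Your alternative of ``diagonalize each $G_i$ and attach a private ancilla'' also does not obviously produce \emph{tensor products of $5$-local projectors} for the heavy terms, since those are sums of $r$ non-commuting Kitaev pieces rather than single $5$-local operators.

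One smaller point: you treat the reduction as starting from an arbitrary QSSC instance. The paper's reduction deliberately starts from the specific QSSC instance output by the QMW$\to$QSSC reduction, because its soundness proof (the step showing $F_T\succeq\gamma I$ versus $F_{T'}$ having a low eigenvalue) reuses the cover/non-cover dichotomy established there; a reduction from arbitrary QSSC would need to reprove that analysis. Your final accounting of the exponent ($N=\widetilde\Theta(M^2)$, hence $N^{1/2-\epsilon}$) is correct in spirit and matches the paper's blowup from $n+2$ to $r(n+2)-1$ terms.
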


%We show two further results regarding hardness of approximation. First, via a simple application of the gap amplification technique of Umans~\cite{U99} and the improved disperser construction of Ta-Shma, Umans, and Zuckerman~\cite{TUZ07}, it turns out that the hardness ratios for QSSC and QIRR above can be improved to $N^{1-\epsilon}$ and $N^{\frac{1}{2}-\epsilon}$, respectively (see Corollary~\ref{cor:qsscampgap}).
\myparagraph{Hardness of approximation for QCMA} The techniques from above can be used in a straightforward manner to show hardness of approximation for QCMA. Here, the class QCMA~\cite{AN02} is defined as $\cqs$ with the second (quantum) proof omitted, and can hence be thought of as the first level of our ``$cq$-hierarchy''. By defining the problem QUANTUM MONOTONE MINIMUM SATISFYING ASSIGNMENT (QMSA) (see Section~\ref{scn:QCMA}), we show:

\begin{theorem}\label{thm:QMSAhard}
        QMSA is QCMA-complete, and moreover is QCMA-hard to approximate within $N^{1-\epsilon}$ for all $\epsilon>0$, where $N$ is the encoding size of the QMSA instance.
\end{theorem}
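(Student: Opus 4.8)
The plan is to handle $\class{QCMA}$-membership and $\class{QCMA}$-hardness of approximation separately; the first is routine, while the second is obtained by specializing the disperser-based gap amplification already developed for Theorem~\ref{thm:QSSChard} (itself an adaptation of Umans' construction) to the case of a single existential quantifier.

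For membership, I would argue that the threshold version of QMSA --- ``does the instance admit a satisfying word of Hamming weight at most $g$?'' --- lies in $\class{QCMA}$. The classical proof is a candidate word $y$; Arthur checks classically that $\abs{y}\le g$ and then runs the quantum verification attached to the instance on input $y$. In the Hamiltonian phrasing of QMSA this is the Kitaev circuit-to-Hamiltonian test applied to the verifier circuit acting on $\ket{y}\ket{0\cdots 0}$, whose associated history state is fixed by $y$, so no quantum component of the proof is required and membership lands in $\class{QCMA}$ rather than $\class{QMA}$. Completeness and soundness are inherited from the promise gap of the QMSA instance and amplified in the usual way.

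For hardness of approximation, I would reduce from an arbitrary $L\in\class{QCMA}$ with verifier $V$ and classical proof length $p=\poly(n)$, first amplifying $V$ to have exponentially small error. The reduction mirrors the $\cqs\to$ QSSC reduction of Theorem~\ref{thm:QSSChard} with the ``$\forall\ket{z}$'' part removed: there the minimum-eigenvalue condition $\sum_{H_i\in S}H_i\succeq\alpha\Id$ encoded the universal \emph{quantum} quantifier, whereas here only the existential \emph{classical} proof needs to be encoded, so that condition collapses away. Concretely, one fixes an explicit disperser $G:[N]\times[D]\to[p]$ (with parameters chosen as in the proof of Theorem~\ref{thm:QSSChard}, so that the relevant ``min-set'' size $K$ is tiny and the coverage error $\delta$ is small), equips the QMSA instance with a word register indexed by the $N$ left vertices, and uses $G$ together with the standard literal-doubling trick to define a re-encoding $w\mapsto\sigma(w)$ of candidate proofs that is (i) monotone in $w$, (ii) polynomial-time computable, hence compilable into a $k$-local verification Hamiltonian, and (iii) such that, after running $V$ on $\sigma(w)$ together with a consistency gadget, one has: if $x\in L$ then some word of weight $O(K)$ is satisfying, while if $x\notin L$ then every satisfying word has weight $\Omega(N)$ --- the last being exactly where the disperser property is invoked. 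With $N=\poly(p)$ this produces a gap of $\Omega(N/K)=\Omega(N^{1-\eps})$ in the optimal word weight between the YES and NO cases. Since this is a gap-producing Karp reduction from every $L\in\class{QCMA}$, it also settles $\class{QCMA}$-hardness of the exact decision problem, and combined with membership it gives $\class{QCMA}$-completeness.

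As in the classical $\st$ setting, I expect the main obstacle to be the NO-case analysis of the monotone disperser gadget: proving that a low-weight word genuinely cannot satisfy the composed verifier, while simultaneously (a) keeping the verification circuit of bounded fan-in and the final Hamiltonian $k$-local with $k\in\Theta(1)$, (b) keeping the reduction polynomial time, which constrains the choice of explicit disperser and its parameter trade-off, and (c) ensuring the produced instance is \emph{exactly} monotone rather than approximately so. Granting the QSSC construction of Theorem~\ref{thm:QSSChard}, most of this should reduce to bookkeeping after deleting one quantifier, but it is precisely the kind of bookkeeping in which an error would hide, so I would obtain it by carefully specializing that construction rather than rebuilding it from scratch.
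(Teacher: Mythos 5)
Your core plan --- specialize Umans' disperser reduction by deleting the universal quantum quantifier, then apply the gap-amplification by circuit composition --- is exactly what the paper does. But you aim it at the wrong step, and the misfire traces to a misconception about what QMSA \emph{is}. In the paper, QMSA is a purely circuit-level problem: it is QMW (Definition~\ref{def:qmmww}) with the CHOICE register dropped, i.e.\ a quantum circuit $V$ that takes only an $n$-bit classical INPUT register, accepts a monotone set, and one asks for the minimum Hamming-weight accepted string. There is no Hamiltonian, no Kitaev circuit-to-Hamiltonian compilation, and no minimum-eigenvalue condition anywhere in its definition or in the reduction. Everything Hamiltonian-flavoured that you invoke --- the $\sum_{H_i\in S}H_i\succeq\alpha I$ condition, the ``$k$-local verification Hamiltonian,'' the Kitaev history-state test for membership --- lives entirely in the QMW~$\to$~QSSC step (Theorem~\ref{thm:qmmwwToQssc}) and plays no role here. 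The reduction you actually need to specialize is Theorem~\ref{thm:qmmwwHard}, the disperser encoding from an arbitrary $\cqs$ problem to QMW, followed by the $W\mapsto W^t$ composition of Theorem~\ref{thm:qmwampgap}; when the original verifier is a QCMA verifier (no quantum proof register), that construction goes through verbatim and outputs a QMSA circuit directly.

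The upshot is that the paper's proof is considerably lighter than what you are anticipating: no Projection Lemma, no Geometric Lemma, no history-state analysis, and the quantum-proof-splitting subtlety in the $\abs{L_y}>1$ case of Theorem~\ref{thm:qmmwwHard} disappears entirely because there is no quantum proof to split across parallel copies of $V$. For membership, too, you should just run $V$ on the candidate word $y$ and check the Hamming-weight bound --- no circuit-to-Hamiltonian test is required. Your NO-case worries (monotonicity, locality of a compiled Hamiltonian, disperser parameter bookkeeping) are legitimate for the QSSC/QIRR reductions, but for QMSA they mostly evaporate once the target is recognized as a circuit problem: monotonicity is inherited exactly from the encoding scheme in Theorem~\ref{thm:qmmwwHard}, and locality is not a constraint at all because no Hamiltonian is produced.
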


\myparagraph{A canonical $\cqs$-complete problem} Our last results concern a canonical $\st$-complete problem, $\ssat{i}$, and its generalization to the quantum setting. Specifically, given a boolean formula $\phi$, $\ssat{i}$ asks whether:
\[
    \exists \ve{x}_1 \forall \ve{x}_2\exists \ve{x}_3 \cdots\forall \ve{x}_i {\rm ~~such~ that~~ } \phi(\ve{x}_1,\ve{x}_2,\ve{x}_3,\ldots,\ve{x}_i)=1.
\]
Here, we have assumed $i$ is even; for odd $i$, the last quantifier is a $\exists$. The terms $\ve{x}_j$ are vectors of boolean variables. For $i=2$, one can define a natural quantum generalization of this problem, denoted $\cqslh{k}$ and defined in Section~\ref{scn:anotherproblem}, using local Hamiltonians whose ground states are tensor products of a classical string and a quantum state. We show:

\begin{theorem}\label{thm:anothercomplete}
$\cqslh{3}$ is $\cqs$-complete.
\end{theorem}

\noindent Moreover, by defining an appropriate variant of $\cqslh{3}$, denoted $\cqslhmin$ and also defined in Section~\ref{scn:anotherproblem}, where the goal is to minimize the Hamming weight of the classical portion of the ground states mentioned above, we obtain the following result.

\begin{theorem}\label{thm:anothercomplete2}
    $\cqslhmin$ is $\cqs$-complete, and moreover is $\cqs$-hard to approximate within $N^{1-\epsilon}$ for any $\epsilon>0$, for $N$ the encoding size of the $\cqslhmin$ instance.
\end{theorem}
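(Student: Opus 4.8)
The plan is to establish (i) $\cqslhmin\in\cqs$ and (ii) $\cqs$-hardness of approximation within $N^{1-\epsilon}$; since hardness of approximation subsumes $\cqs$-hardness, (i) and (ii) together also yield $\cqs$-completeness. For (i), I would exhibit a $\cqs$-verifier directly. Its input is a $\cqslhmin$ instance --- a local Hamiltonian $H$ on a classical register $\mathcal{X}$ and a quantum register $\mathcal{Q}$, together with energy thresholds $a<b$ (separated by $1/\poly$, which specify when a classical string is \emph{valid}) and Hamming-weight thresholds $g<g'$. The existential \emph{classical} proof is a candidate string $x$, and the for-all \emph{quantum} proof is a state on $\mathcal{Q}$ (padded with any clock or ancilla registers carried by the instance). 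The verifier rejects outright if the Hamming weight of $x$ exceeds $g$ --- a trivial classical test --- and otherwise applies the standard \QMA-style local-Hamiltonian verifier to $(\bra{x}\otimes I)H(\ket{x}\otimes I)$ and the supplied quantum proof. In a YES instance the existential prover sends a valid $x$ of weight $\le g$, so every quantum proof yields energy $\le a$ and the verifier accepts with high probability; in a NO instance every $x$ of weight $\le g$ is non-valid (as $g<g'$), so the for-all prover can supply a quantum proof of energy $\ge b$, while strings with $|x|>g$ are rejected anyway. Standard error reduction then provides the bounded-error gap.

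For (ii), the quickest route is to reduce from QSSC, which by Theorem~\ref{thm:QSSChard} is $\cqs$-hard to approximate within $N^{1-\epsilon}$. Given a QSSC instance $\{H_i\}_{i=1}^n$ with threshold $\alpha$, introduce one classical qubit $x_i$ per term (with $x_i=1$ meaning ``$H_i$ is in the cover''), set $H:=\sum_i(\ket{1}\bra{1})_{\mathcal{X}_i}\otimes H_i$ on $\mathcal{X}\otimes\mathcal{Q}$ so that $(\bra{x}\otimes I)H(\ket{x}\otimes I)=\sum_{i:\,x_i=1}H_i$, and add an $\hstab$-style penalty pinning $\mathcal{X}$ to the computational basis. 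Then ``$\{i:x_i=1\}$ is a cover'' becomes exactly the spectral (validity) condition required of $x$ in $\cqslhmin$, the minimum cover size equals the minimum Hamming weight of a valid string, and the $N^{1-\epsilon}$ hardness (for every $\epsilon>0$) carries over since the reduction increases the instance size only by a constant factor. One does need to check that this matches the formal definition of $\cqslhmin$ from Section~\ref{scn:anotherproblem} --- in particular, reconciling whether ``valid'' is phrased via $\lambda_{\min}$ or $\lambda_{\max}$, which may require a complementing gadget $H_i\mapsto cI-H_i$ together with a unary counter on $\mathcal{X}$ to absorb the resulting $|x|$-dependent spectral shift. Should a direct match fail, the fallback is to re-run the disperser-based gap amplification behind Theorem~\ref{thm:QSSChard} inside the $\cqslh{3}$ framework of Theorem~\ref{thm:anothercomplete}: the left vertices of a disperser $G:[N]\times[D]\to[M]$ index blocks of $\mathcal{X}$ (whose total weight is the objective), the right vertices index copies of the verification-circuit Hamiltonian of the $\cqslh{3}$-completeness construction, and the disperser's spreading property converts the YES/NO gap of the original $\cqs$ problem into an $N^{1-\epsilon}$ gap in the minimum weight of a valid classical string.

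The step I expect to be the real obstacle --- in the from-scratch (disperser) route --- is controlling the \emph{quantum} for-all prover of the amplified instance. Unlike in Umans' classical argument~\cite{U99}, where the $\forall$-adversary is a single classical assignment attacking one replicated constraint, here the for-all proof is a global quantum state, possibly entangled across all constraint copies and with the block-selector qubits; one must rule out that it spreads a sub-threshold amount of energy, or exploits $\mathcal{X}$--$\mathcal{Q}$ coherence, so as to make a low-weight (hence invalid) classical string pass for valid. Retaining the $\hstab$ penalties that pin $\mathcal{X}$ to the computational basis and arguing --- via the projection lemma and Gershgorin/operator-norm estimates --- that the energies of distinct copies add essentially independently should suffice; but getting the constants to align so that the $1/\poly$ promise gap survives both the $M$-fold replication and the locality-reduction gadgets (needed because the disperser-induced constraints have large fan-in over the block register, whereas $\cqslh{3}$ terms must be $O(1)$-local), all while keeping the encoding size polynomial, is the delicate part. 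In the QSSC route this obstacle is already dispatched inside the proof of Theorem~\ref{thm:QSSChard}, leaving only the $\lambda_{\min}$-versus-$\lambda_{\max}$ bookkeeping noted above.
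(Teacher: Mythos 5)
The paper's route is essentially the one you call the ``fallback,'' but with a key simplification you missed: there is no need to re-run the disperser construction at all. The reduction in Theorem~\ref{thm:anothercomplete} is already Hamming-weight preserving --- the classical proof $c$ to the $\cqs$ verifier $V''$ reappears verbatim as the classical part of the candidate ground state $\ket{c}\ket{q}$ for the constructed Hamiltonian (this is precisely what the CNOT copy phase guarantees, since register $A$ is never touched afterward). So the paper simply feeds the $N^{1-\epsilon}$-hard QMW instance from Theorem~\ref{thm:qmwampgap} into the Theorem~\ref{thm:anothercomplete} reduction; the QMW Hamming-weight thresholds $g,g'$ carry over directly as the $\cqslhmin$ thresholds, and all the disperser, amplification, and non-commuting-Hamiltonian work is inherited from Theorems~\ref{thm:qmwampgap} and~\ref{thm:anothercomplete} rather than redone. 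The specific worry you raised --- an entangled for-all proof spreading energy across disperser copies --- never resurfaces here, because it was already dispatched inside the proof of Theorem~\ref{thm:qmmwwHard}, and at the Hamiltonian level Lemma~\ref{l:kitaev} applied to $H(c)$ (as in Theorem~\ref{thm:anothercomplete}) handles each fixed $c$ cleanly.

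Your primary route (reducing from QSSC) has a concrete gap: locality. The terms $\ketbra{1}{1}_{\mathcal{X}_i}\otimes H_i$ are $6$-local (one selector qubit plus the $5$-local $H_i$ of Definition~\ref{def:qssc}), but Definition~\ref{def:cqslh} requires a $3$-local Hamiltonian. Perturbation-gadget locality reduction is not a routine patch: it reproduces only the low-energy sector, whereas your validity condition $(\bra{x}\otimes I)H(\ket{x}\otimes I)\succeq\alpha I$ is a statement about the \emph{entire} spectrum for each fixed $x$, and gadgets also disturb the classical-register-tensor-quantum-register product structure that $\cqslhmin$ presumes. Separately, in part (i) you have the inequality directions reversed --- in the YES case of Definition~\ref{def:cqslh} the energy must be high ($\geq b$ for every quantum proof), not low --- though as you anticipate this is repairable by a $cI-H$ flip; the paper itself asserts containment via Kitaev's verifier and does not belabor it.
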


\paragraph{Proof ideas:} Our proofs are inspired by the classical work of Umans~\cite{U99,HSUL02}, and are achieved in a few steps. First, we show a \emph{gap-introducing} reduction from an arbitrary $\cqs$ problem to a problem we call \QMMWW~(QMW) using \emph{dispersers}~(see e.g.,~\cite{SZ94,TUZ07}). We then show the following \emph{gap-preserving} reductions, where $\leq_K$ denotes a mapping or Karp reduction:
\begin{equation}
    \class{QMW} \leq_K \class{QSSC} \leq_K \class{QIRR}\enspace.
\end{equation}
This yields hardness ratios of $N^\epsilon$ for some $\epsilon>0$. To obtain the stronger results claimed in Section~\ref{scn:intro}, we finally apply the gap amplification of Umans~\cite{U99} and improved disperser construction of Ta-Shma, Umans, and Zuckerman~\cite{TUZ07}.

In the classical setting, Umans~\cite{U99,HSUL02} used dispersers to attain hardness of approximation results relative to $\st$ for the classical problems MMWW (the classical version of QMW), SSC and IRR. To extend his techniques to the quantum setting, the most involved aspects of our work are the gap-preserving reductions from QMW to QSSC to QIRR. Here, an intricate balancing act involving carefully defined local Hamiltonian terms is needed to construct operators with the spectral properties required for our reductions. To analyze the resulting sums of non-commuting Hamiltonians, we require heavier machinery, such as the specific structure of Kitaev's local Hamiltonian construction~\cite{KSV02}, the Projection Lemma of Kempe, Kitaev, and Regev~\cite{KKR06}, and the Geometric Lemma of Kitaev~\cite{KSV02}.

Finally, to show $\cqs$-completeness of $\cqslh{3}$, we study the interplay between classical-quantum proofs and Kempe and Regev's~\cite{KR03} $3$-local Hamiltonian construction. Specifically, a careful analysis reveals that any $\cqs$ verification circuit can be modified in such a way that fixing the value $c$ of its classical proof register leads to an \emph{effective} Hamiltonian $H_c$. We then study the spectrum of $H_c$ to achieve the desired result. Moving on to $\cqslhmin$, hardness of approximation is now attained by combining our reduction for $\cqslh{3}$ with the result that QMW is hard to approximate.

\paragraph{Previous and related work:}

In terms of hardness of approximation, the related question of whether a \emph{quantum} PCP theorem holds is currently one of the biggest open problems in quantum complexity theory (see, e.g.,~\cite{A06,AALV09,A10,H12}). Regarding quantum generalizations of PH, the only previous work we are aware of is that of Yamakami~\cite{Y02}. However, the results of Yamakami are largely unrelated to ours (for example, complete problems are not studied), and the proposed definition of Reference~\cite{Y02} differs from ours in a number of ways: It is based on quantum Turing machines (whereas we work with quantum circuits), allows \emph{quantum} inputs (whereas here, like QMA, the input to a problem is a classical string), and considers quantum quantifiers at each level of the hierarchy (whereas in its full generality our scheme allows alternating between classical and quantum quantifiers between levels as desired).

\paragraph{Significance and open questions:}

The classical polynomial hierarchy plays an important role in classical complexity theory, both as a generalization of NP and as a proof tool in itself. It is hoped that the scheme we propose here for generalizing PH to the quantum setting will find similar applications in quantum complexity theory. Second, the problems we show to be $\cqs$-complete here are arguably rather natural, and in embodying a generalization of classical circuit minimization or optimization, may hopefully be related to practical scenarios in a lab. Further, although the alternation between classical and quantum quantifiers in $\cqs$ may \emph{a priori} seem odd, the notion of relating a classical proof to, say, subsets of local Hamiltonian terms, and the quantum proof to quantum states achieving certain energies is in itself quite natural, and in our opinion justifies the study of such a combination of quantifiers. Third, with respect to hardness of approximation, since whether a quantum PCP theorem holds remains a challenging open question, it is all the more interesting that one is able to prove hardness of approximation in a quantum setting here using an entirely different tool, namely that of dispersers. We remark that  dispersers and their two-sided analogues, extractors, have been used classically to amplify existing PCP inapproximability results~\cite{SZ94,Z96}. However, as far as we are aware, neither are known to directly yield PCP constructions.

%Third, our results are the first known hardness of approximation results for a quantum complexity class. Given that whether a quantum PCP theorem holds remains a challenging open question, it is all the more interesting that one is able to prove such results in a quantum setting using an entirely different tool, namely that of dispersers.

We leave a number of questions open: What other natural problems are complete for $\cqs$ or higher levels? Can we say anything non-trivial about the relationship between $\st$ and $\cqs$? How do the different classes $\cqs$, $\qcs$, $\qqs$, and $\ccs$ relate to each other? Where do the quantum hierarchies obtained by extending $\cqs$ to higher levels sit relative to known complexity classes? We hope the answers to such questions will help establish classes like $\cqs$ as fundamental concepts in the setting of quantum computational complexity.

\paragraph{Organization of this paper:}
We begin in Section~\ref{scn:def} by formally defining the classes and problems studied in this paper. In Section~\ref{scn:approxhardness}, we prove that QSSC and QIRR are hard to approximate for $\cqs$ within $N^\epsilon$; this is further improved in Section~\ref{scn:improvements}. Section~\ref{scn:QCMA} presents hardness of approximation results for QCMA. We close in Section~\ref{scn:anotherproblem} by showing $\cqs$-completeness of $\cqslh{3}$ and $\cqs$-hardness of approximation for $\cqslhmin$.

% using simple gap amplification techniques and improved disperser construction

%Intuitively, our definition of $\cqs$~aims to capture the following notion: Given a promise problem $A=\set{\ayes,\ano}$, we say $A\in\cqs$ if there exists a polynomial size quantum circuit $V$ such that whenever $x\in \ayes$, there \emph{exists} a classical string $\ket{\psi_x}_c$ such that \emph{for all} quantum states $\ket{\psi}_q$, $V$ accepts $\ket{\psi_x}_c\otimes\ket{\psi}_q$. To make this rigorous, we use the ``oracle'' definition of $\st$ (see, e.g.~\cite{AB}), which says that $\st=\NP^{\NP}$.

\section{Definitions}\label{scn:def}
We now set our notation, define relevant classes and problems, and state lemmas which prove useful in our analysis.

Beginning with notation, the term $A\succeq B$ means operator $A-B$ is positive semidefinite. The spectral norm of $A$ is $\snorm{A} := \max\{\norm{A\ket{v}}_2 : \norm{\ket{v}}_2 = 1\}$. The projector onto space $\spa{S}$ is $\Pi_{\spa{S}}$. The set of natural numbers is $\nats$. For convenience, we define $\B := \complex^2$, and for a set $S$ of matrices over $\complex$, let $H_S:= \sum_{H_i\in S} H_i$.

We next give a formal definition of $\cqs$. Here, a promise problem is a pair $A=(\ayes,\ano)$ such that $\ayes,\ano\subseteq\set{0,1}^\ast$ and $\ayes \cap \ano = \emptyset$.
\begin{definition}[$\cqs$]\label{def:cqs}
    Let $A=(\ayes,\ano)$ be a promise problem. We say that $A\in\cqs$ if there exist polynomially bounded functions $t,c,q:\nats\mapsto\nats$, and a deterministic Turing machine $M$ acting as follows. For every $n$-bit input $x$, $M$ outputs in time $t(n)$ a description of a quantum circuit $V_x$ such that $V_x$ takes in a $c(n)$-bit proof $\ket{c}$, a $q(n)$-qubit proof $\ket{q}$, and outputs a single qubit. We say $V_x$ \emph{accepts} $\ket{c}\ket{q}$ if measuring its output qubit in the computational basis yields $1$. Then:
    \begin{itemize}
     \item Completeness: If $x\in \ayes$, then $\exists$ $\ket{c}$ such that $\forall$ $\ket{q}$, $V_x$ accepts $\ket{c}\ket{q}$ with probability $\geq2/3$.
     \item Soundness: If $x\in \ano$, then $\forall$ $\ket{c}$, $\exists$ $\ket{q}$ such that ${V_x}$ rejects $\ket{c}\ket{q}$ with probability $\geq2/3$.
    \end{itemize}
\end{definition}

\noindent Note that the completeness and soundness parameters can be amplified to values exponentially close to $1$. Specifically, we use the standard approach of repeating $V_x$ polynomially many times in parallel, except that we only need one copy of the classical register $\mathcal{C}$ for all parallel runs. For any value $c$ placed in $\mathcal{C}$, we think of it as being ``hardwired'' into $V_x$, thus obtaining a quantum verification circuit $V_{x,c}$, which we now apply in parallel to the many copies of the quantum proof $\ket{q}$. The standard weak error reduction analysis for QMA now applies (see, e.g.~\cite{AN02}). Throughout this paper, we refer to this as \emph{error reduction}.

We next define the terms $\X$ circuit, monotone set, QMW, QSSC, and QIRR.

\begin{definition}[$\X$ circuit]\label{def:cQMA}
    Let $n,m\in\nats^+$. A $\X$ circuit $V$ is a quantum circuit receiving $n$ bits in an \emph{INPUT} register and $m$ qubits in a \emph{CHOICE} register, and outputting a single qubit $\ket{a}$. We say:
    \begin{itemize}
        \item $V$ \emph{accepts}  $x\in\set{0,1}^n$ in INPUT if for all $\ket{y}\in\B^{\otimes m}$ in CHOICE, measuring $\ket{a}$ in the computational basis yields $1$ with probability at least $2/3$.
        \item $V$ \emph{rejects}  $x\in\set{0,1}^n$ in INPUT if there exists a $\ket{y}\in\B^{\otimes m}$ in CHOICE such that measuring $\ket{a}$ in the computational basis yields $0$ with probability at least $2/3$.
    \end{itemize}
\end{definition}

\begin{definition}[Monotone set]
    A set $S\subseteq\set{0,1}^n$ is called \emph{monotone} if for any $x\in S$, any string obtained from $x$ by flipping one or more zeroes in $x$ to one is also in $S$.
\end{definition}

\begin{definition}[\QMMWW~(QMW)]\label{def:qmmww}
    Given a $\X$ circuit $V$ accepting exactly a non-empty monotone set $S\subseteq\set{0,1}^n$, and integer thresholds $0\leq g\leq g'\leq n$, output:
    \begin{itemize}
        \item YES if there exists an $x\in\set{0,1}^n$ of Hamming weight at most $g$ accepted by $V$.
        \item NO if all $x\in\set{0,1}^n$ of Hamming weight at most $g'$ are rejected by $V$.
    \end{itemize}
\end{definition}

\noindent Note that clearly ${\rm QMW}\in\cqs$.

\begin{definition}[\QSSC~(QSSC)]\label{def:qssc}
    Let $S:=\set{H_i}$ be a set of $5$-local Hamiltonians $H_i$ acting on $N$ qubits such that $\sum_{H_i\in S} H_i\succeq \alpha I$ for $\alpha>0$. Then, given $\beta\in\reals$ such that $\alpha-\beta \geq 1$ and integer thresholds $0\leq g\leq g'$, output:
\begin{itemize}
    \item YES if there exists $S^\prime\subseteq S$ of cardinality at most $g$ such that $\sum_{H_i\in S^\prime}H_i\succeq\alpha I$.
    \item NO if for all $S^\prime\subseteq S$ of size at most $g'$, $\sum_{H_i\in S^\prime}H_i$ has an eigenvalue at most $\beta$.
\end{itemize}
    Any $S'$ satisfying the YES case is called a \emph{cover}.
\end{definition}

\noindent Note that requiring $\alpha-\beta\in\Omega(1)$ above is without loss of generality, as any instance of QSSC with gap $1/p(N)$ for $p$ a polynomially bounded function can be modified to obtain an equivalent instance with constant gap by multiplying each $H_i$ by $p(N)$~\cite{W09}.

\begin{definition}[\QIRR~(QIRR)]\label{def:qirr}
  Given $S:=\set{c_iH_i}$, where each $H_i$ acts on $N$ qubits and is a tensor product of $5$-local orthogonal projection operators and $c_i\geq 0$ are real. Then, given $\alpha,\beta\in\reals$ such that $\alpha-\beta\geq 1$, and integer thresholds $0\leq g\leq g'$, output:
\begin{itemize}
    \item YES if there exists $S'\subseteq S$ of cardinality at most $g$ such that for all $\ket{\psi}\in\B^{\otimes N}$:
         \begin{itemize}
            \item If $\trace(H_S\ketbra{\psi}{\psi})\geq \alpha$, then $\trace(H_{S'} \ketbra{\psi}{\psi})\geq \alpha$, and
            \item If $\trace(H_S\ketbra{\psi}{\psi})\leq \beta$, then $\trace(H_{S'} \ketbra{\psi}{\psi})\leq \beta$.
         \end{itemize}
    \item NO if for all $S'\subseteq S$ of cardinality at most $g'$, there exists $\ket{\psi}\in\B^{\otimes N}$ with $\trace(H_S\ketbra{\psi}{\psi})\geq \alpha$ and $\trace(H_{S'} \ketbra{\psi}{\psi})\leq \beta$.
\end{itemize}
\end{definition}

\noindent Roughly, QSSC asks how many local interaction terms in a local Hamiltonian one can discard while maintaining the value of the worst assignment. This is intended to mimic the idea of maintaining a tautology for a $3$-DNF formula in SSC classically. Analogous to the relationship between SSC and IRR, QIRR allows possibly non-local Hamiltonian terms so long as they have a succinct description (this generalizes the use of superconstant arity in IRR) and are projectors up to scalar multiplication (this generalizes the requirement that each term $t_i$ in IRR is an AND of variables). QIRR then asks how many interaction terms can be discarded in a sum of such Hamiltonian terms while ensuring that any assignment $\ket{\psi}$ achieves approximately the same value on both the original and modified Hamiltonians.

Next, the key tool enabling the creation of a gap in our reductions is a \emph{disperser} (see e.g.,~\cite{SZ94,TUZ07}).

\begin{definition}[Disperser]\label{def:disperser}
    Let $G=(L,R,E)$ be a bipartite graph with $\abs{L}=2^n$, $\abs{R}=2^m$ and left-degree $2^d$. Then, $G$ is called a \emph{$(k,\epsilon)$-disperser} if, for any subset $L'\subseteq L$ of size $\abs{L'}\geq 2^k$, $L'$ has at least $(1-\epsilon)\abs{R}$ neighbors in $R$. Moreover, if for any pair $(v\in L,i)$, one can compute the $i$th neighbor of $v$ in time polynomial in $n$, then the disperser is called \emph{explicit}.
\end{definition}

Finally, we recall useful known facts from Hamiltonian complexity theory. We first state two lemmas used to bound the eigenvalues of a pair of non-commuting operators.

\begin{lemma}[Kempe, Kitaev, Regev~\cite{KKR06}, Projection Lemma]\label{l:projlemma}
    Let $Y=Y_1+Y_2$ act on Hilbert space $\mathcal{H}=\mathcal{S}+\mathcal{S}^\perp$ for Hamiltonians $Y_1$ and $Y_2$. Denote the zero eigenspace of $Y_2$ as $\mathcal{S}$, and assume the $Y_2$ eigenvectors in $\mathcal{S}^\perp$ have eigenvalue at least $J>2\snorm{Y_1}$. Then, for $\lambda(Y)$ the smallest eigenvalue of $Y$ and $Y|_\spa{S}:=\Pi_{\spa{S}}Y\Pi_{\spa{S}}$,
    \[
        \lambda(Y_1|_{\mathcal{S}})-\frac{\snorm{Y_1}^2}{J-2\snorm{Y_1}}\leq \lambda(Y)\leq\lambda(Y_1|_{\spa{S}})\enspace.
    \]
\end{lemma}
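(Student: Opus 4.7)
The plan is to prove the two inequalities separately. The upper bound $\lambda(Y) \leq \lambda(Y_1|_\spa{S})$ follows directly from the variational principle: if $\ket{\phi}\in\spa{S}$ is a unit eigenvector of $Y_1|_\spa{S}$ achieving the minimum eigenvalue $\lambda(Y_1|_\spa{S})$, then $Y_2\ket{\phi}=0$ since $\spa{S}$ is the zero eigenspace of $Y_2$, so $\bra{\phi}Y\ket{\phi}=\bra{\phi}Y_1\ket{\phi}=\lambda(Y_1|_\spa{S})$, and by Rayleigh--Ritz $\lambda(Y)$ cannot exceed this.

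For the lower bound, I would take a unit ground state $\ket{\psi}$ of $Y$ and decompose it as $\ket{\psi}=\alpha\ket{\phi_1}+\beta\ket{\phi_2}$, with $\ket{\phi_1}\in\spa{S}$, $\ket{\phi_2}\in\spa{S}^\perp$ unit vectors and $|\alpha|^2+|\beta|^2=1$. Using $Y_2\ket{\phi_1}=0$ and expanding,
\[
\lambda(Y) = |\alpha|^2 \bra{\phi_1}Y_1\ket{\phi_1} + 2\mathrm{Re}\bigl(\alpha^*\beta\bra{\phi_1}Y_1\ket{\phi_2}\bigr) + |\beta|^2\bra{\phi_2}Y_1\ket{\phi_2} + |\beta|^2\bra{\phi_2}Y_2\ket{\phi_2}.
\]
I would then bound each piece separately: $\bra{\phi_1}Y_1\ket{\phi_1}\geq\lambda(Y_1|_\spa{S})$; $\bra{\phi_2}Y_1\ket{\phi_2}\geq-\snorm{Y_1}$; $|\bra{\phi_1}Y_1\ket{\phi_2}|\leq\snorm{Y_1}$ by Cauchy--Schwarz; and, by the assumed spectral gap of $Y_2$ on $\spa{S}^\perp$, $\bra{\phi_2}Y_2\ket{\phi_2}\geq J$. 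Combining these yields the preliminary estimate $\lambda(Y)\geq |\alpha|^2 \lambda(Y_1|_\spa{S}) - 2|\alpha||\beta|\snorm{Y_1} + |\beta|^2(J-\snorm{Y_1})$.

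The main obstacle is controlling the negative cross term $-2|\alpha||\beta|\snorm{Y_1}$, and this is where the hypothesis $J>2\snorm{Y_1}$ becomes essential. First I would use $\lambda(Y_1|_\spa{S})\leq\snorm{Y_1}$ (which holds since $\snorm{Y_1|_\spa{S}}\leq\snorm{Y_1}$) to rewrite $|\alpha|^2\lambda(Y_1|_\spa{S})=\lambda(Y_1|_\spa{S})-|\beta|^2\lambda(Y_1|_\spa{S})\geq \lambda(Y_1|_\spa{S})-|\beta|^2\snorm{Y_1}$, transforming the bound into
\[
\lambda(Y) \geq \lambda(Y_1|_\spa{S}) + |\beta|^2(J - 2\snorm{Y_1}) - 2|\alpha||\beta|\snorm{Y_1}.
\]
Then I would apply the weighted AM--GM inequality
\[
2|\alpha||\beta|\snorm{Y_1} \leq |\beta|^2(J-2\snorm{Y_1}) + \frac{|\alpha|^2\snorm{Y_1}^2}{J-2\snorm{Y_1}},
\]
whose validity hinges precisely on $J-2\snorm{Y_1}>0$. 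The $|\beta|^2$ contributions cancel exactly, and bounding $|\alpha|^2\leq 1$ gives the claimed $\lambda(Y)\geq\lambda(Y_1|_\spa{S})-\snorm{Y_1}^2/(J-2\snorm{Y_1})$.
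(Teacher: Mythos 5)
Your proof is correct. Note that the paper does not prove this lemma at all --- it is imported verbatim from Kempe, Kitaev, and Regev \cite{KKR06} --- so there is no in-paper argument to compare against; your derivation (decompose the ground state of $Y$ into its $\spa{S}$ and $\spa{S}^\perp$ components, bound the cross term by Cauchy--Schwarz, and absorb it via weighted AM--GM using $J-2\snorm{Y_1}>0$) is precisely the standard proof from that reference, and every step, including the use of $\lambda(Y_1|_{\spa{S}})\leq\snorm{Y_1}$ to trade $|\alpha|^2\lambda(Y_1|_{\spa{S}})$ for $\lambda(Y_1|_{\spa{S}})-|\beta|^2\snorm{Y_1}$, checks out.
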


\begin{lemma}[Kitaev, Shen, Vyalyi~\cite{KSV02}, Geometric Lemma, Lemma 14.4]\label{l:pluslemma}
    Let $A_1,A_2\succeq 0$, such that the minimum \emph{non-zero} eigenvalue of both operators is lower bounded by $v$. Assume that the null spaces $\nl$ and $\nll$ of $A_1$ and $A_2$, respectively, have trivial intersection, i.e.\ $\nl\cap\nll=\set{\vec{0}}$. Then
    \begin{equation}
        A_1+A_2 \succeq 2v\sin^2\frac{\alpha(\nl,\nll)}{2}I\enspace,
    \end{equation}
    where the \emph{angle} $\alpha(\spa{X},\spa{Y})$ between $\spa{X}$ and $\spa{Y}$ is defined over unit vectors $\ket{x}$ and $\ket{y}$ as $\cos\left[\angle(\spa{X},\spa{Y})\right] := \max_{\ket{x}\in\spa{X},\ket{y}\in\spa{Y}}\abs{\braket{x}{y}}$.
\end{lemma}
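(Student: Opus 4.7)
Since the claim is an operator inequality of the form $A_1+A_2\succeq\lambda I$, it suffices to establish $\bra{\psi}(A_1+A_2)\ket{\psi}\geq 2v\sin^2(\alpha(\nl,\nll)/2)$ for every unit vector $\ket{\psi}$. My plan is to peel off the spectral content of $A_1,A_2$ using their null-space structure in one line, reducing the lemma to a purely geometric inequality about how much mass $\ket{\psi}$ can simultaneously place on $\nl$ and $\nll$; I then close that inequality by working in the at most two-dimensional subspace spanned by its projections onto $\nl$ and $\nll$.

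Step 1 (spectral input). Decomposing $\ket{\psi}=\Pi_{\mathcal{L}_i}\ket{\psi}+\Pi_{\mathcal{L}_i^\perp}\ket{\psi}$ and using that $A_i\succeq 0$ vanishes on $\mathcal{L}_i$ and has minimum nonzero eigenvalue at least $v$ on $\mathcal{L}_i^\perp$ (so $A_i\succeq v\,\Pi_{\mathcal{L}_i^\perp}$), I get
\[
\bra{\psi}A_i\ket{\psi}\;\geq\;v\,\norm{\Pi_{\mathcal{L}_i^\perp}\ket{\psi}}^2\;=\;v\bigl(1-\norm{\Pi_{\mathcal{L}_i}\ket{\psi}}^2\bigr).
\]
Writing $a:=\norm{\Pi_{\nl}\ket{\psi}}$ and $b:=\norm{\Pi_{\nll}\ket{\psi}}$ and summing over $i\in\{1,2\}$, this yields $\bra{\psi}(A_1+A_2)\ket{\psi}\geq v(2-a^2-b^2)$. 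Since $2\sin^2(\alpha/2)=1-\cos\alpha$, the lemma reduces to the geometric inequality
\[
a^2+b^2\;\leq\;1+\cos\alpha(\nl,\nll).\qquad(\star)
\]

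Step 2 (the geometric inequality). $(\star)$ is trivial if $a=0$ or $b=0$, so assume both are positive and define the unit vectors $\ket{u_i}:=\Pi_{\mathcal{L}_i}\ket{\psi}/\norm{\Pi_{\mathcal{L}_i}\ket{\psi}}\in\mathcal{L}_i$, which satisfy $\braket{u_1}{\psi}=a$ and $\braket{u_2}{\psi}=b$. Let $\gamma:=\braket{u_1}{u_2}$, so $|\gamma|\leq\cos\alpha$ by the definition of the angle. The projection $\ket{\phi}$ of $\ket{\psi}$ onto $\operatorname{span}(\ket{u_1},\ket{u_2})$ has $\braket{u_1}{\phi}=a$, $\braket{u_2}{\phi}=b$, and $\norm{\ket{\phi}}^2\leq 1$; expanding $\ket{\phi}$ in the orthonormal basis $\{\ket{u_1},\ket{u_1^\perp}\}$ of this 2D subspace (with $\ket{u_2}=\gamma\ket{u_1}+\sqrt{1-|\gamma|^2}\ket{u_1^\perp}$) and imposing the constraints on $\braket{u_2}{\phi}$ and on $\norm{\ket{\phi}}^2$ gives, after a short computation,
\[
a^2+b^2-2ab\operatorname{Re}\gamma\;\leq\;1-|\gamma|^2.
\]
If $\operatorname{Re}\gamma\geq 0$, applying $2ab\leq a^2+b^2$ on the left and factoring $(1-\operatorname{Re}\gamma)$ out of $1-(\operatorname{Re}\gamma)^2\geq 1-|\gamma|^2$ on the right yields $(1-\operatorname{Re}\gamma)(a^2+b^2)\leq(1-\operatorname{Re}\gamma)(1+\operatorname{Re}\gamma)$, and dividing by $1-\operatorname{Re}\gamma>0$ gives $(\star)$; if $\operatorname{Re}\gamma<0$, then already $a^2+b^2\leq a^2+b^2-2ab\operatorname{Re}\gamma\leq 1-|\gamma|^2\leq 1\leq 1+\cos\alpha$.

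The main obstacle I foresee is the two-dimensional bookkeeping in Step 2 (choosing bases and handling the complex phase of $\gamma$ cleanly), together with verifying that the trivial-intersection hypothesis $\nl\cap\nll=\{\vec{0}\}$ really does force $\cos\alpha<1$, which is needed both to make $\sin^2(\alpha/2)$ strictly positive and to justify the division by $1-\operatorname{Re}\gamma$. Everything else is a one-line application of the spectral hypotheses on $A_1$, $A_2$. I expect the resulting argument to coincide essentially with the original proof in Kitaev--Shen--Vyalyi.
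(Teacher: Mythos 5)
The paper does not actually prove this statement---it is quoted as Lemma 14.4 of Kitaev--Shen--Vyalyi---so there is no in-paper proof to compare against; your argument is correct and is essentially the standard one. Step 1's bound $A_i\succeq v\,\Pi_{\mathcal{L}_i^\perp}$ reduces the lemma to showing $\Pi_{\nl}+\Pi_{\nll}\preceq(1+\cos\alpha)I$, which your two-dimensional computation in Step 2 establishes, and the worry you flag is easily dispatched: in finite dimensions the maximum defining $\cos\alpha$ is attained, and $\abs{\braket{x}{y}}=1$ would force $\ket{y}$ proportional to $\ket{x}$ and hence a nonzero vector in $\nl\cap\nll$, so trivial intersection gives $\cos\alpha<1$ and in particular $\operatorname{Re}\gamma<1$, justifying the division.
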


We next recall Kitaev's circuit-to-Hamiltonian construction~\cite{KSV02}. Given a $\cqs$ verification circuit $V=V_L\cdots V_1$ (where without loss of generality, each $V_i$ is a one- or two-qubit unitary) acting on $n$ proof bits (register $A$), $m$ proof qubits (register $B$), and $p$ ancilla qubits (register $C$), this construction outputs a $5$-local Hamiltonian $H$ acting on $A\otimes B\otimes C\otimes D$, where $D$ is a clock register consisting of $L$ qubits.  We then have $H:=\hin+\hout+\hprop+\hstab$, for \emph{penalty} terms as defined below:
\begin{eqnarray*}
    \hin&:=&I_{A,B}\otimes
    \left(\sum_{i=1}^{p} \ketbra{1}{1}_{C_i}\right)\otimes \ketbra{0}{ 0}_D\\
    \hout&:=&I_A\otimes\ketbra{0}{0}_{B_1}\otimes
     I_C\otimes \ketbra{ L}{ L}_D\\
    \hprop &:=& \sum_{j=1}^{L} H_j {\rm,~where~ }H_j{\rm ~is ~defined~ as}\\
    && \hspace{-10mm}-\frac{1}{2}V_j\otimes\ketbra{ j}{ {j-1}}_D -\frac{1}{2}V_j^\dagger\otimes\ketbra{{j-1}}{ j}_D +\frac{1}{2}I\otimes(\ketbra{ j}{ j}+\ketbra{ {j-1}}{ {j-1}})_D\\
    \hstab&:=&I_{A,B,C}\otimes\sum_{i=1}^{L-1}\ketbra{01}{01}_{D_i,D_{i+1}}.
\end{eqnarray*}
Above, the notation $A_i$ refers to the $i$th qubit of register $A$ (similarly for $B$, $C$, $D$). For any prospective proof $\ket{\psi}$ in $\trace(H\ketbra{\psi}{\psi})$, each penalty term has the following effect on the structure of $\ket{\psi}$: $\hin$ ensures that at time zero, the ancilla register is set to zero as it should be for $V$. $\hout$ ensures that at time step $L$ of $V$, measuring the output qubit causes acceptance with high probability. $\hprop$ forces all steps of $V$ appear in superposition in $\ket{\psi}$ with equal weights. Finally, note that for $\hin$, $\hout$, and $\hprop$ above, time $t$ in clock register $D$ is implicitly encoded in unary as $\ket{1^t0^{L-t}}$ (for $\hstab$ above, register $D$ is already explicitly written in unary); $\hstab$ is thus needed to prevent invalid encodings of time steps from appearing in $D$.

We use two important properties of this construction. First, the null space of $\hin+\hprop+\hstab$ is the space of \emph{history states}, which for arbitrary $\ket{\psi}_{A,B}$ are defined as
\begin{equation}\label{eqn:hist}
 \histstate:=\frac{1}{\sqrt{L+1}}\sum_{i=0}^L V_i\cdots V_1 \ket{\psi}_{A,B}\otimes\ket{0}_C\otimes\ket{i}_D.
\end{equation}
For $\cqs$ circuits $V$, it is convenient to define for $c\in\set{0,1}^n$ and $\ket{q}\in\B^m$ the shorthand $\histstatecq{c}{q}:=\histstate$ for $\ket{\psi}=\ket{c}\ket{q}$. The second important property of $H$ we use is that its spectrum is related to $V$ as follows.
\begin{lemma}[Kitaev~\cite{KSV02}]\label{l:kitaev}
    The construction above maps $V$ to $(H,a,b)$ satisfying:
    \begin{itemize}
        \item If there exists a proof $\ket{\psi}$ accepted by $V$ with probability at least $1-\epsilon$, then $\histstate$ achieves $Tr(H\histstateketbra)\leq a$ for $a := \epsilon/(L+1)$.
        \item If $V$ rejects all proofs $\ket{\psi}$, then $H\succeq b I$ for $b\in\Omega\left(\frac{1-\sqrt{\epsilon}}{L^3}\right)$.
    \end{itemize}
\end{lemma}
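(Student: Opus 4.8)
I would prove the two cases of Lemma~\ref{l:kitaev} separately, the soundness case being the only one that requires real work. For completeness, suppose $\ket{\psi}=\ket{\psi}_{A,B}$ is accepted by $V$ with probability at least $1-\epsilon$ and take the associated history state $\histstate$ of~\eqref{eqn:hist} as the witness. A direct check shows that $\histstate$ is annihilated by $\hin+\hprop+\hstab$: at time $0$ the ancilla register $C$ holds $\ket{0}$, which kills $\hin$; the clock register only ever holds valid unary strings $\ket{1^t0^{L-t}}$, which kills $\hstab$; and $\histstate$ is by construction the uniform superposition over the computational history that is left fixed by the one-step propagation, which kills $\hprop$. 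Hence only $\hout$ contributes, and since $\histstate$ places weight $1/(L+1)$ on the time-$L$ slice $V\ket{\psi}\otimes\ket{0}_C$, we get $\Tr(\hout\histstateketbra)=\tfrac{1}{L+1}\Pr[\,V\text{ rejects }\ket{\psi}\,]\leq\tfrac{\epsilon}{L+1}=a$. This part is routine.

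For soundness I would write $H=A_1+A_2$ with $A_1:=\hin+\hprop+\hstab$ and $A_2:=\hout$, and apply the Geometric Lemma (Lemma~\ref{l:pluslemma}). The null space $\nl$ of $A_1$ is exactly the space of history states $\{\histstate:\ket{\psi}_{A,B}\in\B^{\otimes(n+m)}\}$, while the null space $\nll$ of the projector $A_2=\hout$ is spanned by the states whose clock is not at time $L$ together with the states that at time $L$ have the output qubit in $\ket{1}$. These intersect trivially, since a nonzero history state in $\nll$ would be the history state of a proof accepted by $V$ with certainty, contradicting the soundness promise (``$V$ rejects all proofs'', i.e.\ every proof is accepted with probability at most $\epsilon<1$). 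For the angle, any history state $\ket{x}=\histstate$ satisfies $\bra{x}A_2\ket{x}=\tfrac{1}{L+1}\Pr[\,V\text{ rejects }\ket{\psi}\,]\geq\tfrac{1-\epsilon}{L+1}$; since $A_2$ is the orthogonal projector onto $\nll^{\perp}$, this gives $\|\Pi_{\nll}\ket{x}\|^{2}\leq 1-\tfrac{1-\epsilon}{L+1}$, hence $\cos^{2}\alpha(\nl,\nll)\leq 1-\tfrac{1-\epsilon}{L+1}$ and $\sin^{2}\tfrac{\alpha(\nl,\nll)}{2}=\tfrac{1-\cos\alpha(\nl,\nll)}{2}\in\Omega\!\big(\tfrac{1-\epsilon}{L}\big)$. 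Lemma~\ref{l:pluslemma} then gives $H\succeq 2v\sin^{2}\tfrac{\alpha(\nl,\nll)}{2}\,I$ for $v$ a common lower bound on the minimum nonzero eigenvalues of $A_1$ and $A_2$; the value for the projector $A_2$ is $1$, so once $v\in\Omega(1/L^{2})$ we get $H\succeq\Omega\!\big(\tfrac{1-\epsilon}{L^{3}}\big)I$, which is the claimed bound up to replacing $1-\epsilon$ with $1-\sqrt{\epsilon}$ (absorbed by a coarser version of this estimate, or by one round of error reduction).

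The main obstacle --- indeed the only genuinely technical point --- is the lower bound $v\in\Omega(1/L^{2})$ on the minimum nonzero eigenvalue of $A_1=\hin+\hprop+\hstab$. Because all three terms vanish simultaneously on the history states, the Geometric Lemma does not apply here and a direct spectral computation is called for. I would use the standard Kitaev change of basis $W=\sum_{t=0}^{L}(V_t\cdots V_1)\otimes\ketbra{1^t0^{L-t}}{1^t0^{L-t}}_D$, extended by the identity on invalid-clock configurations, under which $\hprop$ restricted to the valid-clock subspace becomes $\tfrac12\,I\otimes\Delta$ with $\Delta$ the graph Laplacian of the path on $L+1$ vertices; the smallest nonzero eigenvalue of $\Delta$ is $1-\cos\tfrac{\pi}{L+1}=\Theta(1/L^{2})$ and its null vector is uniform, so $\ker\hprop$ on that subspace is precisely the history states. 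One then observes that $\hstab$ costs at least $1$ on every invalid-clock configuration and that $\hin$, restricted to history states, costs at least $\tfrac{1}{L+1}$ unless the ancilla at time $0$ is $\ket{0^p}$, and checks --- this is the delicate bookkeeping --- that combining these three non-commuting penalties degrades the bound only to $\Omega(1/L^{2})$. This is exactly Kitaev's original calculation, which I would cite~\cite{KSV02}; everything else in the lemma follows from the two clean applications of the Geometric Lemma described above.
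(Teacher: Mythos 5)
The paper itself does not prove this lemma; it cites Kitaev~\cite{KSV02}, so I am assessing your argument as a standalone reconstruction. Your completeness half and the overall structure of the soundness half are fine, but you take a genuinely different decomposition in the Geometric Lemma than Kitaev (and this paper's other lemmas) do, and that choice shifts where the real work lies. Kitaev's route takes $A_1 = W(\hin+\hout)W^\dagger$ and $A_2 = W\hprop W^\dagger$: then $A_1$ is a sum of commuting projectors with minimum nonzero eigenvalue exactly $1$, $A_2$ has minimum nonzero eigenvalue $\Theta(1/L^2)$ from the spectrum of $E_D$, so $v\in\Omega(1/L^2)$ is immediate, and the entire difficulty is pushed into the angle computation, which is where $1-\sqrt{\epsilon}$ appears. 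Your route takes $A_1 = \hin+\hprop+\hstab$ and $A_2 = \hout$: your angle computation is then cleaner and in fact gives the slightly stronger dependence $1-\epsilon$, but you now need $v(\hin+\hprop+\hstab)\in\Omega(1/L^2)$, and this is \emph{not} ``exactly Kitaev's original calculation'' --- Kitaev never needs the minimum nonzero eigenvalue of $\hin+\hprop+\hstab$, because he never isolates $\hout$ this way.

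That claimed $\Omega(1/L^2)$ is actually true, but it does not come for free. If you try to get it the way this paper gets a similar bound (Lemma~\ref{l:y2bound}), namely by lifting the null space and applying the Geometric Lemma once more to $\hin$ versus $\hprop$, you only obtain $\Omega(1/L^3)$, and feeding that into your outer application of the Geometric Lemma yields $b\in\Omega\bigl(\tfrac{1-\epsilon}{L^4}\bigr)$, a factor of $L$ short of the lemma's $\Omega\bigl(\tfrac{1-\sqrt{\epsilon}}{L^3}\bigr)$. To rescue the $1/L^3$ scaling within your decomposition you would have to diagonalize $\hin+\hprop$ directly after the $W$ change of basis: sector by sector in the ancilla basis one gets operators of the form $|z|\,\ketbra{0}{0}_D+E_D$, and for $|z|\geq 1$ the eigenvectors $c_j=\sin\bigl(\theta(j+\tfrac12)\bigr)$ with $\theta=\tfrac{(2k+1)\pi}{2(L+1)}$ show the smallest eigenvalue is $1-\cos\tfrac{\pi}{2(L+1)}=\Theta(1/L^2)$. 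This explicit computation (a Dirichlet-type modification of the path Laplacian) is the missing step; it is not supplied by the Geometric Lemma and is not what Kitaev's published argument proves, so citing it away is the one genuine gap in an otherwise correct and somewhat cleaner alternative proof.
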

\section{Hardness of approximation for $\cqs$}\label{scn:approxhardness}

We now show hardness of approximation for $\cqs$ for the problems QMW, QSSC, and QIRR. We begin with a gap-introducing reduction from an arbitary problem in $\cqs$ to QMW. We remind the reader that the hardness ratios obtained here are further strengthened in Section~\ref{scn:improvements}.

\begin{theorem}\label{thm:qmmwwHard}
    There exists a polynomial time reduction which, given an instance of an arbitrary $\cqs$ problem, outputs an instance of QMW with thresholds $g$ and $g'$ satisfying $g'/g\in \Theta(N^\epsilon)$ for some $\epsilon>0$, where $N$ is the encoding size of the QMW instance.
\end{theorem}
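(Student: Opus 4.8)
The plan is to reproduce, in the quantum setting, Umans' disperser-based \emph{gap-introducing} reduction from a $\st$ problem to MMWW~\cite{U99,HSUL02}. A useful structural observation is that the universal quantifier of a $\cqs$ instance ranges over \emph{quantum} proofs, so it can be simulated \emph{directly} by the (quantum) CHOICE register of a $\X$ circuit rather than by a conjunction over exponentially many classical assignments; this is precisely why the natural target is a $\X$-circuit problem. Concretely, take an arbitrary $A=(\ayes,\ano)\in\cqs$ with verifier family $\set{V_x}$, classical proof length $c(n)$ and quantum proof length $q(n)$. The first step is error reduction: for a polynomial $r=r(n)$ to be fixed later, amplify $V_x$ to $\tilde V_x$ so that if $x\in\ayes$ some classical proof $c^\star$ is accepted together with \emph{every} quantum proof with probability at least $1-2^{-r}$, while if $x\in\ano$ then \emph{every} classical proof is rejected, for a suitable quantum proof, with probability at least $1-2^{-r}$. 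The exponentially small two-sided error is what will let the QMW circuit we build inherit a genuine $2/3$ promise and accept \emph{exactly} a monotone set on the instances output by the reduction.

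Next, fix an explicit $(k,\epsilon)$-disperser $G=(L,R,E)$ whose left vertices are the candidate classical proofs, $\abs{L}=2^{c(n)}$, with $\abs{R}=2^m$ and left-degree $2^d$, where $m,d\in\Theta(\log n)$ (so that $\abs{R}$ and the degree are polynomial in $n$), with $2^k$ polynomial in $n$ and $2^d<2^k<(1-\epsilon)\abs{R}$, and $\epsilon<1$ a fixed constant; explicit dispersers with such parameters are known~\cite{SZ94} (the sharper constructions of~\cite{TUZ07} are invoked only in Section~\ref{scn:improvements}, to boost the final ratio to $N^{1-\epsilon}$). From this data the reduction outputs, in polynomial time, a $\X$ circuit $V'$ with a word register of $\abs{R}$ bits — a word $w$ being read as a subset $T\subseteq R$ — and a CHOICE register carrying a computational-basis label of a left vertex $v\in L$ together with a quantum proof $\ket{q}\in\B^{\otimes q(n)}$. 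Following Umans' word gadget, adapted here, $V'$ is designed so that: its set of accepted words is monotone and non-empty; the \emph{canonical} word $w_c$ attached to a classical proof $c$ — essentially the characteristic vector of the neighborhood $N(v_c)$, of Hamming weight $2^d$ — is accepted iff $c$ is a good proof for $\tilde V_x$; and, by the disperser property, a word of Hamming weight at most $2^k$ is accepted only if it encodes $w_c$ for some good $c$. The verification itself only ever runs $\tilde V_x$ on $(c_v,\ket q)$ after a polynomial-time consistency test between $v$ and $T$, so that the adversarial pair $(v,\ket q)$ in CHOICE realizes exactly the universal quantifier ``$\forall$ left vertex, $\forall$ quantum proof''; note this is polynomial time even though $\abs{L}$ is exponential, since $T$ and each neighborhood $N(v)$ have polynomial size. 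Finally, set the QMW thresholds $g:=2^d$ and $g':=2^k$.

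Correctness then splits into two cases. If $x\in\ayes$, the canonical word $w_{c^\star}$ has Hamming weight $g=2^d$ and is accepted by $V'$ — here one invokes the near-perfect completeness of $\tilde V_x$ on $(c^\star,\ket q)$ for every $\ket q$, so that for every CHOICE $(v,\ket q)$ the circuit accepts with probability $\geq 2/3$ — hence the QMW instance is a YES instance. If $x\in\ano$, any word of Hamming weight at most $g'=2^k$ is, by the gadget and the disperser property, not of the form $w_c$ for a \emph{good} $c$, and since a NO instance has no good classical proof, some CHOICE $(v,\ket q)$ — take $\ket q$ to be the rejecting quantum proof guaranteed by soundness of $\tilde V_x$ on the relevant $c_v$ — makes $V'$ reject with probability $\geq 2/3$; hence every light word is rejected and the QMW instance is a NO instance. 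It remains to count sizes: the encoding size $N$ of the QMW instance is polynomial in $n$ (dominated by $\abs{R}$ and the size of $V'$), and by the parameter choices $g'/g=2^{k-d}=\Theta(N^{\epsilon})$ for some constant $\epsilon>0$; any polynomial gap suffices here, as Section~\ref{scn:improvements} amplifies it. I expect the main obstacle to be the design and analysis of the word gadget: one must reconcile that the accepted set be \emph{exactly} a monotone set (forcing both the error reduction above and a gadget monotone in the word), that the $\exists$-classical/$\forall$-quantum alternation of $\cqs$ be matched by the $\exists$-light-word/$\forall$-CHOICE alternation of QMW while keeping $V'$ of polynomial size despite $\abs{L}$ being exponential, and that a word of weight at most $2^k$ be unable to ``cheat''; checking, via the disperser, that all three hold simultaneously is where the bulk of the work lies.
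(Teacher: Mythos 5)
Your overall plan is right — disperser-based gap introduction, amplified verifier, thresholds $g=2^d$ and $g'\approx 2^k$, with $|R|$ polynomial — and the arithmetic at the end is fine. But there is a genuine gap in the gadget, and it is exactly the place you flag as "where the bulk of the work lies."

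You propose to put the left-vertex label $v\in L$ in the CHOICE register alongside the quantum proof $\ket{q}$, so that the universal quantifier of QMW ranges over $(v,\ket{q})$. That makes the acceptance condition for a word $T$ "\emph{for all} $v$ consistent with $T$ and all $\ket{q}$, $\tilde V_x(c_v,\ket{q})$ accepts" — an \textbf{AND} over consistent $v$. This breaks both directions: (i) it is not monotone — enlarging $T$ to $T'\supseteq T$ can only add consistent (possibly bad) vertices $v'$, so an accepted $T$ can have a rejected superset; and (ii) even completeness can fail, since for $T=N(v_{c^\star})$ the disperser gives no guarantee that $v_{c^\star}$ is the only vertex with $N(v)\subseteq T$ (two left vertices may share a neighborhood), and a bad such $v'$ lets the adversary reject. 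What the reduction actually needs is an \textbf{OR} over the (up to $2^k$) candidate proofs that $T$ encodes, while the CHOICE for-all is spent only on the second-level quantum proofs. This is precisely what the paper's construction does: it arranges $L$ as a complete binary tree and encodes a leaf $x$ by the \emph{union of the neighborhoods along the root-to-leaf path}, which (using $\varepsilon=1/2$) makes the at-most-$2^k$ decodable leaves efficiently enumerable by a pruned BFS; the verifier $W$ then feeds the $i$-th decoded candidate $x_i$ together with the $i$-th block of $q(n)$ CHOICE qubits into $V$ and coherently \emph{ORs} the outputs. With that gadget, monotonicity is immediate (a superset still decodes every previously decoded leaf, and OR is monotone), completeness holds because the good candidate's parallel run accepts for every proof block, and soundness holds by a union bound over the blocks using amplified $V$. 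Your flat ``neighborhood-of-$v_c$'' encoding and the placement of $v$ in CHOICE jointly sidestep the efficient-decoding problem, but in doing so they reintroduce the quantifier mismatch that the tree-plus-parallel-OR gadget is there to solve. This is a missing idea rather than an unverified detail: without moving $v$ out of CHOICE and into something the verifier itself enumerates and ORs over, the construction does not yield a QMW instance.
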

\begin{proof}
The reduction follows Theorem 1 of Umans~\cite{U99} closely; the points where we deviate from~\cite{U99} are explicitly noted. Let $\Pi$ be an instance of an arbitrary promise problem $A=(\ayes,\ano)$ in $\cqs$ with encoding size $n$, and whose verification circuit $V$ has a $c(n)$-bit existential proof register and a $q(n)$-qubit for-all proof register. We wish to map $\Pi$ to a \X~circuit $W$ for QMW such that $W$ accepts strings of small or large Hamming weight depending on whether $\Pi\in\ayes$ or $\Pi\in\ano$, respectively. To do so, we follow~\cite{U99} and construct an explicit $(k,1/2)$-disperser $G=(L,R,E)$ with left-degree $2^d$ using Reference~\cite{SZ94}, where $\abs{L}=2^{c(n)+1}$, $\abs{R}=2^{k+d-O(1)}$, and $k:=\gamma\log c(n)$ for $\gamma\in\Theta(1)$ to be set as needed. Note that the value of $d$ depends on the specific disperser construction used --- for the construction of ~\cite{SZ94}, we have $d=4k+O(\log n)$.
Roughly, the idea of Umans is now to have $L$ correspond to assignments for the $c(n)$-bit classical register of $V$, and $R$ to assignments for the classical register of $W$ (in the setting of~\cite{U99}, note that $W$ is a classical circuit). We then \emph{encode} assignments from $L$ by instead choosing neighbor sets in $R$. By exploiting the properties of dispersers, one can ensure that the sizes of the neighbor sets in $R$ chosen vary widely between YES and NO cases for $\Pi$.

Specifically, imagine the vertices in $L$ are arranged into a complete binary tree whose $2^{c(n)}$ leaves denote the $2^{c(n)}$ possible assignments to $V$'s classical register. For convenience, we henceforth use $L$ to mean this tree. Now, let $x\in\set{0,1}^{c(n)}$ denote a leaf of $L$. Then, a subset of vertices $R'\subseteq R$ is said to \emph{encode} $x$ if it contains the union of the neighbor sets of all vertices in the unique path from the root of $L$ to $x$. Figure~\ref{fig:disp} illustrates this encoding scheme. How do the vertices of $R$ then relate to $W$? Each vertex $r\in R$ corresponds to an input bit of $W$ -- setting this $r$th bit to one means we ``choose'' vertex $r$.

\begin{figure}[h]\centering
  \includegraphics[height=5cm]{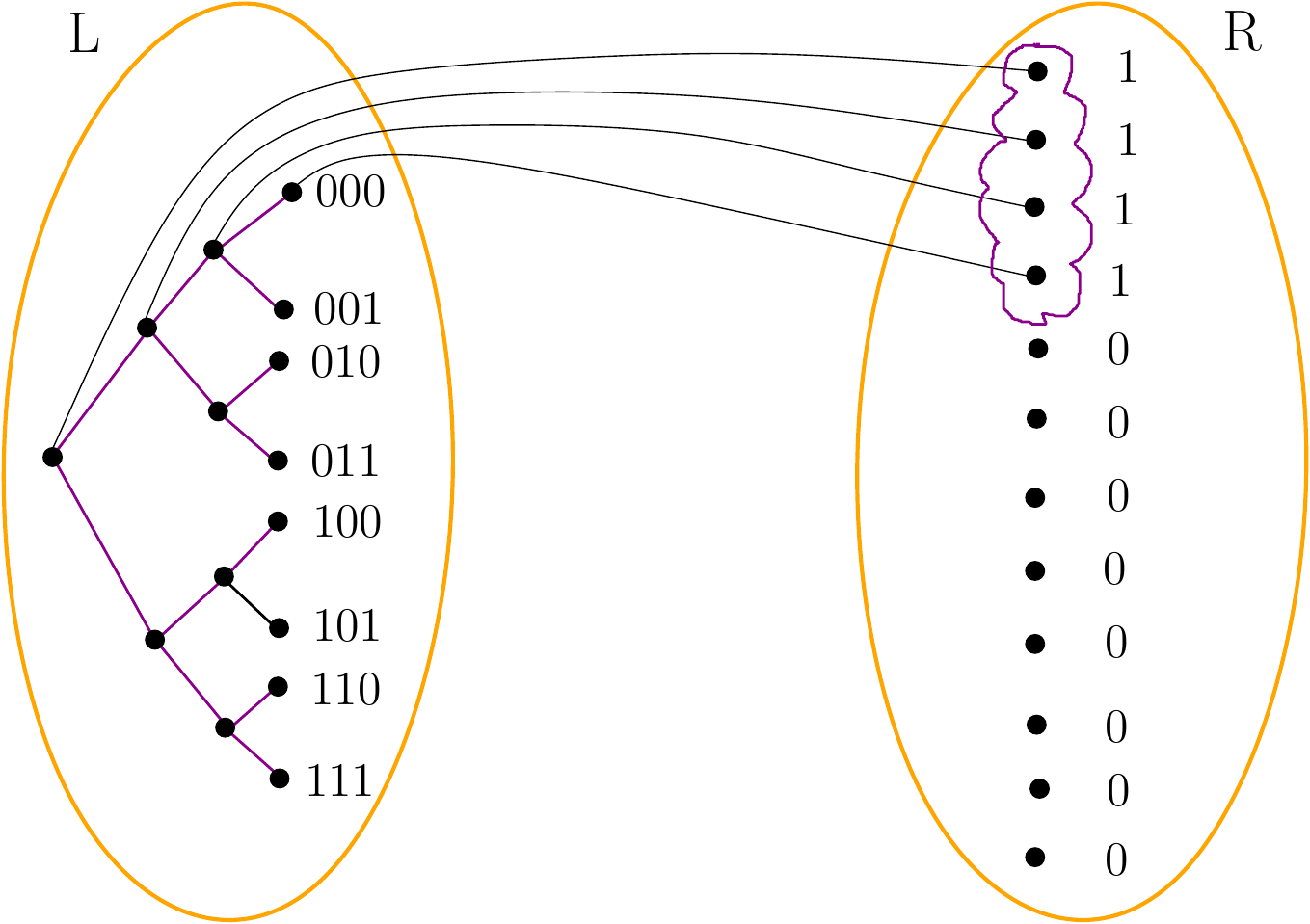}
  \caption{Here, the string $11110\cdots 0$ in $R$ encodes the string $000$ in $L$. (Note: This graph is not a disperser, but nevertheless illustrates the encoding scheme.)}\label{fig:disp}
\end{figure}

With the encoding scheme defined, we now construct the $\X$ circuit $W$. Given $y$ and $\ket{z}$ to its INPUT and CHOICE registers, respectively, $W$ acts as follows: (a) If $y$ corresponds to a subset $R_y\subseteq R$ such that $\abs{R_y}>\abs{R}/2$, then $W$ sets its output qubit to one. (b) If $\abs{R_y}\leq\abs{R}/2$, then $W$ first \emph{decodes} $R_y$ to obtain the set of leaves $L_y\subseteq L$. Roughly, it then outputs one if there exists $x\in L_y$ causing $\Pi$'s verification circuit $V$ to output one when fed the proofs $x$ and $\ket{z}$. These last two steps require further clarification, which we now provide.

First, given $R_y\subseteq R$, decoding it to obtain the set of leaves $L_y\subseteq L$ might \emph{a priori} require exponential time, as recall $\abs{L}=2^{c(n)+1}$. This, however, is precisely where dispersers play their part: Since we set $\epsilon=1/2$ in constructing our disperser, we know that for any $S\subseteq R$ with $\abs{S}\leq \abs{R}/2$, there are at most $2^k=c(n)^\gamma$ vertices in $L$ whose neighbor sets are completely contained in $S$. Thus, by starting at the root of $L$ and performing a breadth-first-search down the tree (where we prune any branches along which we encounter a vertex whose neighbor set is not contained in $R_y$, as by definition such vertices cannot encode any leaf $x$), we can efficiently decode $R_y$ to obtain $L_y$ while visiting only polynomially vertices in $L$. It remains to specify how $W$ checks whether there exists an $x\in L_y$ causing $V$ to accept, and here we must deviate from Umans' construction.

First, if $\abs{L_y}=1$, our task is straightforward -- simply run $V$ as a black box on proofs $x\in L_y$ and $\ket{z}$, and output the result. Then, $W$ outputs one with probability at least $2/3$ on input $y$ for all quantum proofs $\ket{z}$ if and only if $V$ also does so on proofs $x$ and $\ket{z}$. If , however, $\abs{L_y}>1$, a more involved construction of $W$ is necessary. Here, $W$ takes three inputs: a classical description of $V$, an $\abs{R}$-bit string $y$ to denote subsets in $R$, and a $2^kq(n)$-qubit proof $\ket{z}$. Then, for the $i$th candidate string $x_i\in L_y$, $W$ feeds $x_i$ and the $i$th block of $q(n)$ proof qubits of $\ket{z}$ into $V$. (If $\abs{L_y}<2^k$, we simply re-use values of $x\in L_y$ in the leftover parallel runs of $V$.) $W$ then coherently computes the OR of the output qubits of all parallel runs of $V$ and outputs this qubit as its answer.

Let us briefly justify why this works. For simplicity, assume the quantum proof to W can be written $\ket{z}=\ket{z_1}\otimes\cdots\otimes\ket{z_{2^k}}$; entangled proofs can be shown not to pose a problem via the same proof technique used in standard error reduction~\cite{AN02}. Now, if there exists an $x_i\in L_y$ causing $V$ to accept for all quantum proofs, then in the $i$th parallel run of $V$ in $W$ corresponding to $x_i$, $V$ outputs $1$ with probability at least $2/3$ on any $\ket{z_i}$, implying $W$ outputs $1$ with probability at least $2/3$. Conversely, if for all $x_i\in L_y$, there exists a quantum proof $\ket{z_i}$ rejected by $V$, then by standard error reduction for $V$ and the union bound, the state $\ket{z}=\ket{z_1}\otimes\cdots\otimes\ket{z_{2^k}}$ causes $W$ to output $1$ with probability at most $1/3$, as required.

Following Reference~\cite{U99} again, we now argue that $W$ accepts a \emph{non-empty monotone} set, and we analyze the hardness gap introduced by this reduction. The first of these is simple -- namely, $W$ accepts a set $R'\subseteq R$ if either $\abs{R}>\abs{R/2}$, in which case it also accepts any $R''\supseteq R'$, or if $R'$ encodes some $x\in L$ accepted by $V$, in which case any $R''\supseteq R'$ would also encode $x$ and hence be accepted. As for the gap, if $x\in L$ is an accepting assignment for $V$ when $\Pi\in\ayes$, then to encode $x$ using a subset of $R$ requires at most $c(n)2^d$ vertices in $R$, where recall $2^d$ is the left-degree of our disperser. On the other hand, if $\Pi\in\ano$, then the only way for $W$ to accept is to choose $R'\subseteq R$ with $\abs{R'}>\abs{R}/2\approx c(n)^\gamma2^d$. This yields a hardness ratio of $\Omega(c(n)^{\gamma-1})$. Since $W$'s encoding size $N$ is polynomial in $c(n)$, there exists some $\epsilon>0$ such that the ratio produced is of order $N^\epsilon$, as desired.
\end{proof}

We next show a gap-preserving reduction from QMW to QSSC. Its proof requires Lemmas~\ref{l:y2bound} and~\ref{l:ubound}, which are stated and proven subsequently.

\begin{theorem}\label{thm:qmmwwToQssc}
    QSSC is in $\cqs$. Further, there exists a polynomial time reduction which, given an instance of QMW with thresholds $f$ and $f'$, outputs an instance of QSSC with thresholds $g=f+2$ and $g'=f'+2$, respectively.
\end{theorem}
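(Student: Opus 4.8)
The plan is to realize QMW inside QSSC by turning each of the $\abs{R}$ input bits of the $\X$ circuit $W$ into a local Hamiltonian ``term'', so that selecting a subset of terms corresponds exactly to setting the corresponding input bits of $W$ to one. Concretely, start from the QMW instance $(W,f,f')$ with $W$ acting on an $\abs{R}$-bit INPUT register (call it $A$), an $m$-qubit CHOICE register $B$, and ancillas $C$. Apply error reduction to $W$ so that acceptance/rejection probabilities are exponentially close to $1$. Then invoke Kitaev's circuit-to-Hamiltonian construction (Lemma~\ref{l:kitaev}) on $W$, but where the $A$-register plays the role of the classical existential register: we obtain a $5$-local Hamiltonian $H_W = \hin + \hout + \hprop + \hstab$ on $A\otimes B\otimes C\otimes D$ whose low-energy space consists of history states $\histstatecq{c}{q}$ for $c\in\set{0,1}^{\abs{R}}$, $\ket{q}\in\B^m$. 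The key point is that for a fixed classical string $c$ in $A$, $H_W$ restricted to that block behaves like the Hamiltonian of the verifier $W_c$ with input $c$ hardwired: if $W$ accepts $c$ (for all $\ket{q}$) the block has a low eigenvalue, and if $W$ rejects $c$ (there is a bad $\ket{q}$) the block is bounded below by $b\in\Omega(1/L^3)$ (rescaled to a constant gap as noted after Definition~\ref{def:qssc}).

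The next step is to encode ``bit $i$ of $A$ is set'' as membership of a term in the cover. For each $i\in\set{1,\dots,\abs{R}}$ introduce a Hamiltonian term $H_i$ that penalizes the configuration $\ket{0}_{A_i}$ — e.g.\ $H_i := \ketbra{0}{0}_{A_i}$ (suitably weighted), acting as identity elsewhere; these are $1$-local, hence certainly $5$-local. Add two further fixed terms: one proportional to $H_W$ above (this is $5$-local by Kitaev's construction and contributes the ``verification'' penalty), and one term, call it $H_{\rm str}$, that forces the $D$-register to be a legal unary clock string and the $A$-register to be classical in the low-energy space (this can be folded into $\hstab$/$\hin$, or added separately). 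Now set the QSSC thresholds $g = f+2$, $g' = f'+2$: the ``$+2$'' accounts for the two mandatory terms ($H_W$-term and $H_{\rm str}$-term) which must be in every cover, on top of the $f$ (resp.\ $f'$) bits of $A$ one is allowed to set. Finally choose the QSSC constants $\alpha,\beta$ so that $\alpha-\beta\ge 1$: $\alpha$ is calibrated to the energy a YES-case history state $\histstatecq{c}{q}$ fails to reach (since every $A_i$ with $c_i=1$ is penalized unless the corresponding $H_i$ is in the cover), and $\beta$ sits just below the $\Omega(1)$ lower bound coming from a rejecting block of $H_W$.

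The correctness argument then splits as follows. In the YES case, there is $x\in\set{0,1}^{\abs{R}}$ of Hamming weight $\le f$ accepted by $W$; take $S'$ to consist of the two mandatory terms together with the $H_i$ for $i$ in the support of $x$ (at most $f$ of them, so $\abs{S'}\le f+2 = g$). One must show $H_{S'}\succeq\alpha I$: any state must pay the $H_W$-penalty unless it is essentially a history state of some classical $c$; but if $c$ has a $1$ outside the support of $x$ it pays an $H_i$-penalty, and if $\mathrm{supp}(c)\subseteq\mathrm{supp}(x)$ then by monotonicity of the accepted set $W$ accepts $c$ too, so Lemma~\ref{l:kitaev} forces energy $\ge a$... wait, low energy — so here one uses the \emph{other} direction: we need the total to be large, so acceptance should correspond to a \emph{large} eigenvalue. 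This is the point where the roles of ``accept'' and ``large eigenvalue'' must be matched to QSSC's convention (cover $\leftrightarrow$ large smallest eigenvalue), so $H_W$ should be built from the \emph{rejecting}/complement circuit, or equivalently we take the relevant penalty to be $\mathrm{const}\cdot I - H_W$ — a sign flip that I would fix by working with $\hout$ projecting onto the $0$-outcome. Modulo this bookkeeping, in the NO case every $S'$ with $\abs{S'}\le g' = f'+2$ either omits a mandatory term (then trivially has a small eigenvalue) or corresponds to a set $x$ of weight $\le f'$; since $W$ rejects all such $x$, there is a bad $\ket{q}$, and the corresponding history state $\histstatecq{x}{q}$ witnesses an eigenvalue $\le\beta$ of $H_{S'}$.

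The main obstacles I anticipate are two. First, the spectral analysis of $H_{S'}$ is a sum of non-commuting terms ($H_W$ is highly non-commuting internally, and the $H_i$'s do not commute with $\hprop$), so bounding its smallest eigenvalue from below in the YES case requires the Projection Lemma (Lemma~\ref{l:projlemma}) — project onto the history-state space of $W$, where $H_W$'s off-space penalty provides the large $J$, and then analyze the effective action of the $H_i$-penalties on that space — together with possibly the Geometric Lemma (Lemma~\ref{l:pluslemma}) to combine the clock-consistency penalty with the rest. Getting $J > 2\snorm{Y_1}$ with the right polynomial slack, and tracking that the induced error $\snorm{Y_1}^2/(J-2\snorm{Y_1})$ stays below the $\alpha-\beta$ gap, is the delicate ``balancing act'' the introduction warned about; rescaling all terms by a polynomial factor (per \cite{W09}) restores a constant gap. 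Second, one must verify the $5$-locality bound is genuinely met — Kitaev's construction already gives $5$-local terms, and the $H_i$ are $1$-local, so this should be clean, but care is needed that the extra structural penalty $H_{\rm str}$ doesn't exceed arity $5$. I would present the two supporting lemmas (Lemmas~\ref{l:y2bound} and~\ref{l:ubound} as promised in the text) as: (i) a lower bound on the relevant penalty operator $Y_2$ restricted off the good subspace, and (ii) an upper bound on the norm/energy contribution $U$ of the $H_i$-terms, and then assemble them via the Projection Lemma exactly as above.
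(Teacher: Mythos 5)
Your proposal follows essentially the same architecture as the paper's proof: apply Kitaev's circuit-to-Hamiltonian construction to the QMW verifier, introduce one bit-penalty Hamiltonian term per INPUT bit so that selecting that term for the cover corresponds to setting the bit to one, add two mandatory terms (a boosted structural penalty forcing history states and an ``$I$ minus Kitaev's $H$'' term rewarding acceptance), set $g=f+2$, $g'=f'+2$, and assemble the spectral argument via the Projection and Geometric Lemmas. Your in-line self-correction of the sign convention is also exactly right — the paper's $G_{n+2}:=I-(\hin+\hprop+\hstab+\hout)$ is precisely the ``const$\cdot I - H_W$'' flip you describe.

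There is, however, one genuine gap. You define the bit-penalty terms as $H_i:=\ketbra{0}{0}_{A_i}$, acting as identity on the clock, whereas the paper takes $G_i:=(L+1)\ketbra{0}{0}_{A_i}\otimes I_{B,C}\otimes\ketbra{0}{0}_D$, i.e.\ coupled to the time-zero projector and rescaled by $L+1$. This coupling is structural, not cosmetic. A generic QMW verifier acts on its INPUT register during computation, so in a history state $\histstatecq{x}{y}$ the contents of $A_i$ at time $j>0$ need not equal $x_i$; your uncoupled $\ketbra{0}{0}_{A_i}$ therefore collects contributions from every time slice and is not diagonal on $\shist$. The paper's $G_i$ projects onto the time-$0$ slice, where $A=\ket{x}$ by construction, and the factor $L+1$ cancels the $1/(L+1)$ amplitude of that slice, so each $\histstatecq{x}{y}$ is an exact eigenvector of $\hist\bigl(\sum_{i\in T}G_i\bigr)\hist$ with eigenvalue $\abs{T}-\langle x,z\rangle$ (with $z$ the indicator of $T$). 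This diagonal structure is what makes Lemma~\ref{l:ubound} go through and gives the $\lambda(Y_1|_{\shist})\geq-\zeta\epsilon$ bound feeding the Projection Lemma. To rescue your version you would either have to first modify $V$ to copy $A$ into an ancilla and then never touch $A$ again (as the paper does in Theorem~\ref{thm:anothercomplete}, but not here), or rework the eigenvalue analysis; as written, $\ketbra{0}{0}_{A_i}$ alone does not suffice.
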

\begin{proof}
That QSSC is in $\cqs$ follows using Kitaev's verifier~\cite{KSV02} for putting $k$-local Hamiltonian in QMA. Specifically, we construct a $\cqs$ verification circuit for QSSC which takes a description $c$ of some subset of local Hamiltonians $S:=\set{H_i}$ in its classical register, and estimates the energy achieved by $\ket{q}$ in its quantum register against $H_S$ using Kitaev's approach, outputting zero or one according to whether the measured energy is above or below the desired thresholds.

To reduce QMW to QSSC, suppose we are given a $\X$ circuit $V$ accepting exactly a non-empty monotone set $T\subseteq\set{0,1}^n$ and threshold parameters $f$ and $f'$. We assume without loss of generality that $V$ is represented as a sequence of one and two qubit unitary gates $V_i$ such that $V=V_L\cdots V_1$. We also assume using standard error reduction that if $V$ accepts (rejects) input $x\in\set{0,1}^n$, then it outputs one (zero) with probability at least $1-\epsilon:=1-2^{-4(n+m)}$.

We now state our instance $(S,\alpha,\beta, g, g')$ of QSSC as follows. We first apply Kitaev's circuit-to-Hamiltonian construction from Section~\ref{scn:def} to $V$ to obtain a $3$-tuple $(H,a,b)$. Note that $H=\sum_{i=1}^r H_i$ with $r$ terms $0\preceq H_i \preceq I$. Then, set $\alpha := 1-(\zeta+1)\epsilon$, and $\zeta:=2(1+2^{2(n+m)})/(L+1)$. Define $\beta:=1-b$. Note that for large $n+m$, this yields $\alpha\geq 1-2^{-(n+m)}$ and $\beta\leq1-c(1-2^{-(n+m)})/{L^3}$ for some constant $c$. Further, define $g:=f+2$, $g':=f'+2$, and let $S$ consist of the elements (intuition to follow)
\begin{eqnarray}
     G_1 &:=& (L+1)\ketbra{0}{0}_{A_1}\otimes I_{B,C}\otimes \ketbra{0}{0}_D \nonumber\\
     &\vdots&\nonumber\\
     G_n &:=& (L+1)\ketbra{0}{0}_{A_n}\otimes I_{B,C}\otimes \ketbra{0}{0}_D\nonumber \\
     G_{n+1} &:=&  (\Delta+1)(\hin+\hprop+\hstab)\nonumber\\
     G_{n+2} &:=& I - (\hin+\hprop+\hstab+\hout),\label{eqn:coverdef}
\end{eqnarray}
for $\Delta\geq 0$ to be chosen as required, and where $A_i$ denotes the $i$th qubit of register $A$. Intuitively, the terms in $S$ play the following roles: $G_{n+1}$ penalizes assignments which are not valid history states. $G_{n+2}$ penalizes valid history states accepted by $V$. Finally, the $G_i$ for $i\in[n]$ penalize valid history states rejected by $V$ (recall that $V$ accepts a monotone set, and so flipping a one to a zero in register $A$ may lead $V$ to reject). Thus, we cover the entire space. We now make this rigorous.

As required by Definition~\ref{def:qssc}, we begin by showing that $S$ itself is a cover, i.e.\ that $G_S\succeq \alpha I_{A,B,C,D}$. First, note that
\begin{equation}
    G_S = I + \sum_{i=1}^n G_i -\hout + \Delta(\hin+\hprop+\hstab).\label{eqn:1}
\end{equation}
It thus suffices to prove that for large enough $\Delta$,
\begin{equation}
    \Delta (\hin+\hprop+\hstab)+\left(\sum_{i=1}^n G_i\right)-\hout\succeq -(\zeta+1)\epsilon I.\label{eqn:keyeqn}
\end{equation}
To show this, we use Lemma~\ref{l:projlemma}, the Projection Lemma, with
\begin{eqnarray}
    Y_1 := \left(\sum_{i=1}^n G_i\right)-\hout,\hspace{8mm}
    Y_2 := \Delta (\hin+\hprop+\hstab).
\end{eqnarray}
Intuitively, the Projection Lemma tells us that by increasing our weight $\Delta$, we can force the smallest eigenvalue of $Y_1+Y_2$ to be approximately the smallest eigenvalue of $Y_1$ restricted to the null space of $Y_2$. In our setting, this implies it suffices to study the smallest eigenvalue of $Y_1$ restricted to the space of all \emph{valid} history states, i.e.\ states of the form of Equation~(\ref{eqn:hist}). Let $\shist$ denote the space of valid history states; note $\shist$ is the null space of $\hin+\hprop+\hstab$. Then, in the notation of Lemma~\ref{l:projlemma}, to lower bound $\lambda(Y_1|_{\shist})$, we invoke Lemma~\ref{l:ubound} to instead upper bound the largest eigenvalue of $(-Y_1)|_{\shist}$. This yields $\lambda(Y_1|_{\shist})\geq -\zeta\epsilon$. Noting that $\snorm{Y_1}\leq n(L+1)+1$, and since by Lemma~\ref{l:y2bound} the smallest non-zero eigenvalue of $Y_2$ scales as $\Omega(\Delta/L^3)$, it follows by Lemma~\ref{l:projlemma} that by setting $\Delta\in\Omega(n^2L^5/\epsilon)$, we have $Y_1+Y_2\succeq -(\zeta+1)\epsilon I$, as desired. This completes the proof that $S$ is a cover.

We now show the desired reduction. Assume first that $V$ accepts a string $x$ of Hamming weight $k$, and let $T\subseteq [n]$ be such that $i\in T$ if and only if $x_i=1$. We claim there exists a cover $S'\subseteq S$ of size $\abs{S'}=k+2$ which consists of $G_{n+1}$, $G_{n+2}$, and the $k$ terms $G_i$ such that $i\in T$. To show this, following the proof above, the analogue of Equation~(\ref{eqn:keyeqn}) which we must prove is
\begin{equation}
    \Delta (\hin+\hprop+\hstab)+\left(\sum_{i\in T} G_i\right)-\hout\succeq -(\zeta+1)\epsilon I.\label{eqn:keyeqn2}
\end{equation}
First, applying Lemma~\ref{l:ubound} again, we lower bound the smallest eigenvalue of $Y'_1:=\left(\sum_{i\in T} G_i\right)-\hout$ restricted to $\shist$ by $-\zeta\epsilon$. Since $\snorm{Y'_1}\leq \snorm{Y_1}$ for $Y_1$ from the previous case of $T=[n]$, the value of $\Delta$ from before still suffices to apply Lemma~\ref{l:projlemma} and conclude that Equation~(\ref{eqn:keyeqn2}) holds, as desired.

Conversely, suppose $V$ rejects any string $x$ of Hamming weight at most $k$. For any $S'\subseteq S$ with $\abs{S'}\leq k+2$, we claim that $G_{S'}$ has an eigenvalue at most $\beta$. To see this, note first that if  $G_{n+2}\not\in S'$, then the state $\histstatecq{1^n}{y}$ attains expected value zero against $G_{S'}$, where note $\beta\geq 0$. Similarly, if $G_{n+1}\not \in S'$, then the state $\ket{1^n}_{A,B,C}\otimes\ket{0}_D$ obtains expected value at most zero against $G$. We conclude that in order to refute the claim that $G$ has an eigenvalue at most $\beta$, we must have $G_{n+1},G_{n+2}\in S'$. This implies that $S'$ contains at most $k$ terms $G_i$ for $i\in [n]$. Then, consider the string $x$ which has ones precisely at these at most $k$ positions $i\in [n]$ corresponding to $G_i\in S'$. It follows that the state $\histstatecq{x}{y}$ lies in the null space of all terms in $S'$ with the possible exception of $G_{n+2}$. Moreover, since $V$ rejects all strings of Hamming weight at most $k$, there exists by the definition of a $\X$ circuit and Lemma~\ref{l:kitaev} a $\ket{y}\in\B^{\otimes m}$ such that
\[
\trace\left(G_{n+2}\histstatecqketbra{x}{y}\right)=1-\trace\left(H\histstatecqketbra{x}{y}\right)\leq 1-b=\beta,
\]
completing the proof.
\end{proof}

%SEV: Up to here, 3-LH version is OK. However, below we lift the ground space of hin + hprop. For 5-LH, this is easy since we know what that ground space looks like - it consists of all valid history states. For 3-LH, however, hprop is no longer PSD, and in fact does NOT map a valid history state to the zero vector (recall it results in a state with invalid counter states).
The following two lemmas are required for the proof of Theorem~\ref{thm:qmmwwToQssc}. Their statements and proofs assume the notation of Theorem~\ref{thm:qmmwwToQssc}.

\begin{lemma}\label{l:y2bound}
    The smallest non-zero eigenvalue of $Y_2 = \Delta (\hin+\hprop+\hstab)$ scales as $\Omega(\Delta/L^3)$.
\end{lemma}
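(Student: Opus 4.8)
The statement concerns the smallest nonzero eigenvalue of $\hin+\hprop+\hstab$ (the factor $\Delta$ simply scales everything). This is precisely the operator whose null space is the space of valid history states $\shist$, so I want a spectral gap bound for this operator on $\shist^\perp$. The plan is to invoke exactly the standard analysis already packaged in Kitaev's construction: the quantity $b\in\Omega((1-\sqrt{\epsilon})/L^3)$ from Lemma~\ref{l:kitaev} is obtained in Kitaev's proof precisely by first establishing that $\hin+\hprop+\hstab$ has smallest nonzero eigenvalue $\Omega(1/L^3)$, and then applying the Projection Lemma (Lemma~\ref{l:projlemma}) with $Y_2 = J(\hin+\hprop+\hstab)$ for suitably large $J$ and $Y_1 = \hout$. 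So the content here is just to extract that intermediate bound. Concretely, I would cite the relevant lemmas from~\cite{KSV02} (the analysis of $\hprop$ restricted to valid clock states, combined with the Geometric Lemma, Lemma~\ref{l:pluslemma}, to handle the non-commuting pieces $\hin+\hprop$ versus $\hstab$, and a change of basis diagonalizing $\hprop$ on the valid-clock subspace).

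In slightly more detail, the steps are: (1) Restrict attention to the subspace where the clock register $D$ holds a valid unary encoding $\ket{1^t 0^{L-t}}$; on this subspace $\hstab$ vanishes, and on its complement $\hstab$ has eigenvalue at least $1$, so by Lemma~\ref{l:pluslemma} (Geometric Lemma) applied to $\hin+\hprop$ and $\hstab$ it suffices to lower-bound the smallest nonzero eigenvalue of $\hin+\hprop$ on the valid-clock subspace. (2) On the valid-clock subspace, conjugate by the unitary $W=\sum_{t=0}^L V_t\cdots V_1 \otimes \ketbra{t}{t}_D$ that "unwinds" the circuit; this maps $\hprop$ to $I\otimes E$, where $E$ is the $(L+1)\times(L+1)$ tridiagonal matrix $\tfrac12\begin{psmallmatrix}\,\cdot\,\end{psmallmatrix}$ with a known spectrum whose smallest nonzero eigenvalue is $\Theta(1/L^2)$, and leaves $\hin$ essentially unchanged (it only touches the clock-$0$ slot). (3) Apply the Geometric Lemma once more — or the explicit computation in~\cite{KSV02} — to the pair $\hin$ (a projector acting only at $t=0$) and $I\otimes E$, whose null spaces intersect trivially after the above conjugation; since the relevant angle between these two subspaces is bounded away from $0$ by $\Omega(1/L)$, one gets a combined gap of $\Omega(1/L)\cdot\Omega(1/L^2)=\Omega(1/L^3)$. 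Multiplying back by $\Delta$ gives the claimed $\Omega(\Delta/L^3)$.

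I do not expect any genuine obstacle here: this is a verbatim restatement of a lemma that is implicit in Kitaev's original construction and is by now folklore in Hamiltonian complexity. The only mild care needed is bookkeeping — making sure the presence of the extra classical register $A$ and for-all register $B$ (which $\hin+\hprop+\hstab$ acts on only trivially, via $I_{A,B}$) does not affect the argument, which it does not since those registers simply tensor through. Thus the proof amounts to citing Lemma~14.4 of~\cite{KSV02} together with the surrounding spectral analysis there, and noting $\lambda_{\min}^{\neq 0}(Y_2) = \Delta\cdot\lambda_{\min}^{\neq 0}(\hin+\hprop+\hstab)\in\Omega(\Delta/L^3)$.
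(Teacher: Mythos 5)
Your high-level plan matches the paper's: conjugate by Kitaev's $W$, use the known spectrum of the propagation operator $E$, and feed the resulting two-operator sum into the Geometric Lemma (Lemma~\ref{l:pluslemma}). But your step (3) contains a genuine error: the null spaces of $\hin$ and (conjugated) $\hprop$ do \emph{not} intersect trivially. Their intersection is exactly the conjugated space of valid history states $\{\ket{\psi}_{A,B}\otimes\ket{0\cdots 0}_C\otimes\ket{\gamma}_D\}$ — the very space you are trying to mod out — so the hypothesis $\nl\cap\nll=\{\vec 0\}$ of Lemma~\ref{l:pluslemma} fails, and the lemma as stated lower-bounds the \emph{smallest} eigenvalue (which here is $0$), not the smallest nonzero one. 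You cannot simply ``restrict'' both operators to $\shist^\perp$ either, since the Geometric Lemma is stated on the full Hilbert space and the restrictions of projectors need not remain projectors.

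The paper's proof handles this by \emph{lifting} the shared null space before invoking the Geometric Lemma: it sets $A_1:=W(\hin+p\,\hist)W^\dagger$ and $A_2:=W(\hprop+2\,\hist)W^\dagger$, where $\hist$ projects onto $\shist$. Because $[\hin,\hist]=[\hprop,\hist]=0$ and $\snorm{\hin}\le p$, $\snorm{\hprop}\le 2$, both $A_1$ and $A_2$ remain PSD, their smallest nonzero eigenvalues are still $\Omega(1)$ and $\Omega(1/L^2)$ respectively, and — crucially — adding the $\hist$ terms removes $\shist$ from both null spaces, so that now $\nl\cap\nll=\{\vec 0\}$ and Lemma~\ref{l:pluslemma} legitimately applies. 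The resulting bound on $\lambda_{\min}(A_1+A_2)$ is then precisely the smallest nonzero eigenvalue of $\hin+\hprop$, since $A_1+A_2$ and $\hin+\hprop$ agree on $\shist^\perp$ and $A_1+A_2\succeq 0$ is strictly positive on $\shist$. Your angle and spectrum estimates ($\sin^2(\alpha/2)\in\Omega(1/L)$, $v\in\Omega(1/L^2)$) are then the same as the paper's. A similar care is in principle needed in your step (1) for the pair $(\hin+\hprop,\hstab)$, whose null spaces again share $\shist$; the paper simply defers this to the analogous argument in \cite{KSV02}.
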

\begin{proof}
    We bound the smallest non-zero eigenvalue of $\hin+\hprop$; it is straightforward to show using the approach of Reference~\cite{KSV02} that the addition of $\hstab$ does not affect this lower bound. Our proof idea here is to ``lift'' the null space of $\hin+\hprop$ so that the smallest non-zero eigenvalue of $\hin+\hprop$ becomes the smallest eigenvalue of the lifted operator, and then apply the Geometric Lemma (Lemma~\ref{l:pluslemma}) to lower bound the latter.

    To begin, recall that the null space of $\hin+\hprop$ consists of all valid history states
    \[
        \histstate=\frac{1}{\sqrt{L+1}}\sum_{i=0}^L V_i\cdots V_1 \ket{\psi}_{A,B}\otimes\ket{0}_C\otimes\ket{i}_D,
    \]
    for any $\ket{\psi}_{A,B}$. (Since we omit $\hstab$ for now, we assume here that the clock register is represented in binary, i.e.\ there are no invalid clock states.) As done in Reference~\cite{KSV02}, our analysis is simplified by first applying the unitary change of basis $W=\sum_{j=0}^L V_1^\dagger\cdots V_j^\dagger\otimes\ketbra{j}{j}$, yielding
    \begin{eqnarray*}
        W\histstate &=& \ket{\psi}_{A,B}\otimes\ket{0}_C\otimes\ket{\gamma}_D\\
        W\hin W^\dagger &=& \hin= I_{A,B}\otimes \left(\sum_{i=1}^p \ketbra{1}{1}_{C_i}\right)\otimes \ketbra{0}{0}_D\\
        W\hprop W^\dagger &=& I_{A,B}\otimes I_C\otimes E_D
    \end{eqnarray*}
    where $\ket{\gamma}:=\left(\frac{1}{\sqrt{L+1}}\sum_{i=0}^L\ket{i}\right)$, and for some operator $E_D$ whose eigenvalues are given by $\lambda_k=1-\cos(\pi k / (L+1))$ for $0\leq k\leq L$ and whose unique zero-eigenvector is $\ket{\gamma}$.

As alluded to above, we now lift the null space of $W(\hin+\hprop)W^\dagger$. Letting $\hist$ denote the projector onto the space of valid history states $\histstate$, this is accomplished by defining
    \begin{eqnarray*}
        A_1 &:=&  W(\hin + p\hist)W^\dagger\\
        A_2 &:=&  W(\hprop + 2\hist)W^\dagger.
    \end{eqnarray*}
Note that $[\hin,\hist]=[\hprop,\hist]=0$, $\snorm{\hin}\leq p$ and $ \snorm{\hprop}\leq 2$. It thus remains to lower bound the smallest eigenvalue of $A_1+A_2$, for which we apply Lemma~\ref{l:pluslemma} to $A_1+A_2$ via the approach of Reference~\cite{KSV02}. For this, we require values for the parameters $v$ and $\alpha(\nl,\nll)$.

For $v$, note that since $A_1$ is a sum of commuting orthogonal projectors, its smallest non-zero eigenvalue is at least $1$ (assuming $p\geq 1$). Similarly, one infers from the spectrum of $E_D$ stated above that the smallest non-zero eigenvalue of $A_2$ scales as $\Omega(1/L^2)$. It follows that $v\in\Omega(1/L^2)$. As for $\alpha(\nl,\nll)$, note that the null spaces $\nl$ and $\nll$ can be written as
    \begin{eqnarray}
        \nl &=& \B^{\otimes(n+m)}_{A,B}\otimes\operatorname{span}(\ket{\psi}~:~\braket{\psi}{0\cdots0}=0)_C\otimes\operatorname{span}(\ket{1},\ldots,\ket{L})_D\oplus\label{eqn:spa1}\\
%        &&\B^{\otimes(n+m)}_{A,B}\otimes\ket{0\cdots 0}_C\otimes\left[\operatorname{span}(\ket{1},\ldots,\ket{L})\cap\operatorname{span}(\ket{\psi}~:~\braket{\psi}{\gamma}=0)\right]_D\label{eqn:spa2}\\
        &&\B^{\otimes(n+m)}_{A,B}\otimes\ket{0\cdots 0}_C\otimes\operatorname{span}(\ket{\psi}~:~\braket{\psi}{\gamma}=0)_D,\label{eqn:spa2}\\
        \nll &=&
        \B^{\otimes(n+m)}_{A,B}\otimes\operatorname{span}(\ket{\psi}~:~\braket{\psi}{0\cdots0}=0)_C\otimes\ket{\gamma}_D.\nonumber
    \end{eqnarray}

\noindent Observe that $\nl \cap \nll = \set{\ve{0}}$, as required by Lemma~\ref{l:pluslemma}. Then, letting $\Pi_{\nl}$ denote the projector onto $\nl$, we analyze
\[
\cos^2\alpha(\nl,\nll)=\max_{\text{unit }\ket{x}\in\nl,\ket{y}\in\nll}\abs{\braket{x}{y}}^2=\max_{\text{unit }\ket{y}\in\nll}\bra{y}\Pi_{\nl}\ket{y}=\max_{\text{unit }\ket{y}\in\nll}\bra{y}\Pi_1+\Pi_2\ket{y},
\]
where $\Pi_1$ and $\Pi_2$ project onto the spaces in Equations~(\ref{eqn:spa1}) and~(\ref{eqn:spa2}), respectively. As $\bra{y}\Pi_2\ket{y}=0$, we simply need to maximize $\bra{y}\Pi_1\ket{y}$, which is equivalent to maximizing $\abs{\braket{\psi}{\gamma'}}^2$ for any unit vector $\ket{\psi}$ in register $D$ and for unnormalized state $\ket{\gamma'}:=(\frac{1}{\sqrt{L+1}}\sum_{i=1}^L\ket{i})$.
%To see this, let $\ket{\phi_1}=\sum_i\alpha_i\ket{a_i}_{A,B,C}\ket{b_i}_D$ and $\ket{\phi_2}=\sum_j\beta_j\ket{a_j}_{A,B,C}\ket{\gamma}_D$ be arbitrary vectors in the spaces $\Pi_1$ and $\nll$, respectively. Then, $\braket{\phi_1}{\phi_2}=\bra{\gamma}_D(\sum_i\alpha_i\beta_i^\ast\ket{b_i}_D)$.
By the Cauchy-Schwarz inequality, this quantity is upper bounded by $L/(L+1)$. We thus obtain the bound $\cos\alpha(\nl,\nll)\leq \sqrt{L/(L+1)}$. Combining this with the identity $2\sin^2\frac{x}{2}=1-\cos x$ and the Maclaurin series expansion for $\sqrt{1+x}$ (where $\abs{x}\leq 1$) yields $2\sin^2\frac{\alpha(\nl,\nll)}{2}\geq\frac{1}{2(L+1)}$. Substituting into Lemma~\ref{l:pluslemma}, the desired result follows.
\end{proof}

\begin{lemma}\label{l:ubound}
    Define $\hist:=\sum_{x\in\set{0,1}^n,y\in\set{0,1}^m}\histstatecqketbra{x}{y}$ as the projector onto $\shist$, let $\zeta:=2(1+2^{2(n+m)})/(L+1)$, and consider $T\subseteq[n]$. Then, if $V$ outputs one with probability at least $1-\epsilon$ for inputs $(x,\ket{y})$ with  $x\in\set{0,1}^n$ such that $x_i=1$ for all $i\in T$ and for all $m$-qubit $\ket{y}$, one has
\[
    \hist\left[\hout-\sum_{i\in T} G_i \right]\hist \preceq \zeta\epsilon I.
\]
\end{lemma}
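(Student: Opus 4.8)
The plan is to work entirely inside the history-state subspace $\shist$, where the operator $\hist\left[\hout - \sum_{i\in T} G_i\right]\hist$ becomes diagonal in a convenient orthonormal basis. First I would observe that $\shist$ has an orthonormal basis $\{\histstatecq{x}{y} : x\in\set{0,1}^n, y\in\set{0,1}^m\}$, since the map $\ket{x}\ket{y}\mapsto\histstatecq{x}{y}$ is a (scaled) isometry: distinct computational basis inputs to $V$ give orthogonal history states because they remain orthogonal in register $A,B$ at time $0$ (and the time register is entangled consistently). So I can expand any state of $\shist$ as $\ket{\Psi} = \sum_{x,y}\alpha_{x,y}\histstatecq{x}{y}$, and it suffices to bound $\bra{\Psi}\left[\hout-\sum_{i\in T}G_i\right]\ket{\Psi}$ by $\zeta\epsilon\|\ket{\Psi}\|^2$.

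Next I would evaluate the two pieces on basis elements. For the $G_i$ terms: $G_i = (L+1)\ketbra{0}{0}_{A_i}\otimes I_{B,C}\otimes\ketbra{0}{0}_D$, and since in $\histstatecq{x}{y}$ the $D$ register is $\ket{0}$ with amplitude $1/\sqrt{L+1}$, while the $A_i$ register carries $\ket{x_i}$ at that time slice, we get $\trace\left(G_i\,\histstatecqketbra{x}{y}\right) = (L+1)\cdot\frac{1}{L+1}\cdot[x_i=0] = [x_i = 0]$, and moreover $G_i$ is block-diagonal with respect to the history-state basis (it does not mix different $(x,y)$). Hence $\hist\left(\sum_{i\in T}G_i\right)\hist$ acts on $\histstatecq{x}{y}$ by the scalar $|\{i\in T : x_i = 0\}|$, which is $0$ exactly when $x_i=1$ for all $i\in T$ — precisely the strings on which $V$ accepts by hypothesis. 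For $\hout = I_A\otimes\ketbra{0}{0}_{B_1}\otimes I_C\otimes\ketbra{L}{L}_D$: since the $\ket{L}_D$ component of $\histstatecq{x}{y}$ has amplitude $1/\sqrt{L+1}$ and the state of register $B_1$ at time $L$ is (the first output qubit of) $V\ket{x}\ket{y}\ket{0}$, we get $\trace\left(\hout\,\histstatecqketbra{x}{y}\right) = \frac{1}{L+1}\cdot p_{\rm rej}(x,y)$, where $p_{\rm rej}(x,y)$ is $V$'s rejection probability; crucially $\hout$ may have off-diagonal matrix elements $\bra{\histstatecq{x}{y}}\hout\ket{\histstatecq{x'}{y'}}$, each of absolute value at most $\frac{1}{L+1}$ since it is a single clock term.

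The main obstacle — and the reason the bound is $\zeta\epsilon$ rather than $\epsilon/(L+1)$ — is controlling these off-diagonal terms of $\hout$, of which there are up to $2^{2(n+m)}$ in number. Here I would split $\ket{\Psi} = \ket{\Psi_{\rm good}} + \ket{\Psi_{\rm bad}}$ where $\ket{\Psi_{\rm bad}}$ is the projection onto those $\histstatecq{x}{y}$ with some $i\in T$ having $x_i=0$. On $\ket{\Psi_{\rm good}}$, the $\sum G_i$ term vanishes and the diagonal $\hout$ contribution is at most $\frac{\epsilon}{L+1}\|\ket{\Psi_{\rm good}}\|^2$ by the acceptance hypothesis; the off-diagonal $\hout$ contribution across all pairs is crudely bounded using $\bra{\Psi}\hout\ket{\Psi} \leq \snorm{\hout\hist}\cdot$(something), or more carefully by writing $\hout\hist = \sum_{x,y}\ket{\phi_{x,y}}\bra{\histstatecq{x}{y}}$ for $\ket{\phi_{x,y}}:=\hout\histstatecq{x}{y}$ with $\|\ket{\phi_{x,y}}\|^2 = \trace(\hout\histstatecqketbra{x}{y}) \le \frac{1}{L+1}$ (on good strings, $\le\frac{\epsilon}{L+1}$), and then Cauchy–Schwarz gives $|\bra{\Psi}\hout\ket{\Psi}| \le \big(\sum_{x,y}\|\ket{\phi_{x,y}}\|\cdot|\alpha_{x,y}|\big)^2$-type estimates; pushing the counting through yields the factor $(1+2^{2(n+m)})$. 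On $\ket{\Psi_{\rm bad}}$, the positive term $\sum_{i\in T}G_i \succeq \hist_{\rm bad}$ dominates the negative $-\hout$ contribution (which is at most $1$ in operator norm on $\shist$ anyway, since $\snorm{\hout}\le1$), so that block is $\preceq 0$. Finally I would handle the cross terms between $\ket{\Psi_{\rm good}}$ and $\ket{\Psi_{\rm bad}}$ the same way via Cauchy–Schwarz, absorbing them into the constant, and collect everything into the stated bound $\hist\left[\hout-\sum_{i\in T}G_i\right]\hist \preceq \frac{2(1+2^{2(n+m)})}{L+1}\epsilon I = \zeta\epsilon I$. The delicate accounting is making sure the $2^{2(n+m)}$ blowup from summing off-diagonal terms is tamed by the fact that $\epsilon = 2^{-4(n+m)}$ is exponentially small, which is exactly why the reduction in Theorem~\ref{thm:qmmwwToQssc} chose $\epsilon$ that way.
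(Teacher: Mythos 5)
Your proposal takes essentially the same route as the paper: decompose an arbitrary history state into the ``good'' component (strings $x$ with $x_i=1$ for all $i\in T$) and the ``bad'' orthogonal complement, observe that $\hist\sum_{i\in T}G_i\hist$ acts diagonally with eigenvalue $|\{i\in T:x_i=0\}|$, bound the diagonal and off-diagonal $\hout$ contributions on the good component by $\epsilon/(L+1)$ per term via Cauchy--Schwarz, count the $2^{2(n+m)}$ cross terms to get $\zeta$, and let the positive $\sum G_i$ contribution dominate $\hout$ on the bad component. The only presentational difference is that the paper packages the good/bad cross terms via the inequality $\braket{a}{b}+\braket{b}{a}\leq\braket{a}{a}+\braket{b}{b}$ (picking up the factor of $2$ in $\zeta$), rather than your looser ``absorb into the constant'' phrasing, but the substance is the same.
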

\begin{proof}
    Define $Z_1:=\hist(-\sum_{i\in T} G_i)\hist$ and $Z_2:=\hist \hout\hist$. Letting $z\in\set{0,1}^n$ denote the characteristic vector of $T$, i.e.\ the $i$th bit of $z$ is set to one if and only if $i\in T$, it follows that any state $\histstatecq{x}{y}$ is an eigenvector of $Z_1$ with eigenvalue $\braket{x}{z}-\abs{T}$. Hence, for example, $
        \trace\left(Z_1\histstatecqketbra{1^n}{y}\right)=0.
    $
    Further, since $V$ accepts a \emph{non-empty} monotone set, it must accept input $(1^n,\ket{y})$ with probability at least $1-\epsilon$, implying
    $
        \trace(Z_2\histstatecqketbra{1^n}{y})\leq \frac{\epsilon}{L+1}.
    $
    This yields an upper bound of
    \[
        \trace((Z_1+Z_2)\histstatecqketbra{1^n}{y})\leq\frac{\epsilon}{L+1}
    \]
    in this simple case. We now show that deviating from $\histstatecq{1^n}{y}$ above cannot increase our expected value against $Z_1+Z_2$ by ``too much''.

    To do so, let $\ket{\phi}=\alpha_1\ket{\phi_1}+\alpha_2\ket{\phi_2}$ be an arbitrary valid history state where $\abs{\alpha_1}^2+\abs{\alpha_2}^2=1$, $\ket{\phi_1}$ is a (normalized) superposition of valid history states where each history state in the superposition has a string $x$ in register $A$ at time zero satisfying $x_i =1$ if $i\in T$, and where $\ket{\phi_2}$ is a valid history state in the space orthogonal to space of all possible states $\ket{\phi_1}$. We thus first have that
    \[
        \trace\left(Z_1\ketbra{\phi}{\phi}\right)\leq0+\alpha_2^2\trace\left(Z_1\ketbra{\phi_2}{\phi_2}\right)\leq \alpha_2^2[(\abs{T}-1)-\abs{T}]\leq-\abs{\alpha_2}^2.
    \]
    %SEV: the terms with phi_1 disappear, and for phi_2, if there is 1 term in phi_2, it must differ in at least 1 position from z, and hence we get a nupper bound on the energy of (|T|-1)-|T| times alpha_2^2. To deal with more than 1 terms in phi_2, note that all cross terms disappear, since the H_i project onto time step 0 and ensure ith bit of both cross terms is 0, and after that we take the inner product of the cross terms, which of course goes to zero unless all bits in A and B registers agree.
    Moving on to $Z_2$, observe that straightforward expansion yields
    \begin{eqnarray*}
        \trace(Z_2\ketbra{\phi}{\phi})&=&\abs{\alpha_1}^2\trace(Z_2\ketbra{\phi_1}{\phi_1})+\abs{\alpha_2}^2\trace(Z_2\ketbra{\phi_2}{\phi_2})\\
        &+& \alpha_1\alpha_2^*\trace(Z_2\ketbra{\phi_1}{\phi_2})+\alpha_1^*\alpha_2\trace(Z_2\ketbra{\phi_2}{\phi_1}).
    \end{eqnarray*}
    To upper bound this quantity, we use the fact that $\braket{a}{b} + \braket{b}{a} \leq \braket{a}{a} + \braket{b}{b}$ for complex vectors $\ket{a}$ and $\ket{b}$. Namely, setting $\ket{a}:=\alpha_1\sqrt{Z_2}\ket{\phi_1}$ and $\ket{b}:=\alpha_2\sqrt{Z_2}\ket{\phi_2}$ yields
    \begin{eqnarray}
        \trace(Z_2\ketbra{\phi}{\phi})&\leq& 2\abs{\alpha_1}^2\trace(Z_2\ketbra{\phi_1}{\phi_1})+2\abs{\alpha_2}^2\trace(Z_2\ketbra{\phi_2}{\phi_2})\nonumber\\
        &\leq& 2\abs{\alpha_1}^2\trace(Z_2\ketbra{\phi_1}{\phi_1})+ 2\abs{\alpha_2}^2\frac{1}{L+1}\label{eqn:last},
    \end{eqnarray}
    where the second inequality follows since $\snorm{Z_2}\leq 1/(L+1)$. Finally, in order to upper bound the term $\trace(Z_2\ketbra{\phi_1}{\phi_1})$ in Equation~(\ref{eqn:last}), observe that since by assumption $\trace(Z_2\histstatecqketbra{x}{y})\leq \frac{\epsilon}{L+1}$ for all $x$ with $x_i=1$ for $i\in T$, and since $\hout$ is a projector, it follows that the norm of $\hout \histstatecq{x}{y}$ is at most $\sqrt{\epsilon/(L+1)}$. Using the Cauchy-Schwarz inequality, this implies that each cross term in the expansion of $\trace(Z_2\ketbra{\phi_1}{\phi_1})$ can contribute a value of magnitude at most $\epsilon/(L+1)$. Since there are at most $2^{2(n+m)}$ such cross terms, and since the non-cross terms are weighted by a convex combination, we hence have the upper bound of $\trace(Z_2\ketbra{\phi_1}{\phi_1})\leq (1+2^{2(n+m)})\epsilon/(L+1)$.
    Combining these bounds, we have
    \begin{eqnarray*}
        \trace((Z_1+Z_2)\ketbra{\phi}{\phi})&\leq& -\abs{\alpha_2}^2+\frac{2\abs{\alpha_1}^2(1+2^{2(n+m)})\epsilon}{L+1}+\frac{2\abs{\alpha_2}^2}{L+1}\\
        &=&\frac{2\abs{\alpha_1}^2(1+2^{2(n+m)})\epsilon+\abs{\alpha_2}^2(1-L)}{L+1}\\
        &\leq&\frac{2(1+2^{2(n+m)})}{L+1}\epsilon\\
        &=&\zeta\epsilon
    \end{eqnarray*}
    where the second inequality holds when $L\geq 1$.
\end{proof}

Finally, we show that QIRR is $\cqs$-hard to approximate.

\begin{theorem}\label{thm:QSSCtoQIRR}
    There exists a polynomial time reduction which, given an instance of an arbitrary $\cqs$ problem $\Pi$, outputs an instance of QIRR with threshold parameters $h$ and $h'$ satisfying $h'/h\in \Theta(N^\epsilon)$ for some $\epsilon>0$, where $N$ is the encoding size of the QIRR instance.
\end{theorem}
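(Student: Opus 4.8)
The plan is to obtain the reduction by composition: Theorem~\ref{thm:qmmwwHard} reduces an arbitrary $\cqs$ problem to QMW with a $\Theta(N^\epsilon)$ gap, Theorem~\ref{thm:qmmwwToQssc} reduces QMW to QSSC with only an additive change of $+2$ in the thresholds, and it therefore remains to supply a polynomial-time, essentially gap-preserving reduction from QSSC to QIRR. The one genuine obstacle is that QIRR insists that every term be a \emph{nonnegative} scalar times a tensor product of $5$-local orthogonal projectors, while the QSSC instances output by Theorem~\ref{thm:qmmwwToQssc} involve the terms $G_{n+1}=(\Delta+1)(\hin+\hprop+\hstab)$ and $G_{n+2}=I-(\hin+\hprop+\hstab+\hout)$, neither of which has that form -- indeed $G_{n+2}$ is not even positive semidefinite. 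Hence the real work is to \emph{re-express} such an instance using only projector-type terms whose subset sums retain the same spectral behaviour.

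Concretely, I would keep $G_1,\dots,G_n$ unchanged, since each is already $(L+1)$ times a tensor product of $1$-local projectors. I would replace $G_{n+1}$ by the collection of its $O(L)$ constituent terms -- each local term of $\hin$, each propagation projector $H_j$ of $\hprop$, and each local term of $\hstab$ -- each rescaled by $\Delta$; every one of these is (a nonnegative scalar times) a tensor product of $\le 5$-local projectors, using that Kitaev's $H_j$ are themselves $5$-local projectors. The delicate point is $G_{n+2}$: its sole role in Theorem~\ref{thm:qmmwwToQssc} is to raise \emph{every} state to energy $\approx 1$ (it supplies the ``$I$'' in the identity $G_S=I+\sum_{i=1}^n G_i-\hout+\Delta(\hin+\hprop+\hstab)$), and we cannot use $I$ itself as a QIRR term since $\{I\}$ would be a trivial cover. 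Instead I would add the single term $(L+1)P_{\mathrm{acc}}$ with $P_{\mathrm{acc}}:=\ketbra{1}{1}_{B_1}\otimes\ketbra{L}{L}_D$ (the ``accept'' analogue of $\hout$) and drop $\hout$ altogether: non-history states are then penalized by the rescaled $\hin,\hprop,\hstab$ terms, valid history states of accepted strings by $(L+1)P_{\mathrm{acc}}$, and valid history states of non-accepted (in particular, low Hamming weight) strings by the $G_i$'s, so the whole space stays covered. With constants $\alpha>\beta$ chosen so that (after a global rescaling) $\alpha-\beta\ge 1$, and thresholds $h:=f+\mu$, $h':=f'+\mu$ where $\mu=O(L)$ is the number of ``machinery'' terms and $f,f'$ are the QMW thresholds, this fixes the QIRR instance.

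Correctness would follow the template of Theorem~\ref{thm:qmmwwToQssc}. In the YES case, for a minimum-Hamming-weight string $x$ accepted by $W$, the set $S'$ consisting of all machinery terms together with $\{G_i:x_i=1\}$ has size $\mu+|x|\le h$, and one shows $H_{S'}\succeq\alpha I$ (and, for the full instance, $H_S\succeq\alpha I$), which renders the low-energy clause of the QIRR YES condition vacuous and reduces the high-energy clause to exactly $H_{S'}\succeq\alpha I$. Proving $H_{S'}\succeq\alpha I$ is where the heavy machinery returns: one applies the Projection Lemma with $Y_2=\Delta(\hin+\hprop+\hstab)$ (whose relevant spectral gap is $\Omega(\Delta/L^3)$ by Lemma~\ref{l:y2bound}) and $Y_1=(L+1)P_{\mathrm{acc}}+\sum_{i\in T}G_i$, and lower-bounds $\lambda(Y_1|_{\shist})$ by a positive constant through a cross-term estimate analogous to Lemma~\ref{l:ubound}, exploiting that history states supported on strings $x\supseteq T$ are accepted, so $(L+1)P_{\mathrm{acc}}$ contributes $\approx 1$ on them while $\sum_{i\in T}G_i$ penalizes the remaining weight. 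In the NO case, a candidate set of size $\le h'$ that omits some machinery term has a low-energy witness, found term by term: if $(L+1)P_{\mathrm{acc}}$ is omitted then $\histstatecq{1^n}{y}$ has energy $0$; if a propagation or stabilizer term is omitted then an appropriately truncated or invalid-clock history state does, just as in Kitaev's analysis and the treatment of $G_{n+1}$ in Theorem~\ref{thm:qmmwwToQssc}; and a set containing all machinery terms contains at most $f'$ of the $G_i$'s, whence the history state $\histstatecq{x}{y}$ for the string $x$ supported on exactly those indices -- which has Hamming weight $\le f'$ and is hence rejected by $W$ -- has energy $\le\beta$ against $H_{S'}$ by Definition~\ref{def:cQMA} and error reduction, while having energy $\ge\alpha$ against $H_S$.

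For the gap, $h'/h=(f'+\mu)/(f+\mu)$, and $\mu=O(L)$ is a polynomial in $n$ whose dependence on the disperser parameter $\gamma$ of Theorem~\ref{thm:qmmwwHard} is of the same $c(n)^{O(\gamma)}$ order as that of $L$ itself, whereas the QMW thresholds satisfy $f,f'=\Theta(c(n)^{\Theta(\gamma)})$ with strictly larger exponents; so taking $\gamma$ large enough makes $\mu$ negligible relative to $f$, and $h'/h=\Theta(f'/f)\in\Theta(N^\epsilon)$ survives, after which the amplification of Umans~\cite{U99} together with the improved dispersers of Ta-Shma, Umans, and Zuckerman~\cite{TUZ07} upgrade this to the stated ratio. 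I expect the main obstacle to be replacing the non-positive, non-local term $G_{n+2}$ by positive projector-type terms without destroying the property $H_S\succeq\alpha I$ -- the ``intricate balancing act'' that forces a fresh Projection-Lemma/Geometric-Lemma analysis of the new terms -- together with the delicate NO-case bookkeeping (exhibiting a low-energy witness for every way of dropping a machinery term, the input-penalty terms being the most delicate) and verifying that the additive, but polynomially large, blow-up $\mu$ in the thresholds does not swamp the $N^\epsilon$ gap.
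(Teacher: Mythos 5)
Your route is genuinely different from the paper's: you decompose $G_{n+1}$ into its constituent $5$-local projectors and replace $G_{n+2}$ by the single ``accept'' projector $(L+1)P_{\mathrm{acc}}$, whereas the paper introduces a tag qubit $A$ and a $\log r$-qubit ``chaperone'' register $C$ so that every machinery term $F_{n+j}$ carries an auxiliary component $\ketbra{1}{1}_A\otimes I_B\otimes\ketbra{j-1}{j-1}_C$ living in a fresh subspace. That auxiliary component is not decoration --- it is exactly what makes the NO case go through. Recall that the QIRR NO condition is strictly stronger than the QSSC NO condition you used in Theorem~\ref{thm:qmmwwToQssc}: you must exhibit a \emph{single} state $\ket{\phi}$ with $\trace(H_S\ketbra{\phi}{\phi})\ge\alpha$ \emph{and} $\trace(H_{T'}\ketbra{\phi}{\phi})\le\beta$. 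Your ``found term by term'' witnesses do not meet this for omitted $\hin$ terms: a state overlapping $\ket{1}_{C_i}\otimes\ket{0}_D$ that also evades the remaining $\Delta$-weighted $\hprop$ and $\hstab$ penalties is essentially a history state started from a corrupted ancilla $\ket{e_i}_C$, and its energy against $(L+1)P_{\mathrm{acc}}$ then depends on the uncontrolled behaviour of $V$ on that corrupted input --- it may be $\approx 1>\beta$, ruining the low-energy half of the requirement. The paper sidesteps this entirely: the chaperone witness $\ket{1}_A\ket{\psi}_B\ket{j-1}_C$ is orthogonal to the $\ketbra{0}{0}_A$ sector by construction, so whenever $F_{n+j}$ is dropped it has energy $0$ against $F_{T'}$ while having energy $\ge\gamma$ against $F_T$ from the chaperone component alone, with no reference whatsoever to the dynamics of $V$.

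The second gap is in your parameter accounting, and it is fatal as stated. You claim that ``taking $\gamma$ large enough makes $\mu$ negligible relative to $f$,'' but $\mu=\Theta(r)=\Theta(L)$, where $L$ is the length of the cQMA circuit $W$ from Theorem~\ref{thm:qmmwwHard}, which at minimum reads its $|R|$-bit input, so $L\ge|R|=\Theta(f')$; increasing $\gamma$ scales $L$ and $f'$ together, and you always have $\mu\ge f'>f$. Hence $h'/h=(f'+\mu)/(f+\mu)\to 1$ and the gap evaporates. The paper explicitly flags this ($r\in\omega(g),\omega(g')$) and repairs it by further splitting each $F_i$, $i\in[n]$, into $r$ chaperone-indexed copies $F_{i,j}$, converting the additive overhead into a \emph{common multiplicative} factor $r$; the thresholds become $h=gr-1$ and $h'=g'r-1$, so $h'/h\approx g'/g$ is preserved. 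Some such multiplicative rescaling is indispensable; without it your construction cannot yield the claimed $\Theta(N^\epsilon)$ ratio, regardless of how the machinery terms are packaged.
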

\begin{proof}
We begin by applying Theorems~\ref{thm:qmmwwHard} and~\ref{thm:qmmwwToQssc} to reduce the instance of $\Pi$ to an instance $(S=\set{G_i}_{i=1}^{n+2},\alpha,\beta,g,g')$ of QSSC, and henceforth assume the terminology and definitions introduced in Theorem~\ref{thm:qmmwwToQssc}. Recall that any cover in this QSSC instance must include the terms $G_{n+1}$ and $G_{n+2}$. For ease of exposition, we first reduce this instance to QIRR with parameters $h=g+2r-3$ and $h'=g'+2r-3$, where recall $r$ is the number of terms in $H=\sum_{i=1}^r H_i$. This, however, does not suffice to obtain a hardness of approximation gap, as tracing through Theorems~\ref{thm:qmmwwHard} and~\ref{thm:qmmwwToQssc} yields $r\in\omega(g),\omega(g')$, implying $h'/h\rightarrow 1$ as the instance $\Pi$ in Theorem~\ref{thm:qmmwwHard} grows in size. We then slightly modify our reduction to improve the threshold parameters to $h=gr-1$ and $h'=g'r-1$, which yield the desired hardness of approximation gap.

We now state our instance $(T,\gamma,\delta,h,h')$ of QIRR, and follow with an intuitive explanation. For simplicity of exposition, we assume $r$ is a power of two, but our construction can be easily modified to handle the complementary case.
%If $r$ is not a power of 2, simply add up to $r$ terms G_i to the construction below of
%the form (\Delta+1) |1><1|\otimes I\otimes |i><i|. Then, the energy thresholds $\gamma$ and $\delta$ don't change, equation 16 is identical so first direction of proof goes through unchanged, and chaperone qubits mean these extra terms must always be chosen in T', so second direction of proof holds. Finally, the thresholds h and h' change by at most r, but this is inconsequential in terms of the gap.
We also label $H_r = \hout$. We now introduce three registers: a ``tag'' qubit register (denoted $A$), the space the original cover $\mathcal{S}$ acts on (denoted $B$), and $\log r$ ``chaperone'' qubits (denoted $C$). The Hamiltonian terms we define for QIRR, $T:=\set{F_i}_{i=1}^{n+2r-1}$, act on $A\otimes B\otimes C = \B\otimes\B^{\otimes (n+m+p+q)}\otimes\B^{\otimes \log r}$, and are defined as:
\begin{eqnarray*}
    F_1 &:=& \ketbra{0}{0}_A\otimes (G_1)_B\otimes I_C\\
    &\vdots&\\
    F_n &:=& \ketbra{0}{0}_A\otimes (G_n)_B\otimes I_C\\
    F_{n+1} &:=& (\Delta+1)\left[\ketbra{0}{0}_A\otimes (H_1)_B\otimes I_C+\ketbra{1}{1}_A\otimes I_B\otimes\ketbra{0}{0}_C\right]\\
    &\vdots&\\
    F_{n+r-1} &:=&  (\Delta+1)\left[\ketbra{0}{0}_A\otimes (H_{r-1})_B\otimes I_C+\ketbra{1}{1}_A\otimes I_B \otimes\ketbra{r-2}{r-2}_C\right]\\
    F_{n+r} &:=& \ketbra{0}{0}_A\otimes (I-H_1)_B\otimes I_C + \ketbra{1}{1}_A\otimes I_B \otimes \ketbra{r-1}{r-1}_C\nonumber\\
    &\vdots&\\
    F_{n+2r-1} &:=& \ketbra{0}{0}_A\otimes (I-H_r)_B\otimes I_C + \ketbra{1}{1}_A\otimes I_B \otimes\ketbra{r-1}{r-1}_C.\nonumber
\end{eqnarray*}
We set $\gamma:=\alpha+r-1$, $\delta:=\beta+r-1$, $h:=g+2r-3$, and $h':=g'+2r-3$. Note that each $F_j$ is a projection up to scalar multiplication, as required. We now provide the intuition behind the construction. QIRR is stated in terms of projectors $F_j$ (up to scalar multiplication), whereas QSSC is stated in terms of Hermitian operators $G_i$. Hence, in order to move from the latter to the former, a natural idea is to treat each local Hamiltonian term in the sums comprising $G_{n+1}$ and $G_{n+2}$ as distinct terms $F_{n+1},\ldots,F_{n+r-1}$ and $F_{n+r},\ldots,F_{n+2r-1}$, respectively. The problem with this approach is that in order to rigorously argue that the gap between thresholds $g$ and $g'$ for QSSC is preserved when defining thresholds $h$ and $h'$ for QIRR, we would like, for example, that \emph{all} terms $F_j$ making up $G_{n+1}$ are chosen together in any candidate cover $T'\subseteq T$. To address this issue, we introduce the chaperone qubits, which ensure that any candidate $T'$ plays by these rules. In particular, we can make sure that all terms $F_{n+1},\ldots,F_{n+2r-1}$ are chosen in any $T'$, allowing us to rigorously apply our knowledge of the spectra of $G_{n+1}$ and $G_{n+2}$ to the analysis of $F_T$ versus $F_{T'}$.

We now show that if there exists a cover $S^\prime\subseteq S$ for QSSC of size $v$, then there exists a $T'\subseteq T$ such that $\abs{T'}=v+2r-3$ satisfying the conditions for a YES instance of QIRR. Namely, let
\begin{equation}
    T' = \set{F_i}_{i\in [n] \text{ and } G_i\in S'}\cup\set{F_{n+1},\ldots,F_{n+2r-1}}.\label{eqn:succint}
\end{equation}
Note that it suffices to show that $F_{T'}\succeq \gamma I$ (since if $F_{T'}\succeq \gamma I$, then $F_{T}\succeq \gamma I$ as well). To show this, observe first that we can write $F_{T'}=K_1+K_2$, for $K_1$ and $K_2$ defined as:
\begin{eqnarray}
    K_1 &:=& \ketbra{0}{0}_A\otimes\left(\sum_{i\in [n]\text{ and }G_i\in S'} G_i + (\Delta+1)\sum_{i=1}^{r-1}H_i + \sum_{i=1}^{r}(I-H_i)\right)_B\otimes I_C \nonumber \\&=&\ketbra{0}{0}_A\otimes\left(G_{S'}+(r-1)I\right)_B\otimes I_C\label{eqn:K1}\\
    K_2 &:=& \ketbra{1}{1}_A\otimes I_B\otimes \left((\Delta+1)\left(\sum_{i=0}^{r-2}\ketbra{i}{i}\right)+r\ketbra{r-1}{r-1}\right)_C\nonumber\\ &=&\ketbra{1}{1}_A\otimes I_B\otimes\left(rI + (\Delta+1-r)\sum_{i=0}^{r-2}\ketbra{i}{i}\right)_C\label{eqn:K2},
\end{eqnarray}
where we can assume without loss of generality that $\Delta\geq r-1$. Let $\ket{\phi}=a_0\ket{0}_A\ket{\phi_0}_{BC}+a_1\ket{1}_A\ket{\phi_1}_{BC}$ be an arbitrary state acting on this space with $\abs{a_0}^2+\abs{a_1}^2=1$ and for some unit vectors $\ket{\phi_0}_{BC}$ and $\ket{\phi_1}_{BC}$. Then
\begin{eqnarray*}
    \trace(F_{T'}\ketbra{\phi}{\phi})&=&\trace(K_1\ketbra{\phi}{\phi})+\trace(K_2\ketbra{\phi}{\phi})\\
    &=&\abs{a_0}^2\trace(K_1\ketbra{0}{0}\otimes\ketbra{\phi_0}{\phi_0})+\abs{a_1}^2\trace(K_2\ketbra{1}{1}\otimes\ketbra{\phi_1}{\phi_1})\\
    &\geq&\abs{a_0}^2(\alpha + r-1)+\abs{a_1}^2r\\
    &\geq&\gamma,
\end{eqnarray*}
where the first inequality follows since $\trace(X_{AB}I_A\otimes Y_B)=\trace(\trace_A(X_{AB})Y_B)$ and since $G_{S'}$ is a cover by assumption, and the second inequality since $0\leq\alpha\leq 1$. We conclude that $H_{T'}\succeq \gamma I$, as desired.

We now prove the other direction, namely that if there does not exist a cover $S^\prime\subseteq S$ for QSSC of size $v$, then all subsets $T'\subseteq T$ of size $\abs{T'}=v+2r-3$ satisfy the conditions for a NO instance of QIRR. To see this, note first that any candidate $T'$ must include the terms $F_i$ for $n+1\leq i\leq n+r-1$. This is because if, for example, $F_{n+1}\not\in T'$, then vector $\ket{\phi}:=\ket{1}_A\ket{\psi}_B\ket{0}_C$ obtains expected value $\Delta+1\geq \gamma$ against $F_T$, but $\ket{\phi}$ is orthogonal to $F_{T'}$. A similar argument holds for the terms $F_i$ with indices $n+r\leq i\leq n+2r-1$, since state $\ket{\phi}:=\ket{1}_A\ket{\psi}_B\ket{r-1}_C$ obtains expected value $r\geq \gamma$ against $F_T$, but obtains value at most $r-1\leq \delta$ against $F_{T'}$ if there exists an $i\in [n+r,n+2r-1]$ such that $i\not\in T'$. Thus, for any candidate $T'$ of size $v+2r-3$, this leaves $v-2$ terms to be chosen from $\set{F_1,\ldots,F_n}$. If we now restrict ourselves to states of the form $\ket{0}_A\ket{\psi}_{BC}$, we find that we are reduced to the same argument in the NO direction of Theorem~\ref{thm:qmmwwToQssc} -- namely, as $S$ is a cover and any $S'\subseteq S$ of size $v$ is not a cover, there must exist a state $\ket{\phi}:=\ket{0}_A\ket{\psi}_{BC}$ such that
\begin{equation}\label{eqn:qirrsound1}
    \trace(\ketbra{\phi}{\phi}F_T)=\trace\left[\trace_C(\ketbra{\psi}{\psi})(G_S + (r-1)I)\right]\geq \alpha + (r-1)\geq \gamma,
\end{equation}
whereas
\begin{equation}\label{eqn:qirrsound2}
    \trace(\ketbra{\phi}{\phi}F_{T'})=\trace\left[\trace_C(\ketbra{\psi}{\psi})(G_{S'} + (r-1)I)\right]\leq \beta + (r-1)=\delta.
\end{equation}
This concludes the reduction from QSSC to QIRR with parameters $h=g+2r-3$ and $h'=g'+2r-3$.

To obtain improved parameters $h=gr-1$ and $h'=g'r-1$, we modify the construction above as follows (intuition to follow): The terms $F_i$ for $n+1\leq n+2r-1$ from the old construction remain unchanged. For $i\in[n]$, we replace each $F_i := \ketbra{0}{0}_A\otimes (G_i)_B\otimes I_C$ with the $r$ distinct terms:
\begin{eqnarray*}
&F_{i,1}& := \ketbra{0}{0}_A\otimes (G_i)_B\otimes \ketbra{0}{0}_C,\\
    &F_{i,2}& := \ketbra{0}{0}_A\otimes (G_i)_B\otimes \ketbra{1}{1}_C,\\
    &\vdots&\\
    &F_{i,r}& := \ketbra{0}{0}_A\otimes (G_i)_B\otimes \ketbra{r-1}{r-1}_C.
\end{eqnarray*}
Thus, the total number of terms in our QIRR instance increases from $n+2r-1$ to $r(n+2)-1$. Intuitively, we have used the chaperone qubits to split each $F_i$ into $r$ terms $F_{i,j}$, such that if in the old construction we chose $F_i\in T'$, then in the new construction we must place all $r$ terms $F_{i,j}$ in $T'$ in order for the new $F_{T'}$ to maintain its desired spectrum. Thus, whereas the old construction chose $g-2$ terms $F_i$ to place in $T'$, the new construction chooses $r(g-2)$ terms $F_{i,j}$ to place in $T'$, yielding the desired thresholds $h=gr-1$ and $h'=g'r-1$.

The completeness and soundness proofs now follow similarly to the previous case. Namely, given a cover $S'\subseteq S$ for QSSC of size $v$, the set $T'\subseteq T$ with $\abs{T'}=vr-1$ we choose is
\begin{equation}
    T' = \set{F_{i,j}}_{i\in [n] \text{ and } G_i\in S',j \in [r]}\cup\set{F_{n+1},\ldots,F_{n+2r-1}}.
\end{equation}
Since $F_{T'}$ in this new reduction is precisely $F_{T'}$ in the old reduction, the remainder of this direction proceeds identically. Conversely, if there does not exist a cover $S'\subseteq S$ for QSSC of size $v$, we similarly first argue that $F_i$ for $n+1\leq i\leq n + 2r-1$ must be chosen in any candidate $T'\subseteq T$ of size $\abs{T'}=vr-1$, leaving $r(v-2)$ terms to be chosen from $\set{F_{1,1},\ldots, F_{n,r}}$. This implies that for any such $T'$, there must exist a $j\in [r]$ such that the number of terms $F_{i,j}$ in $T'$ is at most $v-2$. Since no cover of size $v$ exists for our QSSC instance, we conclude there exists an appropriate choice of $\ket{\phi}:=\ket{0}_A\ket{\psi}_B\ket{j}_C$ such that Equations~(\ref{eqn:qirrsound1}) and~(\ref{eqn:qirrsound2}) still hold.
\end{proof}

\section{Improvements to hardness gaps}\label{scn:improvements}

We now improve the hardness gaps of Theorems~\ref{thm:qmmwwHard},~\ref{thm:qmmwwToQssc}, and~\ref{thm:QSSCtoQIRR} to obtain the results claimed in Theorems~\ref{thm:QSSChard} and~\ref{thm:QIRRhard}. The key idea is to use the fact that the gap for QMW from Theorem~\ref{thm:qmmwwHard} can be amplified by composing the cQMA circuit $W$ with itself. The results here adapt Section 5 of~\cite{U99} in a simple manner to the quantum setting.

Specifically, assume for the moment that the output qubit of $W$ is actually a classical bit, i.e.\ that the output qubit is given \emph{after} being measured in the computational basis. Then, one can recursively define $W^1:=W$ and $W^t$ as $W^{t-1}$ with $n$ independent copies of $W$ at each of its $n$ INPUT bits. (Note that entanglement between quantum proofs for different copies of $W$ does not affect the soundness of $W^t$, as each $W$ outputs a classical bit, and no quantum proofs are reused.) Now, such a recursive composition of $W$ can easily be made well-defined even if $W$'s output qubit is a superposition of $\ket{0}$ and $\ket{1}$ using the principle of deferred measurement~\cite{NC00} -- namely, without loss of generality, we can assume $W$ first copies its $n$ classical INPUT bits to an ancilla, and henceforth acts only on its CHOICE and ancilla registers. Thus, the output qubit of each copy of $W$ in $W^t$ is effectively used only as a classical control in the remainder of the circuit, and so the measurement of all output qubits can be deferred to the end of $W^t$. Finally, since we can assume using standard error reduction that the completeness and soundness error of $W$ scale as $2^{-n}$, it follows by the union bound that with probability exponentially close to $1$, all the $W$ circuits comprising $W^t$ output the correct answer. In other words, with high probability, one can think of $W^t$ as a composition of zero-error circuits $W$ (where zero-error means zero completeness and soundness error). With this viewpoint, the proof of Lemma 3 of Reference~\cite{U99} directly yields the following result in the quantum setting.

\begin{lemma}\label{l:umansamp}
        If W is a cQMA circuit accepting exactly a monotone set, it follows that:
    \begin{enumerate}
        \item $\abs{W^t}\leq n^t \abs{W}$, where $\abs{W}$ denotes the size of $W$,
        \item $W$ accepts an input of Hamming weight $k$ if and only if $W^t$ accepts an input of weight $k^t$,
        \item $W^t$ accepts exactly a monotone set.
    \end{enumerate}
\end{lemma}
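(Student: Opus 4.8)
The plan is to verify the three claims in order, each following essentially by unwinding the definition of the composition $W^t$ together with the zero-error reduction discussed just above the lemma statement. Throughout I may assume, as argued in the preceding paragraph, that $W$ is a zero-error $\X$ circuit (completeness and soundness error $0$): for any $x\in\set{0,1}^n$, either every quantum proof $\ket{z}$ makes $W$ output $1$, or some $\ket{z}$ makes $W$ output $0$. The composition $W^t$ is defined recursively: $W^1:=W$, and $W^t$ is $W^{t-1}$ with each of its $n$ INPUT bits replaced by an independent copy of $W$ (feeding those copies fresh CHOICE registers), the principle of deferred measurement being invoked to make this well-defined when $W$'s output is not literally a classical bit.

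\textbf{Size bound (claim 1).} This is a straightforward induction on $t$. For $t=1$ we have $\abs{W^1}=\abs{W}\leq n^1\abs{W}$. For the inductive step, $W^t$ consists of one copy of $W^{t-1}$ together with $n$ copies of $W$, one feeding each INPUT bit of $W^{t-1}$; hence $\abs{W^t}\leq \abs{W^{t-1}}+n\abs{W}$. Actually the cleaner accounting is to note that $W^t$ replaces each of the $n^{t-1}$ bottom-level $W$-copies in $W^{t-1}$ (or, symmetrically, each INPUT bit) by a copy of $W$; either way one gets $\abs{W^t}\leq n\cdot\abs{W^{t-1}}$, and the induction hypothesis $\abs{W^{t-1}}\leq n^{t-1}\abs{W}$ closes it. (Lower-order additive terms for the routing/ancilla wiring are absorbed into the asymptotic statement, exactly as in Lemma~3 of~\cite{U99}.)

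\textbf{Weight correspondence and monotonicity (claims 2 and 3).} These I would prove together by induction on $t$, since each relies on the other. The base case $t=1$ is vacuous for claim 2 and is the hypothesis for claim 3. For the inductive step, fix an input $u\in\set{0,1}^{n^t}$ to $W^t$, partitioned into $n$ blocks $u^{(1)},\dots,u^{(n)}$ of length $n^{t-1}$, each feeding one bottom copy of $W$; these copies' outputs form the $n$-bit INPUT string $b(u)=b_1\cdots b_n$ handed to the top circuit $W^{t-1}$. In the zero-error picture, $b_j$ is the (deterministic, proof-independent) answer of $W$ on $u^{(j)}$. \emph{Monotonicity:} flipping a $0$ to a $1$ anywhere in $u$ flips one bit of some block $u^{(j)}$ from $0$ to $1$; by monotonicity of the set accepted by $W$ this can only flip $b_j$ from $0$ to $1$, hence only flips an INPUT bit of $W^{t-1}$ from $0$ to $1$, which by the inductive hypothesis (claim 3 at level $t-1$) can only change $W^{t-1}$'s answer from reject to accept. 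So $W^t$ accepts a monotone set. \emph{Weight:} if $W$ accepts an input of weight $k$, choose $u$ to consist of $k$ blocks each set to a weight-$k$ string accepted by $W$ and $n-k$ blocks set to $0^{n^{t-1}}$; then $b(u)$ has weight exactly $k$ (the zero blocks give $b_j=0$ since $W$ accepts a \emph{monotone} set containing no string of weight $0$ — here one uses that the accepted set is nonempty but, being monotone, $0^{n}$ is accepted only in the trivial case, which one handles separately — actually the cleanest route is: $W$ rejects $0^{n^{t-1}}$ unless it accepts everything, and the "accepts everything" degenerate case is trivially consistent), and $W^{t-1}$ accepts this weight-$k$ input by the inductive hypothesis, so $W^t$ accepts $u$, which has weight $k\cdot k^{t-1}=k^t$. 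For the converse direction, if $W^t$ accepts some weight-$w$ input $u$, then $b(u)$ is a weight-$w'$ input accepted by $W^{t-1}$ for some $w'\le$ (number of nonzero blocks); by the inductive hypothesis $W^{t-1}$'s minimum accepting weight is $k^{t-1}$ where $k$ is $W$'s minimum accepting weight, so at least $k^{t-1}$ blocks have $b_j=1$, and each such block is an accepting input of $W$ hence has weight $\ge k$; thus $w\ge k^{t-1}\cdot k=k^t$. Combining the two directions gives that $W^t$'s minimum accepting weight is exactly $k^t$, which is the precise content of claim 2.

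\textbf{Main obstacle.} The only genuinely delicate point — and the reason one must pass to the zero-error picture first — is soundness under composition in the presence of \emph{quantum} proofs: a priori a malicious prover could entangle the CHOICE registers handed to different copies of $W$ inside $W^t$. The resolution, already flagged in the paragraph preceding the lemma, is that once $W$ has zero error, each copy's output is a fixed classical bit independent of its proof, so the deferred measurement collapses $W^t$ to a composition of deterministic gadgets and no entanglement can help; and the union bound guarantees that reducing $W$'s original error to $2^{-n}$ makes the probability that \emph{any} of the $\le n^t$ copies errs negligible. With that observation in hand, everything reduces to the purely combinatorial induction above, which is the content of Lemma~3 of~\cite{U99}, and the present lemma follows.
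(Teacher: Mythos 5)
Your proposal takes essentially the same route as the paper: establish the zero-error picture (deferred measurement, error reduction to $2^{-n}$, union bound over the copies, which handles the entanglement-between-proofs concern you correctly single out), then fall back to the purely combinatorial induction of Lemma~3 of~\cite{U99}, which the paper simply cites and you flesh out. However, your written combinatorial argument has an internal labeling inconsistency that would break it if read literally: you partition $u\in\set{0,1}^{n^t}$ into $n$ blocks of length $n^{t-1}$ and say each block feeds a copy of $W$ whose outputs form an $n$-bit input to a top copy of $W^{t-1}$ --- but $W$ takes $n$-bit inputs (not $n^{t-1}$-bit), and $W^{t-1}$ takes $n^{t-1}$-bit inputs (not $n$-bit). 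With blocks of length $n^{t-1}$, the correct decomposition is $W^t = W\bigl(W^{t-1},\ldots,W^{t-1}\bigr)$: each block feeds a copy of $W^{t-1}$, and the $n$ resulting bits feed the single top copy of $W$. The slip propagates into your weight argument, where ``choose $u$ to consist of $k$ blocks each set to a weight-$k$ string accepted by $W$ \ldots $W^{t-1}$ accepts this weight-$k$ input by the inductive hypothesis'' conflicts with the inductive hypothesis, which concerns weight $k^{t-1}$. The repair is immediate: set $k$ blocks to weight-$k^{t-1}$ accepting inputs of $W^{t-1}$ (these exist by the inductive hypothesis) and the remaining $n-k$ blocks to $0^{n^{t-1}}$; then $b(u)$ has weight $k$ and is accepted by the top $W$, and $u$ has weight $k\cdot k^{t-1}=k^t$. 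With the labels corrected, your monotonicity and converse-weight arguments go through exactly as in~\cite{U99}, and the proof is sound.
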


To improve the hardness gap of Theorem~\ref{thm:qmmwwHard}, we now simply replace the cQMA circuit $W$ constructed in the proof of Theorem~\ref{thm:qmmwwHard} with $W^t$ for an appropriate choice of $t$. The details and resulting analysis follow identically to the proof of Theorem 4 of Reference~\cite{U99}, which combined with the improved disperser construction of Reference~\cite{TUZ07} (see Theorem 7.2 therein) yields:

\begin{theorem}\label{thm:qmwampgap}
    QMW is $\cqs$-hard to approximate with gap $N^{1-\epsilon}$ for any $\epsilon>0$, for $N$ the encoding size of the QMW instance.
\end{theorem}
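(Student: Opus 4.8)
\textbf{Proof proposal for Theorem~\ref{thm:qmwampgap}.}

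The plan is to follow the template of Theorem~4 of Umans~\cite{U99} essentially verbatim, substituting our $\cqs$ ingredients (the circuit $W$ from Theorem~\ref{thm:qmmwwHard}, its composition $W^t$ from Lemma~\ref{l:umansamp}, and the improved disperser of Ta-Shma, Umans, and Zuckerman~\cite{TUZ07}) for their classical counterparts. First I would recall from the proof of Theorem~\ref{thm:qmmwwHard} that, starting from an arbitrary instance $\Pi$ of a $\cqs$ problem with classical-proof length $c(n)$, we obtain a $\X$ circuit $W$ accepting a non-empty monotone set, together with thresholds $g\approx c(n)2^d$ (the number of $R$-vertices needed to encode a single leaf) and $g'\approx\abs{R}/2\approx c(n)^\gamma 2^d/2$ (the threshold forced by the disperser), where $2^d$ is the left-degree. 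The encoding size $N$ of this QMW instance is $\poly(c(n))$; using the $(k,\epsilon)$-disperser of~\cite{TUZ07} with its near-optimal degree $2^d = \poly(k)\cdot\poly(n)$ makes $N$ as small as possible in terms of $g'$, which is exactly what drives the ratio toward $N^{1-\epsilon}$.

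Next I would amplify by replacing $W$ with $W^t$. By Lemma~\ref{l:umansamp}, $W^t$ still accepts a non-empty monotone set, has size $\leq n^t\abs{W}$, and accepts an input of Hamming weight $k$ iff $W$ accepts one of weight $k^{1/t}$; hence the natural thresholds for the amplified QMW instance become $g^t$ and $(g')^t$, so the \emph{ratio} $(g'/g)^t$ grows as the $t$th power of the original ratio while the \emph{encoding size} grows only by a factor $n^t$. Choosing $t$ to be a slowly growing function of $n$ (e.g.\ $t\in\Theta(\log n / \log\log n)$, or more precisely whatever choice Umans~\cite{U99} uses, tuned against the $\gamma$ parameter of the disperser) one arranges that the new encoding size $N'$ satisfies $\log N' = (1+o(1))\log(g'/g)^t$, which yields a gap of $(N')^{1-\epsilon}$ for every fixed $\epsilon>0$. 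Here one also fixes $\gamma$ large enough (independent of $\epsilon$ but depending on it) so that the base ratio $\Omega(c(n)^{\gamma-1})$ is polynomially larger than the blow-up incurred; the arithmetic is precisely that of the proof of Theorem~4 in~\cite{U99}, and I would simply cite it rather than re-derive it.

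Two quantum-specific points need a sentence of justification, and these are where I expect the only real (though mild) obstacles to lie. First, the recursive composition $W^t$ must be well-defined when $W$'s output qubit is a genuine superposition; this is handled by the deferred-measurement argument already spelled out in the paragraph preceding Lemma~\ref{l:umansamp} --- copy the classical INPUT bits to an ancilla, treat each copy's output qubit as a classical control, and defer all measurements to the end. Second, soundness of $W^t$ must not be spoiled by entanglement among the quantum CHOICE proofs supplied to the many copies of $W$: since (after error reduction) each $W$ has completeness/soundness error $2^{-n}$ and no quantum proof is reused across copies, the union bound lets us treat $W^t$ as a composition of zero-error monotone circuits with probability $1 - 2^{-\Omega(n)}$, at which point Lemma~\ref{l:umansamp} and the classical analysis of~\cite{U99} apply unchanged. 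With these observations in place, invoking Theorem~4 of~\cite{U99} together with Theorem~7.2 of~\cite{TUZ07} gives the claimed $N^{1-\epsilon}$ hardness, completing the proof.
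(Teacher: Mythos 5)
Your proposal takes essentially the same approach as the paper: replace the $\X$ circuit $W$ of Theorem~\ref{thm:qmmwwHard} with the recursive composition $W^t$, invoke Lemma~\ref{l:umansamp}, defer the arithmetic to Theorem~4 of Umans~\cite{U99} combined with the improved disperser of~\cite{TUZ07}, and dispose of the two quantum-specific issues (deferred measurement of the output qubit, and entanglement across the CHOICE registers of the many copies of $W$) via exactly the argument the paper gives in the paragraph preceding Lemma~\ref{l:umansamp}. The only cosmetic difference is that you sketch a candidate choice of $t$; the paper simply points to Umans' Theorem~4 for that bookkeeping, which you also do as a fallback.
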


 Using this as the starting point in our reduction chain to QSSC and QIRR, a closer analysis of the proofs of Theorems~\ref{thm:QSSChard} and~\ref{thm:QIRRhard} now yields:
\begin{corollary}\label{cor:qsscampgap}
    QSSC and QIRR are $\cqs$-hard to approximate with gaps $N^{1-\epsilon}$ and $N^{\frac{1}{2}-\epsilon}$ for any $\epsilon>0$, respectively, and where $N$ is the encoding size of the respective QSSC and QIRR instances.
\end{corollary}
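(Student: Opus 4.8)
The plan is to derive Corollary~\ref{cor:qsscampgap} by feeding the amplified starting point of Theorem~\ref{thm:qmwampgap} into the reduction chain $\class{QMW}\leq_K\class{QSSC}\leq_K\class{QIRR}$ established in Theorems~\ref{thm:qmmwwToQssc} and~\ref{thm:QSSCtoQIRR}, and carefully tracking how the encoding size and the threshold gap transform under each reduction. Concretely, start from an instance of QMW on a $\class{cQMA}$ circuit $W^t$ with thresholds $f$ and $f'$ satisfying $f'/f\in N_0^{1-\epsilon'}$, where $N_0$ is the encoding size of this QMW instance and $\epsilon'>0$ is arbitrary (by Theorem~\ref{thm:qmwampgap}). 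Since every step in the chain is a polynomial time Karp reduction, the final QSSC (resp. QIRR) encoding size $N$ is polynomially bounded in $N_0$; the main point to verify is that the multiplicative gap is not eroded below a fixed power.

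For QSSC: Theorem~\ref{thm:qmmwwToQssc} produces thresholds $g=f+2$ and $g'=f'+2$. The first step is to observe that the additive shift by the constant $2$ is asymptotically negligible relative to the gap, so that $g'/g=(f'+2)/(f+2)\in\Omega(f'/f)$ as the instance size grows (more precisely, since $f'\geq f$ and $f\to\infty$ along the hard family, $g'/g\geq \tfrac12\, f'/f$ eventually). The second step is to note $N\in\poly(N_0)$, say $N\leq N_0^{c}$ for some constant $c$; then $f'/f\geq N_0^{1-\epsilon'}\geq N^{(1-\epsilon')/c}$. Since $\epsilon'$ was arbitrary, reparametrizing (absorb $c$ into the exponent and choose $\epsilon'$ small enough as a function of the desired $\epsilon$) gives $g'/g\geq N^{1-\epsilon}$ for any $\epsilon>0$, which is exactly the claimed QSSC bound. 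One should also record that QSSC $\in\cqs$ (already shown in Theorem~\ref{thm:qmmwwToQssc}) so that the completeness/soundness of the whole pipeline is intact.

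For QIRR: the improved construction in Theorem~\ref{thm:QSSCtoQIRR} gives thresholds $h=gr-1$ and $h'=g'r-1$, where $r$ is the number of Hamiltonian terms in the Kitaev construction applied to $W^t$, hence $r\in\poly(N_0)$ as well. The first step is $h'/h=(g'r-1)/(gr-1)\in\Omega(g'/g)$, again because the $-1$ is negligible. The key quantitative step, and the one place where the exponent genuinely degrades, is that the QIRR encoding size is now $N_{\mathrm{QIRR}}\in\Theta(N_{\mathrm{QSSC}}^2)$ up to polynomial factors — the construction blows up the number of terms from $n+2$ to $r(n+2)-1$ with $r$ itself polynomial in the QSSC instance size, so the gap, measured against the new (larger) encoding size, is essentially squared down: $h'/h\in\Omega(g'/g)\geq \Omega(N_{\mathrm{QSSC}}^{1-\epsilon})\geq \Omega(N_{\mathrm{QIRR}}^{(1-\epsilon)/2})$, which after reparametrizing $\epsilon$ yields the $N^{1/2-\epsilon}$ bound. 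I expect the main obstacle to be exactly this bookkeeping: one must pin down how $r$ (equivalently, the circuit size $L$ of $W^t$, which grows by the factor $n^t$ from Lemma~\ref{l:umansamp}) enters the QIRR encoding size, and confirm that choosing $t$ large enough to push the QMW gap to $N_0^{1-\epsilon'}$ does not simultaneously inflate $r$ so much that the QIRR gap-versus-size ratio drops below $N^{1/2-\epsilon}$. This is handled by the same trick used in Section~5 of~\cite{U99}: the composition parameter $t$ is chosen so that all polynomial blowups (circuit size, disperser degree, number of Hamiltonian terms) are absorbed into an arbitrarily small loss in the exponent, so that the only genuine loss is the factor of $2$ incurred by the quadratic size increase of the QSSC-to-QIRR step. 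Assembling these observations gives the two claimed ratios, completing the proof of the corollary.
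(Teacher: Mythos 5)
Your overall strategy — feed Theorem~\ref{thm:qmwampgap} into the $\class{QMW}\leq_K\class{QSSC}\leq_K\class{QIRR}$ chain and track how the gap degrades against the growing encoding size — is exactly what the paper does, and your identification of the genuine $N_{\mathrm{QIRR}}\approx N_{\mathrm{QSSC}}^2$ blowup as the source of the $\tfrac12$ in the QIRR exponent is correct and is the key observation.

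However, your QSSC step contains a flaw that cannot be repaired by the reparametrization you describe. You write ``$N\leq N_0^c$ for some constant $c$; then $f'/f\geq N_0^{1-\epsilon'}\geq N^{(1-\epsilon')/c}$. Since $\epsilon'$ was arbitrary, reparametrizing \dots\ gives $g'/g\geq N^{1-\epsilon}$.'' This is false when $c>1$: $(1-\epsilon')/c$ is bounded above by $1/c<1$ for every $\epsilon'>0$, so no choice of $\epsilon'$ recovers an exponent arbitrarily close to $1$. Merely knowing $N\in\poly(N_0)$ is not enough; one must show the reduction from QMW to QSSC inflates encoding size only quasi-linearly, i.e.\ $N=N_0^{1+o(1)}$. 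That this in fact holds is precisely the content of the ``closer analysis'' the paper alludes to: the QSSC instance consists of $\bar n+2$ Hamiltonian terms on $\Theta(L)$ qubits (where $\bar n$ is the number of INPUT bits of $W^t$ and $L=|W^t|$), and as the composition parameter $t$ grows, $L=\bar n\cdot\poly(n)$ with $\poly(n)$ fixed, so the encoding size is $\tilde\Theta(\bar n)=\tilde\Theta(N_0)$ while the gap is $\tilde\Theta(\bar n)$ — that is the nontrivial bookkeeping, and it must be verified rather than assumed. You invoke exactly this ``$t$ large enough absorbs the polynomial blowups'' trick when discussing QIRR, so you clearly know the idea; the gap is that you do not apply it to the QSSC step, where it is equally necessary, and instead rely on a reparametrization that does not work.

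A secondary point: you say the improved QIRR construction uses $r(n+2)-1$ terms with ``$r$ itself polynomial in the QSSC instance size.'' It is worth being precise that $r=\Theta(L)$ is itself essentially the QSSC encoding size (up to polylog), which is exactly why the blowup is quadratic rather than some higher power; if $r$ were an arbitrary polynomial in $N_{\mathrm{QSSC}}$ the resulting exponent would not be $\tfrac12$. The accounting here needs the same care as in the QSSC step.
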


%Specifically, defining a \emph{co-ND} circuit as a special case of a cQMA circuit which has zero-error and whose CHOICE register is also classical, we first have in the classical setting:
%
%%\begin{definition}[\cite{U99}]
%%    For C a co-ND circuit, let $C^1:=C$, and recursively define $C^{l}$ as $C^{l-1}$ with $N$ independent copies of $C$ at each of its $n$ INPUT bits.
%%\end{definition}
%\begin{lemma}[\cite{U99}]\label{l:umansamp}
%        For C a co-ND circuit, let $C^1:=C$, and recursively define $C^{t}$ as $C^{t-1}$ with $n$ independent copies of $C$ at each of its $n$ INPUT bits. Then, if $C$ accepts exactly a monotone set, one has:
%    \begin{enumerate}
%        \item $\abs{C^t}\leq n^t \abs{C}$, where $\abs{C}$ denote the size of $C$,
%        \item $C$ accepts an input of Hamming weight $k$ if and only if $C^t$ accepts an input of Hamming weight $k^t$,
%        \item $C^t$ accepts exactly a monotone set.
%    \end{enumerate}
%\end{lemma}
%
%As the proof of Lemma~\ref{l:umansamp} does not  is not difficult to see that the same properties hold for $C^l$ f

\section{Hardness of approximation for QCMA}\label{scn:QCMA}

%~\cite{AN02,JW06,A06,AK07,Beigi08,ABBS08,WY08,JKNN11}

We now briefly remark that the approach of Theorems~\ref{thm:qmmwwHard} and~\ref{thm:qmwampgap} can be adapted to show hardness of approximation for {QCMA}. Our result is a straightforward extension of Umans' classical result~\cite{U99} showing NP-hardness of approximation for the problem MONOTONE MINIMUM SATISFYING ASSIGNMENT.

Specifically, define the problem QUANTUM MONOTONE MINIMUM SATISFYING ASSIGNMENT (QMSA) analogously to QMW, except with the definition of a cQMA circuit $V$ modified to drop the second (quantum) proof, i.e.\ $V$ now only takes one input register comprised of $n$ classical bits. (For example, Definition~\ref{def:cQMA} is modified to say that $V$ \emph{accepts}  $x\in\set{0,1}^n$ in INPUT if measuring $\ket{a}$ in the computational basis yields $1$ with probability at least $2/3$.) Then, it is straightforward to re-run the proofs of Theorems~\ref{thm:qmmwwHard} and~\ref{thm:qmwampgap} without the existence of a second quantum proof register, leading to Theorem~\ref{thm:QMSAhard}.

\section{A canonical $\cqs$-complete problem}\label{scn:anotherproblem}

In this section, we first show that a quantum generalization of the canonical $\st$-complete problem $\ssat{2}$, denoted $\cqslh{k}$, is $\cqs$-complete. We then observe that a similar proof yields $\cqs$-hardness of approximation for an appropriately defined variant of $\cqslh{k}$.

\begin{definition}[$\cqslh{k}$]\label{def:cqslh}
    Given a $3$-local Hamiltonian $H$ acting on $N= n+m$ qubits, and $a,b\in\reals$ such that $a\leq b$ for $b-a \geq 1$, output:
\begin{itemize}
    \item YES if $\exists$ $x\in\set{0,1}^n$ such that $\forall$ $\ket{y}\in\B^{\otimes m}$, $\trace(H \ketbra{x}{x}\otimes\ketbra{y}{y})\geq b$.
    \item NO if $\forall$ $x\in\set{0,1}^n$, $\exists$ $\ket{y}\in\B^{\otimes m}$ such that $\trace(H \ketbra{x}{x}\otimes\ketbra{y}{y})\leq a$.
\end{itemize}
\end{definition}

\begin{theorem}\label{thm:anothercomplete}
$\cqslh{3}$ is $\cqs$-complete.
\end{theorem}
\begin{proof}
That $\cqslh{3}\in\cqs$ follows from Kitaev's verifier for placing $k$-local Hamiltonian in $\class{QMA}$ \cite{KSV02}. As for $\cqs$-hardness, for simplicity we show the result for the case of $\cqslh{5}$ defined with $5$-local Hamiltonians. The proof for the $3$-local case follows identically by instead substituting the $3$-local circuit-to-Hamiltonian construction of Reference~\cite{KR03} below (this is possible because our proof does not exploit the structure of the clock register or $\hstab$).

To see that any instance $\Pi$ of a problem in $\cqs$ reduces to an instance of $\cqslh{5}$, let $V''$ denote the $\cqs$ verification circuit for $\Pi$. Recall that $V''$ acts on a classical proof register $A$, a quantum proof register $B$, and an ancilla register $C$. We begin by modifying $V''$ to obtain a new equivalent circuit $V'$ which first copies the (classical) contents of $A$ to its ancilla register $C$, and henceforth acts on this copied proof in $C$ throughout the verification. This ensures the contents of $A$ remain unchanged during the verification. Next, we modify $V'$ to obtain $V$ by concatenating to its end a Pauli $X$ on the output qubit; this swaps the cases in which $V'$ accepts and rejects, respectively. This is necessary because if $\ket{c}\otimes\ket{q}$ is accepted by $V'$, then $\histstatecq{c}{q}$ obtains low energy against Kitaev's Hamiltonian, whereas in our YES instance here we require high energy. Finally, we apply Kitaev's circuit-to-Hamiltonian construction from Section~\ref{scn:def} on $V$ to obtain a $5$-local Hamiltonian $H$.

Suppose now that we have a YES instance of $\Pi$, i.e.\ there exists bit string $\ket{c}$ such that for all quantum states $\ket{q}$, the circuit $V''$ accepts proof $\ket{c}\otimes\ket{q}$ with probability at least $1-\epsilon$ (and hence $V$ rejects $\ket{c}\otimes\ket{q}$ with probability at least $1-\epsilon$). We show that for all $\ket{\psi}_{B,C,D}$, the state $\ket{c}_A\otimes\ket{\psi}_{B,C,D}$ attains expectation value at least $b$ against $H$, for $b$ from Lemma~\ref{l:kitaev}. In other words, letting $\Pi_c:=(\ketbra{c}{c}_A\otimes I_{B,C,D})$, we claim
\begin{eqnarray*}
    \bra{c}\otimes\bra{\psi}H\ket{c}\otimes\ket{\psi}
    =\bra{c}\otimes\bra{\psi}\Pi_cH\Pi_c\ket{c}\otimes\ket{\psi}
    \geq b.
\end{eqnarray*}
To see this, observe first that
\begin{eqnarray*}
    \Pi_c \hin \Pi_c &=& \ketbra{c}{c}_A\otimes I_B\otimes\left(\sum_{i=1}^{p} \ketbra{1}{1}_{C_i}\right)\otimes \ketbra{0}{0}_D=:\ketbra{c}{c}_A\otimes\hin',\\
    \Pi_c \hout \Pi_c &=& \ketbra{c}{c}_A\otimes\ketbra{0}{0}_{B_1}\otimes
     I_C\otimes \ketbra{L}{L}_D=:\ketbra{c}{c}_A\otimes\hout',\\
    \Pi_c\hstab\Pi_c&=&\ketbra{c}{c}_A\otimes I_{B,C}\otimes\sum_{i=1}^{L-1}\ketbra{01}{01}_{D_i,D_{i+1}}=:\ketbra{c}{c}_A\otimes\hstab'.
\end{eqnarray*}
As for $\Pi_c\hprop\Pi_c$, recall that the verification circuit $V$ consists of two phases: The \emph{copy} phase, consisting of $n$ CNOT gates copying the contents of $A$ to $C$, and the \emph{verification} phase, consisting of the remaining $L-n$ gates of $V$. In other words, we can write
\[
    \hprop=\sum_{j=1}^{n}H_j + \sum_{j=n+1}^L H_j,
\]
where $\sum_{j=1}^{n}H_j$ corresponds to the copy phase and $\sum_{j=n+1}^L H_j$ to the verification phase. Since during the verification phase, $V$ does not act on $A$, we have for all $j> n$ that
\begin{eqnarray*}
    \Pi_c H_j \Pi_c &=& \ketbra{c}{c}_A\otimes\left[-\frac{1}{2}(V_j)_{B,C}\otimes\ketbra{ j}{{j-1}}_D -\frac{1}{2}(V_j^\dagger)_{B,C}\otimes\ketbra{{j-1}}{{j}}_D +\right.\\
    &&\left.\hspace{21mm}\frac{1}{2}I_{B,C}\otimes(\ketbra{{j}}{{j}}+\ketbra{{j-1}}{{j-1}})_D\right]\\
    &=:&\ketbra{c}{c}_A\otimes H_{j}'.
\end{eqnarray*}
As for the copy phase, let $\ketbra{i}{i}\otimes I$ act on $\B\otimes\B$ for $i\in\set{0,1}$. Then, observe that
\[
    (\ketbra{i}{i}\otimes I) \operatorname{CNOT}(\ketbra{i}{i}\otimes I)=\ketbra{i}{i}\otimes X^i,
\]
where $X$ is the Pauli $X$ operator and $X^i=X$ if $i=1$ and $X^i=I$ otherwise. This implies that for any step $j\leq n$, i.e.\ where $V$ applies a CNOT gate with qubit $A_j$ as control and $C_j$ as target, and letting $c_j$ denote the $j$th bit of $c$, we have
\begin{eqnarray*}
    \Pi_c H_{j} \Pi_c &=& \ketbra{c}{c}_A\otimes\left[-\frac{1}{2}X_{C_j}^{c_j}\otimes\ketbra{{j}}{{j-1}}_D -\frac{1}{2}X_{C_j}^{c_j}\otimes\ketbra{{j-1}}{{j}}_D +\right.\\&&\left.\hspace{21mm}\frac{1}{2}I\otimes(\ketbra{{j}}{{j}}+\ketbra{{j-1}}{{j-1}})_D\right]\\
    &=:&\ketbra{c}{c}_A\otimes H_{j}'(c),
\end{eqnarray*}
where the notation $H_{j}'(c)$ means $H_{j}'$ is a function of $c$. Letting $\hprop'(c):=\sum_{i=1}^{n}H'_j+\sum_{i=n+1}^{L}H'_j(c)$ and $H(c):=\hin'+\hout'+\hstab'+\hprop'(c)$, we thus have that $\bra{c}\otimes\bra{\psi} H\ket{c}\otimes\ket{\psi}=\bra{\psi}H(c)\ket{\psi}$. It thus suffices to show that
$
    H(c)\succeq b I.
$

To see this, we return to the circuit $V$, and think of $V$ not as accepting classical input $c$, but rather as corresponding to a set of circuits $\set{V_c}$, where each $V_c$ is just $V$ with $c$ hard-wired into register $A$. In particular, at time step $0\leq j\leq n$, $V_c$ applies $X^{c_j}$ to qubit $C_j$. Taking this interpretation, we observe that for any string $c$, plugging $V_c$ into Kitaev's circuit-to-Hamiltonian yields precisely the Hamiltonian $H(c)$. Thus, since by assumption for our particular choice of $c$, $V''$ accepts $\ket{c}\otimes\ket{q}$ for all quantum proofs $\ket{q}$, it follows that $V_c$ rejects all $\ket{q}$ with probability at least $1-\epsilon$. Hence, Lemma~\ref{l:kitaev} implies $H(c)\succeq b I$, as desired.

The converse direction proceeds similarly. Namely, suppose we have a NO instance of $\Pi$, i.e.\ for all bit strings $\ket{c}$, there exists a quantum proof $\ket{q}$ such that $V''$ rejects $\ket{c}\otimes\ket{q}$ with probability at least $1-\epsilon$. Then, we wish to show that for all $c$, there exists a $\ket{\psi_{c}}$ such that $\bra{\psi_{c}}H(c)\ket{\psi_{c}}\leq a$, for $a$ from Lemma~\ref{l:kitaev}. To show this, fix an arbitrary $c$. Since there exists a $\ket{q}$ such that $V_c$ accepts $\ket{q}$ with probability at least $1-\epsilon$, it follows by Lemma~\ref{l:kitaev} that the history state $\ket{\psi_{c}}:=\sum_{i=0}^LV_i\cdots V_1 \ket{q}_B\otimes \ket{0\cdots 0}_C \otimes \ket{ i}_D$ indeed satisfies
\[
    \bra{\psi_{c}}H(c)\ket{\psi_{c}}=\bra{\psi_{c}}\hin'+\hout'+\hstab'+\hprop'(c)\ket{\psi_{c}}\leq 0+a+0+0= a.
\]
\end{proof}

Note that the proof of Theorem~\ref{thm:anothercomplete} has a special property --- the string $c$ fed into the classical proof register of $V''$ is mapped directly in our reduction to the candidate ground states $\ket{c}\ket{q}$ for $3$-local Hamiltonian $H$. This means, for example, that if there exists a $c$ with the desired properties for a YES instance of our starting $\cqs$ problem, then setting $x=c$ in Definition~\ref{def:cqslh} yields that the $\cqslh{3}$ instance we have mapped to is also a YES instance. It follows that applying the reduction in the proof of Theorem~\ref{thm:anothercomplete} to our hard-to-approximate instance of QMMW from Theorem~\ref{thm:qmwampgap} directly yields Theorem~\ref{thm:anothercomplete2}, i.e.\ that the following variant of $\cqslh{3}$, which we call $\cqslhmin$, is $\cqs$-hard to approximate. Intuitively, $\cqslhmin$ is defined analogously to $\cqslh{3}$, except that here the goal is to minimize the Hamming weight of $x$.

\begin{definition}[$\cqslhmin$]\label{def:cqslh}
    Given a $3$-local Hamiltonian $H$ acting on $N= n+m$ qubits, $a,b\in\reals$ such that $a\leq b$ for $b-a \geq 1$, and integer thresholds $0\leq g\leq g'$, output:
\begin{itemize}
    \item YES if there exists $x\in\set{0,1}^n$ of Hamming weight at most $g$ such that for all $\ket{y}\in\B^{\otimes m}$, $\trace(H \ketbra{x}{x}\otimes\ketbra{y}{y})\geq b$.
    \item NO if for all $x\in\set{0,1}^n$ of Hamming weight at most $g'$, there exists $\ket{y}\in\B^{\otimes m}$ such that $\trace(H \ketbra{x}{x}\otimes\ketbra{y}{y})\leq a$.
\end{itemize}
\end{definition}

\section{Acknowledgements}
We thank Richard Cleve, Ashwin Nayak, Sarvagya Upadhyay, and John Watrous for interesting discussions, and especially Oded Regev for many helpful insights, including the suggestion to think about a quantum version of PH. SG acknowledges support from the NSERC CGS, NSERC CGS-MSFSS, and EU-Canada Transatlantic Exchange Partnership programs, and the David R. Cheriton School of Computer Science at the University of Waterloo. JK is supported by an Individual Research Grant of the Israeli Science Foundation, by European Research Council (ERC) Starting Grant
QUCO and by the Wolfson Family Charitable Trust.

\bibliography{Sevag_Gharibian_Phd_Thesis_Bibliography_Abbrv,approx}

\end{document}